\documentclass[a4paper,12pt]{article}

\usepackage{graphicx}
\usepackage[english]{babel}
\usepackage{amsfonts}
\usepackage{amsmath}
\usepackage{amssymb}
\usepackage{amsthm}
\usepackage[pdftex,colorlinks=true,linkcolor=blue,citecolor=blue,urlcolor=red,unicode=true,hyperfootnotes=true,bookmarksnumbered]{hyperref}
\usepackage[margin=1truein]{geometry}
\usepackage{enumitem}
\usepackage{dsfont}
\usepackage{multirow}

\theoremstyle{plain}
\newtheorem{theorem}{Theorem}
\newtheorem{lemma}[theorem]{Lemma}
\newtheorem{corollary}[theorem]{Corollary}
\newtheorem{proposition}[theorem]{Proposition}
\newtheorem{claim}[theorem]{Claim}

\theoremstyle{definition}
\newtheorem{remark}[theorem]{Remark}
\newtheorem{definition}[theorem]{Definition}

\begin{document}

\title{First order complexity of finite random structures}
\author{Danila Demin\footnote{Department of Discrete Mathematics, Moscow Institute of Physics and Technology, Dolgoprudny 141700, Russia; Department of Mathematics and Computer Science, St Petersburg University, Saint Petersburg 199034, Russia}, \quad Maksim Zhukovskii\footnote{Department of Computer Science, University of Sheffield, Sheffield S1 4DP, UK}}
\date{}
\maketitle

\begin{abstract}

For a sequence of random structures with $n$-element domains over a relational signature, we define its first order (FO) complexity as a certain subset in the Banach space $\ell^{\infty}/c_0$.  The well-known FO zero-one law and FO convergence law correspond to FO complexities equal to $\{0,1\}$ and a subset of $\mathbb{R}$, respectively. We present a hierarchy of FO complexity classes, introduce a stochastic FO reduction that allows to transfer complexity results between different random structures, and deduce using this tool several new logical limit laws for binomial random structures.   Finally, we introduce a conditional distribution on graphs, subject to a FO sentence $\varphi$, that generalises certain well-known random graph models, show instances of this distribution for every complexity class, and prove that the set of all $\varphi$ validating 0--1 law is not recursively enumerable.

\end{abstract}

\section{Introduction}
\label{sc:intro}

Let $\sigma$ be a relational signature. A {\it random $n$-structure} over the signature $\sigma$ is a random element of the set of all structures with domain $[n]:=\{1,\ldots,n\}$ and over $\sigma$. Commonly it is supported by a set of $\sigma$-axioms $\mathcal{F}$, i.e. it is assumed that with probability $1$ the random $n$-structure models $\mathcal{F}$. Let  $\sigma=\{P_1,\ldots,P_s\}$, where $P_i$ has arity $d_i$ (we adopt a usual convention that $\sigma$ includes the equality relation whose axioms are part of the logic). We denote the random $n$-structure uniformly distributed over all $n$-structures over $\sigma$ by $D^{\sigma}(n\mid\mathcal{F})$ or $D^{(d_1,\ldots,d_s)}(n\mid\mathcal{F})$.

For example, the well known binomial random graph $G(n,p)$ is a random $n$-structure over $\sigma=\{=,\sim\}$, where $=$ represents the coincidence of vertices, and $\sim$ represents the graph adjacency relation. The axioms in 
$$
\mathcal{F}=
\{\forall x\forall y\,\,(x\sim y)\Leftrightarrow(y\sim x),\,\,
\forall x\,\, \neg(x\sim x)\}
$$
allow to define the distribution on the set of undirected graphs without loops. 
 The distribution of $G(n,p)$ is defined as follows: for a fixed $n$-structure $G$ that models $\mathcal{F}$ (that is, $G$ is an undirected graph without loops), its probability equals $p^{|E(G)|}(1-p)^{{n\choose 2}-|E(G)|}$, where $E(G)$ is the set of unordered pairs $\{x,y\}$ such that $x\sim y$, (that is, the set of edges of $G$). In other words, edges appear independently with probability $p$. In particular, $G(n,1/2)=D^{\sigma}(n\mid\mathcal{F})=D^{(2)}(n\mid\mathcal{F})$.

Another example is a binomial random $n$-structure 
$$
D^{\sigma}(n,p_1,\ldots,p_s)=D^{(d_1,\ldots,d_s)}(n,p_1,\ldots,p_s)
$$ 
over  $\sigma=\{=,P_1,\ldots,P_s\}$ that does not have any predefined axioms: each interpretation appears with probability $p_1^{N_1}\ldots p_s^{N_s}(1-p_1)^{n^{d_1}-N_1}\ldots(1-p_s)^{n^{d_s}-N_s}$, where $N_i$ is the number of $d_i$-tuples satisfying $P_i$. In other words, $D^{(d_1,\ldots,d_s)}(n,p_1, \dots, p_s)$ is the binomial random hypergraph with hyperedges (ordered multisets) of cardinalities $d_1,\ldots,d_s$, where each hyperedge of ``type $P_i$'' appears with probability $p_i$ independently of all the others. In particular, for a signature consisting of one predicate $P$ of arity $d$, we get a binomial random directed $d$-uniform hypergraph
 $D^{(d)}(n, p)$. 

The following fundamental result in finite model theory, known as the first order (FO) 0--1 law, was proven by Glebskii, Kogan, Liogonkii, Talanov~\cite{GKLT} and independently by Fagin~\cite{Fa}: for a fixed finite relational signature $\sigma$, any FO sentence $\varphi$ over $\sigma$ is either true on asymptotically almost all $n$-structures or false. In other words, $\mathrm{Pr}(D^{(d_1,\ldots,d_s)}(n, 1/2, \ldots, 1/2)\models\varphi)$ converges either to $0$ or to $1$ as $n\to\infty$. Same arguments can be used to show that $D^{(d_1,\ldots,d_s)}(n,p_1,\ldots,p_s)$ obeys the FO 0--1 law for all constant $p_1,\ldots,p_s\in(0,1)$. Fagin \cite{Fa} also proved that the same is true for graphs, i.e. $G(n, p)$ obeys the FO 0--1 law. 

Since then, the validity of the FO 0--1 law was studied for many other random structures: binomial random graphs with $p=p(n)$~\cite{Ly92,SS,Sp}, random regular graphs~\cite{HK}, random trees~\cite{McColm}, recursive random graphs~\cite{Kleinbergs,MalyshZhuk}, random geometric graphs~\cite{McColm_circle}, random graphs embeddable on a surface~\cite{HMNT}, random structures $D^{\sigma}(n\mid\mathcal{F})$ where $\mathcal{F}$ is a set of parametric axioms~\cite{EF_finite,Oberschelp},  and many others~\cite{SpBook,SZ,Winkler}. However, no methods have been developed to transfer logical limit laws between different random structures. In particular, Fagin applied the same proof as for $D^{(d_1,\ldots,d_s)}(n,p_1,\ldots,p_s)$ to $G(n,p)$ instead of transferring the law. One of the main contributions of our paper is such a transferring tool. In particular, it allows to transfer the FO 0--1 law from $D^{(2)}(n,p)$ to the binomial random graph as well as to the binomial random {\it directed} graph without loops.

Before moving on to a more detailed discussion of our results, let us notice an important application of logical limit laws to the study of hierarchy of logics and time complexity. Indeed, if a random structure satisfies the 0--1 (or convergence) law for a language $\mathcal{L}_1$ while it does not satisfy the law for $\mathcal{L}_2\supseteq\mathcal{L}_1$, then the inclusion $\mathcal{L}_2\supset\mathcal{L}_1$ is strict. This simple observation was used in~\cite{VZ1,VZ2} to show the lower bound on the minimum quantifier depth and the minimum number of variables of a FO sentence describing the property of containing an induced subgraph isomorphic to a fixed given graph $F$. The latter fact implies certain limitations of the respective method of solving the induced subgraph isomorphism decision problem: the validity of a FO sentence with $k$ variables on an $n$-vertex graph is decidable in time $O(n^k)$, see details in~\cite{VZ1,VZ2}. \\


For $G(n,p)$ with $p=p(n)=o(1)$, a breakthrough achievement was obtained by Shelah and Spencer. First of all note that the above mentioned classical FO 0--1 law for $G(n,p)$ can be generalised to all $p=p(n)$ such that $\min\{p,1-p\}n^{\alpha}\to\infty$ as $n\to\infty$ for all $\alpha>0$ (see~\cite{Sp}). So it is natural to further consider $p=n^{-\alpha}$, $\alpha>0$. Shelah and Spencer \cite{SS} proved the following:
\begin{itemize}
\item If $\alpha\in(0,1]$ is rational, then $G(n,n^{-\alpha})$ does not obey the FO 0--1 law.
\item If $\alpha\in(0,1]$ is irrational, then $G(n,n^{-\alpha})$ obeys the FO 0--1 law.
\item If $\alpha=1+\frac{1}{k}$ for some $k\in\mathbb{N}$, then $G(n,n^{-\alpha})$ does not obey the FO 0--1 law.
\item If either $p=o(n^{-2})$, or for some $k\in\mathbb{N}$, $n^{-1-1/k}\ll p\ll n^{-1-1/(k+1)}$, then $G(n,p)$ obeys the FO 0--1 law.
\end{itemize}
Our tool can be also used to show that the FO 0--1 law does not hold: for example, for $\alpha>1$, the failure of the FO 0--1 law for $G(n,n^{-\alpha})$ transfers to the failure of the FO 0--1 law for $D^{(d+1)}(n,(d+\alpha-2)\frac{\ln n}{n})$ for every integer $d\geq 2$.

\begin{remark}
The random graph $G(n,n^{-\alpha})$ for $\alpha>1$ is very sparse: with asymptotical probability 1 ({\it with high probability} or, for brevity, {\it whp} in what follows) it is a forest consisting only of tree components of bounded sizes~(see, e.g.,~\cite{Janson}). In contrast, $D^{(d+1)}(n,\Theta(\ln n/n))$ is weakly connected whp~\cite{Po}. Since the validity of the FO 0--1 law for $G(n,n^{-\alpha})$ in this case follows immediately from standard properties of the logical equivalence (see, e.g., \cite[Section 4.2]{OZ}), so it is not surprising that our tool does not transfer the validity of FO 0--1 laws from  $G(n,n^{-\alpha})$ to $D^{(d+1)}(n,\Theta(\ln n/n))$.
\end{remark}

A special interest in combinatorial and probabilistic community was chained to properties of specifically $G(n,c/n)$ for constant $c$ because of so-called phase transition phenomenon~\cite{ER-evol} --- in particular, the emergence of a giant component. Lynch \cite{Ly92} proved that $G(n,c/n)$ obeys the {\it FO convergence law} (i.e. for every FO sentence the probability of its truth on $G(n,c/n)$ converges to some number in $[0, 1]$ as $n\to\infty$), however the FO 0--1 law does not hold. Note that Shelah and Spencer \cite{SS} disproved even the FO convergence law for $G(n,n^{-\alpha})$ when $\alpha\in(0,1)$ is rational. Larrauri, M\"{u}ller, and Noy~\cite{LMN} described possible limits of truth probabilities of FO sentences on $G(n,c/n)$. They proved that the closure of the set of limits in $[0, 1]$ consists of a finite number of segments and determined the minimum constant $c_0$ for which this segment is unique and coincides with $[0, 1]$. Also, they generalised this result to $d$-uniform unoriented hypergraphs. 

We transfer the upper bound for this threshold $c_0$ from $d$-uniform unoriented hypergraphs to $d$-uniform $H$-hypergraphs, where $H$ is an arbitrary subgroup of the symmetry group $S_d$ (i.e. a hyperedge is an orbit under the action of $H$ on $d$-tuples $(x_1, \dots, x_d)$ of distinct elements from a fixed $d$-element set). For example, unoriented hypergraphs correspond to the case $H=S_d$. Regarding the lower bound, which coincides with the upper bound, it can be transferred from oriented hypergraphs (i.e. $H=\{\mathrm{id}\}$) to $H$-hypergraphs for any $H$. Luckily, the same proof method as from \cite{LMN} can be applied to prove the tight lower bound for oriented hypergraphs as well. Thus, the problem of finding the threshold $c_0$ for any way of assigning an orientation to hyperedges can be reduced to two extreme cases $H=S_d$ and $H=\{\mathrm{id}\}$. 

Another application of our tool is transferring the FO 0--1 laws from binomial random structures to random structures $D^{\sigma}(n\mid\mathcal{F})$ where $\mathcal{F}$ is a set of parametric axioms with an additional requirement that an empty structure satisfies the axioms. For such structures, the 0-1 law was proved by Oberschelp~\cite{Oberschelp} applying, again, a similar strategy as for the usual unconditional binomial model.\\

Our tool is a certain preorder on sequences of random $n$-struc\-tu\-res, $n\in\mathbb{N}$, which we call {\it the stochastic FO reduction}. This preorder expresses a hierarchy of sequences of random structures: for higher sequences in this preorder, their FO limit behaviour is more complex. In particular, for each pair of sequences $A$ and $B$ such that $A$ is reducible to $B$ and $B$ obeys the FO 0--1 law (convergence law), $A$ obeys the FO 0--1 law (convergence law) as well. Besides the  above mentioned applications of the stochastic FO reduction to transferring logical laws, we prove that the stochastic FO reduction preorder defines {\it stable} equivalence classes of $G(n,p)$: a little change of $p$ does not change the equivalence class. For example, this observation allows to transfer the absence of the FO 0--1 law from $D^{(1)}(n, \frac{c}{n})$ to $D^{(d+1)}(n, \frac{d \ln n}{n})$. The stochastic FO reduction is defined in Section~\ref{SC:SFOR}.

The notion of stochastic FO reduction as well as different asymptotic logical behaviour of different random structures naturally lead to a concept of {\it FO complexity} of a sequence of random structures $D_n$. The FO complexity has to generically describe the limit behaviour of $\mathrm{Pr}(D_n\models\varphi)$ over all FO sentences $\varphi$ in such a way that if $A$ stochastically reduces to $B$, then $B$ is at least as complex~as~$A$. 

The FO complexity is defined in Section~\ref{FOC}. Formally, we define it as $\mathcal{D}/c_0$, where $\mathcal{D}\subset\ell^{\infty}$ is exactly the set of all sequences $(\mathrm{Pr}(D_n\models\varphi))_{n\in\mathbb{N}}$, and $c_0$ is the set of sequences converging to $0$. In particular, if a sequence of random structures obeys the FO 0--1 law, then its FO complexity is $\{0,1\}$ (for brevity, we identify a constant sequence with its element), and, if it obeys the FO convergence law, then its FO complexity is exactly the set of all limits of probabilities of FO sentences. Note that the above mentioned result of Larrauri, M\"{u}ller, and Noy \cite{LMN} guarantees that the closure of the FO complexity of $G(n,c/n)$ consists of finitely many segments. However, it is not hard to see that there are binomial random structures such that their FO complexities are even not totally bounded. In particular, this is the case for the binomial random graph $G(n,n^{-\alpha})$ for rational $\alpha<1$ as we show in Section~\ref{binFOC}. It is also possible to define $p(n)$ in a way such that the FO complexity of $G(n,p(n))$ spans an infinite-dimensional subspace as well, but the complexity is totally bounded. 

Finally, we consider the random graph 
$$
G(n\mid\varphi):=D^2(n\mid\varphi\wedge\text{symmetric}\wedge\text{anti-reflexive})
$$ 
chosen uniformly at random from the set of all undirected graphs without loops that satisfy a given FO axiom $\varphi$. This model generalises the well-studied binomial random graph $G(n,1/2)$, random regular graphs~\cite{HK,Wo}, random union of disjoint cliques~\cite{GLMNP}, and random permutations~\cite{Ko}. For every complexity class, we show the existence of the respective axiom $\varphi$: the FO complexity of $G(n\mid\varphi)$ may be trivial (i.e. the FO 0--1 law holds), may span a 1-dimensional subspace of $\ell^{\infty}/c_0$ (the FO convergence law holds), may span a $k$-dimensional subspace for every positive integer $k$, may be totally bounded but span an infinite-dimensional subspace, and may not be totally bounded. These examples are given in Sections~\ref{axFOC},~\ref{sc:proofiii}.

Note that the FO 0--1 law holds for $G(n\mid\varphi)$ whenever $\varphi$ is true on $G(n,1/2)$ with probability bounded away from 0. The latter may only happen when probability $\mathrm{Pr}(G(n,1/2)\models\varphi)$ approaches 1. Since the FO almost sure theory of $G(n,1/2)$ can be axiomatised by {\it extension axioms} (see,~e.g.,~\cite{SpBook}) and probability that $G(n,1/2)$ does not satisfy the $k$-th extension axiom is at most $n^k(1-2^{-k})^{n-k}$, it is easy to see that the problem of determining whether $\mathrm{Pr}(G(n,1/2)\models\varphi)$ approaches 1 is decidable. Nevertheless, we show in Section~\ref{sc:rec_enum} that the problem of determining, for an arbitrary input $\varphi$, whether $G(n\mid\varphi)$ obeys the FO 0--1 law  is even not recursively enumerable. It worths noting that related questions were studied in the literature. In~\cite{Liogonkii_conditional}, Liogon'kii proved that, given FO sentences $\varphi$ and $\psi$ over $\sigma$, it is undecidable whether $\lim_{n\to\infty}\mathrm{Pr}(D^{\sigma}(n\mid\varphi)\models\psi) $ exists. Several extensions of this result were proved in \cite{GHK_conditional}: in particular, computing or approximating the limit, given the fact that it exists, is still undecidable.\\ 

A preliminary short version of this paper was presented at the 39th Annual ACM/IEEE Symposium on Logic in Computer Science, LICS 2024~\cite{LICS_version}.

\section{FO complexity}
\label{FOC}

In Section~\ref{defFOC}, we define {\it the FO complexity} of a sequence of random structures that generalises the FO 0--1 law and the FO convergence law. After that, we show a strict hierarchy of FO complexity classes of binomial random graphs (Section~\ref{binFOC}) and conditional random graphs subject to FO sentences (Section~\ref{axFOC}). The most essential part of the main theorem in Section~\ref{axFOC} that asserts the existence of a FO sentence $\varphi$ such that the complexity of $G(n\mid\varphi)$ is totally bounded and infinite dimensional is proved in Section~\ref{sc:proofiii}. This proof develops a method of constructing FO sentences that define isomorphism classes of certain asymmetric graphs and may be of its own interest.

\subsection{Definition of complexity and hierarchy of random structures}\label{defFOC}

Let us recall necessary definitions. The Banach space $\ell^{\infty}$ is the linear space of all bounded sequences of real numbers $x=(x_1, x_2, \dots)$ with the norm $\|x\|=\sup _{n \in \mathbb{N}}|x_n|$. The Banach space $c_0$ is a subspace of $\ell^{\infty}$, which consists of all vectors $x$ such that $\lim_{n \to \infty} |x_n| = 0$. The Banach space $c$ is a subspace of $\ell^{\infty}$, which consists of all vectors $x$ such that $\lim_{n \to \infty} |x_n|$ exists. Norms on $c_0$ and $c$ are induced from $\ell^{\infty}$. The Banach space $\ell^{\infty}/c_0$ is a quotient space, which consists of classes $x+c_0$ with the norm $\limsup_{n \to \infty}|x_n|$. The canonical projection $\pi: \ell^{\infty} \to \ell^{\infty}/c_0$ maps each $x \in \ell^{\infty}$ to $x+c_0 \in \ell^{\infty}/c_0$. We denote by $X/c_0$ the image of a subset $X \subset \ell^{\infty}$ under~$\pi$.

\begin{definition}
Let $D_n$, $n\in\mathbb{N}$, be a sequence of random $n$-struc\-tu\-res. The {\it FO complexity $\mathrm{FOC}(D_n)$} of $D_n$ is the set $\mathcal{D}/c_0$, where $\mathcal{D}\subset\ell^{\infty}$ is the set of all sequences $(\mathrm{Pr}(D_n\models\varphi))_{n\in\mathbb{N}}$ over all FO sentences $\varphi$.
\end{definition}

Due to the next straightforward proposition (we give a proof for the sake of completeness), the validity of the FO 0--1 law or the validity of the FO convergence law are the cases of the smallest FO complexities. For brevity, for any real $\lambda$, we denote vectors $(\lambda, \lambda, \dots) \in \ell^{\infty}$ and $\pi(\lambda, \lambda, \dots) \in \ell^{\infty}/c_0$ by $\lambda$. Note that $c/c_0\cong\mathbb{R}$, and there exists an isomorphism that maps $\lambda\in c/c_0$ to $\lambda\in\mathbb{R}$. Therefore, we identify any subset of $c/c_0$ with the set of respective numbers in $\mathbb{R}$.

\begin{claim}\label{lawsFOC}
Let $D_n$, $n\in\mathbb{N}$, be a sequence of random structures.
\begin{enumerate}
\item[(i)] $D_n$ satisfies the FO 0--1 law iff $\mathrm{FOC}(D_n)=\{0,1\}$.
\item[(ii)] $D_n$ satisfies the FO convergence law iff $\mathrm{FOC}(D_n) \subset \mathbb{R}$.
\item[(iii)] The set of limits of sequences $(\mathrm{Pr}(D_n\models\varphi))_{n\in\mathbb{N}}$ coincides with $\mathrm{FOC}(D_n) \cap \mathbb{R}$.
\end{enumerate}
\end{claim}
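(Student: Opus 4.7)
The plan is to extract from the definitions the single underlying fact on which all three parts rest: for any bounded sequence $x=(x_n)\in\ell^{\infty}$ and any real number $\lambda$, one has $\pi(x)=\lambda$ in $\ell^{\infty}/c_0$ (under the stated identification $\lambda\leftrightarrow\pi(\lambda,\lambda,\ldots)$) if and only if $x_n\to\lambda$. This is immediate from unwrapping $c_0$: $\pi(x)=\pi(\lambda,\lambda,\ldots)$ iff $x-(\lambda,\lambda,\ldots)\in c_0$ iff $x_n\to\lambda$. Via the identification $c/c_0\cong\mathbb{R}$, this says that the projection of $x$ lies in $\mathbb{R}$ exactly when $x$ converges, and the corresponding real is then $\lim_n x_n$.

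For part (i), I would first prove the forward implication: assuming the FO 0--1 law, every sequence $(\mathrm{Pr}(D_n\models\varphi))_{n\in\mathbb{N}}$ converges to $0$ or $1$, so by the key fact its class in $\ell^{\infty}/c_0$ equals $0$ or $1$. Both values are actually attained: taking $\varphi$ to be a tautology (e.g.\ $\forall x\,(x=x)$) yields the constant sequence $1$, and its negation yields the constant sequence $0$. Hence $\mathrm{FOC}(D_n)=\{0,1\}$. Conversely, if $\mathrm{FOC}(D_n)=\{0,1\}$, then for every $\varphi$ the projection of $(\mathrm{Pr}(D_n\models\varphi))$ is $0$ or $1$, and the key fact forces the sequence to converge to $0$ or~$1$, which is the FO 0--1 law.

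For part (ii), the same key fact applied with arbitrary $\lambda\in\mathbb{R}$ gives: the FO convergence law $\iff$ every probability sequence converges $\iff$ every such sequence's projection lies in $c/c_0\equiv\mathbb{R}$ $\iff$ $\mathrm{FOC}(D_n)\subset\mathbb{R}$. For part (iii), the same identification sends a convergent sequence's projection to its limit and a non-convergent sequence's projection outside $\mathbb{R}$, so the set of limits of probability sequences is exactly $\mathrm{FOC}(D_n)\cap\mathbb{R}$.

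I do not anticipate a substantive obstacle here: the claim is a direct unpacking of the definition of $\mathrm{FOC}(D_n)$, and the only subtlety is being careful and consistent with the identifications of constant sequences, of $c/c_0$, and of $\mathbb{R}$. The one small point worth flagging is that the forward direction in (i) needs both $0$ and $1$ to appear in $\mathrm{FOC}(D_n)$, which is why tautologies and contradictions must be invoked to get equality $\mathrm{FOC}(D_n)=\{0,1\}$ rather than mere inclusion.
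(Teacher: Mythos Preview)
Your proof is correct and follows essentially the same route as the paper's: both rest on the single observation that $\pi(x)=\lambda$ iff $x_n\to\lambda$, and then read off (iii), (ii), (i) as direct consequences. Your version is in fact slightly more careful than the paper's in part (i), since you explicitly justify via tautologies and contradictions that both $0$ and $1$ are attained, which the paper leaves implicit.
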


\begin{proof}
Let $x \in \ell^{\infty}$. From the definition, $\pi(x)-\lambda=0$ is equivalent to $\lim_{n \to \infty}|x_n - \lambda| = 0$, i.e. $\lim_{n \to \infty}x_n = \lambda$. Since there is an isomorphism between $c/c_0$ and $\mathbb{R}$ that maps any $\pi(x)$, $x\in c$, to $\lim_{n\to\infty} x_n$, we have (iii). Therefore, $\mathrm{FOC}(D_n) \subset c/c_0$ is equivalent to the fact that each sequence $(\mathrm{Pr}(D_n\models\varphi))_{n\in\mathbb{N}}$ converges, i.e. we have (ii). Similarly, we get (i).
\end{proof}

Examples of random structures with these smallest FO complexities are well known: (1) $\mathrm{FOC}(D_n)=\{0,1\}$ for the uniformly random structure $D_n=D^{\sigma}(n,1/2,\ldots,1/2)$, as well as for the uniformly chosen random graph $D_n=G(n,1/2)$ \cite{GKLT, Fa}; (2) $\mathrm{FOC}(D_n)=\{0,1\}$ for the binomial random graph $D_n = G(n, p)$ with $pn^\alpha\to\infty$ for all $\alpha > 0$ \cite{Sp} or $p=n^{-\alpha}$, where $\alpha$ is either irrational or bigger than $1$ and not equal $1+1/m$ for any positive integer $m$ \cite{SS}; (3) $\mathrm{FOC}(D_n) \subset \mathbb{R}$ for the binomial random graph $D_n = G(n, n^{-\alpha})$ for $\alpha = 1 + \frac{1}{m}$ \cite{SS}; (4) $\mathrm{FOC}(D_n) \subset \mathbb{R}$ for the binomial random graph $D_n = G(n, \frac{c}{n})$ for $c > 0$ \cite{Ly92}, and the closure of $\mathrm{FOC}(D_n)$ consists of finitely many segments \cite{LMN}. In the next section, we present random structures with $d$-dimensional, infinite dimensional but totally bounded, as well as not totally bounded FO complexities. 

\subsection{Complexity of $G(n,p)$}\label{binFOC}

We first show that it is possible to construct very sparse binomial random graphs (consisting of only isolated vertices and isolated edges whp) $G(n,p)$ with all the properties of FO complexities mentioned in Section~\ref{sc:intro}. However, the respective sequences $p=p(n)$ are quite artificial and far from being `smooth'. So, later in this section we show that the FO complexity of $G(n,n^{-\alpha})$ for rational $\alpha\in(0,1)$ is not totally bounded, and that all the properties are achievable by $G(n\mid\varphi)$ for appropriately chosen FO sentences $\varphi$. We shall use the following technical claim that follows from the fact that whp $G(n,p=o(n^{-3/2}))$ consists of isolated vertices and isolated edges, and the number of isolated edges can be approximated by Poisson random variables $\mathrm{Pois}(\lambda_n = p{n\choose 2})$ (see,~e.g.,~\cite{Janson,SpBook}).

\begin{claim}\label{CL:VECTORS}
Let $p = o(n^{-3/2})$, $\lambda_n = p(n) {n\choose 2}$, and, for every $k \in\mathbb{Z}_{\geq 0}$, $x_k = \pi \left((\frac{\lambda_n^k}{k!}e^{-\lambda_n})_{n \in \mathbb{N}}\right)$. Then $\mathrm{FOC}(G(n, p))$ is a union of the set $X$ of all finite sums of vectors $x_k$ and the set $1-X:=\{1-x,\,x\in X\}$.
\end{claim}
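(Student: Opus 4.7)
My plan combines three classical ingredients. First, when $p = o(n^{-3/2})$, a first-moment computation shows that whp $G(n,p)$ is a disjoint union of $M_n$ isolated edges and $n - 2M_n$ isolated vertices, i.e.\ isomorphic to $\mathcal{M}(M_n, n-2M_n)$, where $\mathcal{M}(m,v) := m K_2 \sqcup v K_1$. Second, the number of isolated edges admits a Poisson approximation, $d_{\mathrm{TV}}(M_n, \mathrm{Pois}(\lambda_n)) \to 0$; in particular, for each finite $S \subseteq \mathbb{Z}_{\geq 0}$,
$$
\Pr(M_n \in S) = \sum_{m\in S}\tfrac{\lambda_n^m}{m!} e^{-\lambda_n} + o(1).
$$
Third, a routine disjoint-union Ehrenfeucht--Fra\"{i}ss\'e argument applied to the graphs $\mathcal{M}(m,v)$ yields a function $c : \mathbb{N}\to \mathbb{N}$ such that the FO theory of $\mathcal{M}(m,v)$ up to quantifier depth $k$ is determined by $\min(m,c(k))$ and $\min(v,c(k))$. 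The first two ingredients are textbook (see, e.g., \cite{Janson,SpBook}).

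For any FO sentence $\varphi$ of depth $k$, since $n - 2M_n > c(k)$ whp (as $\lambda_n = o(\sqrt n)$ forces $M_n = o(n)$ whp), the set
$$
T_\varphi := \{m \in \mathbb{Z}_{\geq 0} : \mathcal{M}(m, c(k)+1)\models\varphi\}
$$
is, by the EF-game argument, either a subset of $\{0,1,\ldots,c(k)\}$ (``finite case'') or a set containing $\{c(k)+1, c(k)+2, \ldots\}$ (``cofinite case''). Combining the three ingredients,
$$
\Pr(G(n,p)\models\varphi) = \Pr(M_n \in T_\varphi) + o(1).
$$
In the finite case, this equals $\sum_{m\in T_\varphi}\tfrac{\lambda_n^m}{m!} e^{-\lambda_n} + o(1)$, and its projection under $\pi$ is $\sum_{m\in T_\varphi} x_m \in X$. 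In the cofinite case, writing $\Pr(M_n\in T_\varphi) = 1 - \Pr(M_n\in T_\varphi^c)$ and using $\sum_{m\geq 0}\tfrac{\lambda_n^m}{m!} e^{-\lambda_n} = 1$ yields a projection of the form $1 - \sum_{m\in T_\varphi^c}x_m \in 1-X$. Hence $\mathrm{FOC}(G(n,p)) \subseteq X \cup (1-X)$.

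For the reverse inclusion, let $\psi_m$ be the FO sentence ``there exist exactly $m$ unordered edges'' (combine ``at least $m$ edges'' with the negation of the same for $m+1$). For any finite $S \subset \mathbb{Z}_{\geq 0}$, the sentence $\varphi_S := \bigvee_{m\in S}\psi_m$ holds on $\mathcal{M}(m',v)$ iff $m' \in S$; hence $\pi(\Pr(G(n,p)\models\varphi_S)) = \sum_{m\in S}x_m$, realising every element of $X$, while $\neg\varphi_S$ realises every element of $1-X$. The only non-routine step in the whole plan is extracting the uniform threshold $c(k)$ from a disjoint-union EF strategy on $\mathcal{M}(m,v)$; this is a standard application of Feferman--Vaught / Hanf locality to the two connected component types $K_2$ and $K_1$, but it is the only ingredient specific to the structure of matchings rather than a generic random-graph estimate.
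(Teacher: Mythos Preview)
Your proof is correct and follows essentially the same approach as the paper: both reduce to the fact that whp $G(n,p)$ is a matching, that the edge count is approximately $\mathrm{Pois}(\lambda_n)$, and that on matchings any FO sentence is determined by a finite-or-cofinite set of edge counts, then realise the reverse inclusion via ``exactly $m$ edges'' sentences. Your use of an Ehrenfeucht--Fra\"{\i}ss\'{e} threshold $c(k)$ and the observation $n-2M_n>c(k)$ whp actually makes explicit a point the paper glosses over when it asserts that $\psi\wedge\varphi_{\geq 0}$ is ``tautologically equivalent'' to a disjunction of the $\varphi_i$ (which, taken literally, ignores the number of isolated vertices).
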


\begin{proof}
Let a FO sentence $\varphi_{\geq k}$ express the property of being a disjoint union of at least $k$ edges. The sentence $\varphi_k = \varphi_{\geq k} \wedge \neg \varphi_{\geq k+1}$ expresses the property of being a disjoint union of exactly $k$ edges. Each FO sentence $\psi \wedge \varphi_{\geq 0}$ is tautologically equivalent either to $\varphi_{i_1} \vee \cdots \vee \varphi_{i_s}$ or to $\varphi_{i_1} \vee \cdots \vee \varphi_{i_s} \vee \varphi_{\geq i_{s+1}}$ for some non-negative integers $i_1 < \cdots < i_s < i_{s+1}$. Since $p = o(n^{-\frac{3}{2}})$, the sequence of probabilities $(\mathrm{Pr}(G(n, p)\models\varphi_{\geq 0}))_{n\in\mathbb{N}}$ converges to one (see~\cite[Theorem 3.6.2]{SpBook}). Hence, $\mathrm{Pr}(G(n, p)\models\psi) - \mathrm{Pr}(G(n, p)\models \psi \wedge \varphi_{\geq 0})$ converges to zero. 

Let us show that for $\varphi_i$, the equality $\pi(\mathrm{Pr}(G(n, p)\models\varphi_{i})) = x_i$ holds. It is easier to compute the probability that the graph has exactly $i$ edges without any restriction on overlapping edges. Luckily, these two probabilities are asymptotically equal. Let a FO sentence $\hat{\varphi}_{i}$ express the property of containing exactly $i$ edges (not necessary disjoint). So, the FO sentence $\hat{\varphi}_{i} \wedge \varphi_{\geq 0}$ is tautologically equivalent to $\varphi_i$. Since $\varphi_{\geq 0}$ is true whp, we get
$$
\pi(\mathrm{Pr}(G(n, p)\models\varphi_{i}))  = \pi(\mathrm{Pr}(G(n, p)\models\hat{\varphi_{i}} \wedge \varphi_{\geq 0})) = \pi(\mathrm{Pr}(G(n, p)\models\hat{\varphi}_{i})).
$$
Now, it is sufficient to prove that $\pi(\mathrm{Pr}(G(n, p)\models\hat{\varphi}_{i})) = x_i$.
\begin{align*}
\mathrm{Pr}(G(n, p)\models\hat{\varphi}_{i})  & ={{n\choose 2}\choose i} p^i (1 - p)^{{n\choose 2}-i} \\
&  = (1+o(1)) \frac{1}{i!} {n\choose 2}^i p^i e^{{n\choose 2} \ln (1 - p)} (1 - p)^{-i} = \frac{\lambda_n^i}{i!} e^{-\lambda_n} + o(1)
\end{align*}
as needed.

Since $\varphi_i$ contradicts $\varphi_j$ for each $j \not = i$, $\pi(\mathrm{Pr}(G(n, p)\models\varphi_{i_1} \vee \cdots \vee \varphi_{i_s})) = x_{i_1} + \cdots + x_{i_s}$. Therefore, we get $X \subset \mathrm{FOC}(G(n, p(n)))$. For the sentence $\psi = \varphi_{i_1} \vee \cdots \vee \varphi_{i_s} \vee \varphi_{\geq i_{s+1}}$, we consider a sentence $\neg\psi$, for which $\pi(\mathrm{Pr}(G(n, p)\models\neg\psi)) = 1 - \pi(\mathrm{Pr}(G(n, p)\models\psi))$. Note that the sentence $\neg \psi \wedge \varphi_{\geq 0}$ is equivalent to $\bigvee_{j < i_{s+1}, j \not = i_1, \dots, i_s} \varphi_j$. So, $\pi(\mathrm{Pr}(G(n, p)\models\psi)) \in 1 - X$, and we have $\mathrm{FOC}(G(n, p)) = X \cup (1 - X)$, completing the proof.
\end{proof}


\begin{theorem}\label{thBinFOC}
For $p=p(n)$, we let $G_n\sim G(n,p)$.
\begin{enumerate}
\item[(i)] For each $d \geq 1$, there is a sequence $p(n) \in [0, 1]$ such that $\mathrm{FOC}(G_n)$ spans a $d$-dimensional subspace of $\ell^{\infty}/c_0$.
\item[(ii)] There is a sequence $p(n)\in [0, 1]$ such that $\mathrm{FOC}(G_n)$ is totally bounded but spans an infinite-dimensional subspace of $\ell^{\infty}/c_0$.
\item[(iii)] There is a sequence $p(n) \in [0, 1]$ such that $\mathrm{FOC}(G_n)$ is not totally bounded.
\end{enumerate}
\end{theorem}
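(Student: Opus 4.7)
The plan is to apply Claim~\ref{CL:VECTORS} in all three cases by choosing $p(n)=2\lambda_n/(n(n-1))$ with a bounded or slowly growing $\lambda_n=o(\sqrt{n})$, so that $p=o(n^{-3/2})$ and $\mathrm{FOC}(G_n)=X\cup(1-X)$, where $X$ is the set of finite sums of the Poisson vectors $x_k=\pi\bigl(\frac{\lambda_n^k}{k!}e^{-\lambda_n}\bigr)$. Throughout, I would partition $\mathbb{N}=\bigsqcup_{j\geq 1}N_j$ into infinite disjoint blocks and set $\lambda_n=\mu_j$ for $n\in N_j$ with a prescribed positive sequence $\{\mu_j\}$; then $x_k=\sum_j\frac{\mu_j^k}{k!}e^{-\mu_j}\,e_j$, where the orthogonal indicator classes $e_j=\pi(\mathbf{1}_{N_j})$ are linearly independent in $\ell^{\infty}/c_0$. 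The three parts reduce to different choices of $\{\mu_j\}$.

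For (i) I would take $d$ distinct positive constants $\mu_1,\dots,\mu_d$ on a partition into $d$ blocks. Every $x_k$ lies in $\mathrm{span}(e_1,\dots,e_d)$ and $1=\sum_j e_j$ lies there as well, so $\mathrm{span}\,\mathrm{FOC}(G_n)$ has dimension at most $d$; a Vandermonde computation on the matrix $(\mu_j^k/k!)_{1\leq j\leq d,\,0\leq k\leq d-1}$ shows that $x_0,\dots,x_{d-1}$ are linearly independent, pinning the dimension at exactly $d$.

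For (ii) I would use countably many distinct values $\mu_j\in(0,1)$, so $\lambda_n\leq 1$. Linear independence of the entire family $\{x_k\}_{k\geq 0}$ follows by a polynomial argument: a vanishing relation $\sum_{k=0}^K a_k x_k=0$ in $\ell^{\infty}/c_0$ forces the constant value $\sum_k a_k\mu_j^k/k!$ on each $N_j$ to be $0$, so the polynomial $\sum_k a_kx^k/k!$ vanishes at infinitely many points, hence identically. For total boundedness I would use the uniform tail estimate $\sum_{k>K}\lambda_n^k/k!\leq\sum_{k>K}1/k!\to 0$, which approximates every $y=\sum_{k\in S}x_k\in X$ to within any prescribed $\varepsilon$ by an element of the finite set $\{\sum_{k\in S}x_k:S\subseteq\{0,\dots,K\}\}$.

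Part (iii) is the most delicate one and is where I expect the main work. I would take $\mu_j=A^j$ for a large constant $A$ (say $A=100$) and choose disjoint infinite $N_j\subseteq\{n:\mu_j^2\leq n\}$, keeping $p=o(n^{-3/2})$. Setting $y_j=\sum_{k\in\mathbb{Z}_{\geq 0},\,|k-\mu_j|\leq\sqrt{\mu_j}}x_k\in X$, the Poisson central limit theorem gives $y_j|_{N_j}\to\mathrm{Pr}(|Z|\leq 1)>\tfrac{1}{2}$ as $j\to\infty$, while on $N_{j'}$ for $j'\neq j$ the gap $|\mu_j-\mu_{j'}|$ dwarfs $\sqrt{\mu_j}+\sqrt{\mu_{j'}}$, so a Chernoff estimate for the Poisson tail forces $y_j|_{N_{j'}}$ to be exponentially small. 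Hence $\|y_j-y_{j'}\|\geq\tfrac{1}{3}$ for all sufficiently large distinct $j,j'$, exhibiting an infinite $\tfrac{1}{3}$-separated subset of $\mathrm{FOC}(G_n)$ and ruling out total boundedness. The technical obstacle is making the Chernoff bound uniform in the pair $(j,j')$; the exponential growth of $\mu_j$ reduces this to a routine large-deviation computation.
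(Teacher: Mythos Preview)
Your proposal is correct and follows the same overall framework as the paper: invoke Claim~\ref{CL:VECTORS} with $p=2\lambda_n/(n(n-1))$, partition $\mathbb{N}$ into infinite blocks on which $\lambda_n$ is constant, and analyse the resulting Poisson vectors $x_k$. Parts (i) and (ii) are essentially identical to the paper's argument (the paper takes $\lambda_n\equiv n\bmod d$ for (i) and $\lambda_n=1/r(n)$ with $r(n)$ the $2$-adic valuation for (ii), but the Vandermonde step and the $\|x_k\|\le 1/k!$ tail bound are the same).

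The only genuine divergence is in (iii). The paper avoids CLT and Chernoff entirely: it builds $m(r)$ and $k(r)$ recursively from the two elementary limits $g_k(\lambda):=\sum_{i\le k}\lambda^i e^{-\lambda}/i!\to 0$ as $\lambda\to\infty$ and $g_k(\lambda)\to 1$ as $k\to\infty$, and takes $y_r=x_0+\dots+x_{k(r)}$ (initial segments, not windows). This yields the $\tfrac13$-separation by direct substitution, with no uniformity issues to manage. Your geometric-sequence-plus-concentration route is equally valid and arguably more transparent conceptually, but it trades the paper's soft recursive construction for quantitative large-deviation estimates; the uniformity you flag as the ``technical obstacle'' is genuine but, as you note, routine once $A$ is large. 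One small point to tighten: the condition $N_j\subseteq\{n:\mu_j^2\le n\}$ only gives $\lambda_n\le\sqrt{n}$, not $\lambda_n=o(\sqrt{n})$; you need something like $\min N_j/\mu_j^2\to\infty$ to secure $p=o(n^{-3/2})$.
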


\begin{proof}
To prove (i), consider $p =\lambda_n/{n\choose 2}$, where $\lambda_n$ equals $n$ modulo $d$. The sequence $(\frac{\lambda_n^k}{k!}e^{-\lambda_n})_{n \in \mathbb{N}}$ is $d$-periodic, i.e. its $(n+d)$-th element equals the $n$-th element. The subspace $L_d$ of $d$-periodic sequences in $\ell^{\infty}$ is $d$-dimensional. Consider the following basis in $L_d$: let $e_r$ be the vector with ones on $(d t + r)$-th positions, $t\in\mathbb{Z}_+$, and zeros on all others. Let us prove that the system of vectors $f_k = (\frac{\lambda_n^k}{k!}e^{-\lambda_n})_{n \in \mathbb{N}}$ for $0 \leq k \leq d-1$ is also a basis in $L_d$. The vector $f_k$ equals $\sum\limits_{r=0}^{d-1} \frac{r^k}{k!}e^{-r} e_r$, and
\begin{align*}
\mathrm{det}\left(\left(\frac{i^j}{j!}e^{-i}\right)_{0\leq i,j\leq d-1}\right) =\mathrm{det}\left((i^j)_{0\leq i,j\leq d-1}\right) \prod_i e^{-i} \prod_j \frac{1}{j!}
 = \prod_{i < i'} (i' - i) \prod_i e^{-i} \prod_j \frac{1}{j!} \not = 0 .
\end{align*}
Then, $f_k$, $0\leq k\leq d-1$, is a basis in $L_d$. Since $x_k = \pi (f_k)$ for all $k$, we have that $\langle x_k,\, k\geq 0\rangle$ coincides with the subspace $\pi(L_d)$ in $\ell^{\infty}/c_0$. The intersection of $L_d$ and $c_0$ is trivial because each $d$-periodic vector in $c_0$ is zero. Therefore, the space $\pi(L_d)$ is also $d$-dimensional. Note that $\langle X \rangle=\pi(L_d)$ and $1\in\pi(L_d)$ implying that $\langle 1-X \rangle=\pi(L_d)$. Since $f_k\in X$ for all $0\leq k\leq d-1$, by Claim~\ref{CL:VECTORS}, we get that $\langle \mathrm{FOC}(G(n,p)) \rangle =\pi(L_d)$ completing the proof. \\

To prove (ii), consider $p = \frac{2 \lambda_n}{n(n-1)}$, where $\lambda_n$ equals $\frac{1}{r}$ if $n$ is divisible by $2^r$, but not divisible by $2^{r+1}$. We will denote $r(n)$ the maximum $r$ such that $n$ is divisible by $2^r$. Thus, $\lambda_n=\frac{1}{r(n)}$. Let $e_r$ be the vector with ones on $2^r(2t+1)$-th positions, and zeros on all others. The vector $(\frac{\lambda_n^k}{k!}e^{-\lambda_n})_{n \in \mathbb{N}}$ equals $\sum\limits_{r=0}^{\infty} \frac{1}{r^k k!}e^{-\frac{1}{r}} e_r$. Vectors $\pi(e_r)$ are nonzero and linearly independent. Therefore, $x_k = \sum\limits_{r=0}^{\infty} \frac{1}{r^k k!}e^{-\frac{1}{r}} \pi(e_r)$. Similarly, as for (i), $x_k$ are linearly independent. So, the set $X$ spans an infinite-dimensional subspace. 

Let us show that the sum $\sum\limits_{k=0}^{\infty} \|x_k\|$ converges. 
$$
\|x_k\| = \limsup_n \frac{1}{(r(n))^k k!}e^{-\frac{1}{r(n)}} \leq \limsup_n \frac{1}{(r(n))^{k}k!} = \frac{1}{k!} .
$$
Since the sum $\sum\limits_{k=0}^{\infty} \frac{1}{k!}$ converges, we get the convergence of the considered sum. For each positive $\varepsilon$, we can choose an integer $N$ such that $\sum\limits_{k=N+1}^{\infty} \|x_k\| < \varepsilon$. Then, for each vector $v = x_{i_1}+x_{i_2}+\cdots+x_{i_s} \in X$ consider a vector $v'= x_{i_1} + x_{i_2} + \cdots + x_{i_{s'}}$, where $i_{s'}$ is the greatest number among $i_j$ such that $i_{j} \leq N$. Since $v - v'$ is expressible as a sum of vectors $x_k$ with $k > N$, we get $\|v - v'\| \leq \varepsilon$. Therefore, the set $X$ can be covered by finitely many balls of radius $\varepsilon$ because there is finitely many sums of vectors $x_k$ with $k \leq N$. Similarly for $1-X$. Hence, $X \cup (1-X)$ is totally bounded. \\

To prove (iii), consider $p = 2 \lambda_n/(n(n-1))$, where $\lambda_n$ equals $m(r(n))$, and $m(r)$, $r\in\mathbb{Z}_+$, is defined in such a way that, for some $k(r)$, vectors $y_r=x_{0}+\cdots+x_{k(r)}$ are at distances at least $\frac{1}{3}$ from each other. Let us construct such $m(r)$ and $k(r)$. For every non-negative integer $k$, consider a function $g_k:\mathbb{R}\to\mathbb{R}$ defined as follows: $g_k(\lambda)=(1 + \lambda + \cdots + \lambda^k/k!)e^{-\lambda}$. This sequence of functions satisfies two properties: (a) $\lim\limits_{\lambda \to \infty} g_k(\lambda) = 0$ for every fixed $k$;  (b) $\lim\limits_{k \to \infty} g_k(\lambda) = 1$ for every fixed $\lambda$. Let $m(0) = 0$ and $k(0) = 0$. For each $r \geq 1$, we define $m(r)$ as the least integer $M$ such that $\forall m \geq M: g_{k(r-1)}(m) \leq \frac{1}{3}$. Such an integer $M$ exists due to the property (a). Next, we define $k(r)$ as the least integer $K$ such that $\forall k \geq K: g_{k}(m(r)) \geq \frac{2}{3}$. Such an integer $K$ exists due to the property (b).  The vector $y_r$ equals $\pi \left((g_{k(r)}(m(r(n))))_{n \in \mathbb{N}}\right)$. So, $y_r$ and $y_s$, $r \geq s+1$, are at distance
\begin{align*}
\|y_r - y_s\| & = \limsup_n \biggl |g_{k(r)}(m(r(n))) - g_{k(s)}(m(r(n))) \biggr |\\
& \geq g_{k(r)}(m(s+1)) - g_{k(s)}(m(s+1)) \geq \frac{2}{3} - \frac{1}{3} = \frac{1}{3}.
\end{align*}
Since there is a sequence of vectors $y_r \in X$ such that $\|y_r - y_s\| \geq \frac{1}{3}$ for each pair of distinct positive integers $r$ and $s$, the set $X \cup (1 - X)$ is not totally bounded.
\end{proof}

Theorem~\ref{thSSFOC} stated below claims that $G(n,n^{-\alpha})$ has a not totally bounded complexity for every rational $\alpha\in(0,1)$. We derive it using a construction of a FO sentence introduced by Shelah and Spencer in~\cite{SS} to disprove the convergence law. 
\begin{lemma}[Shelah, Spencer~\cite{SS}]\label{absCLSS}
Let $G_n\sim G(n, n^{-\alpha})$. For every integer $d\geq 100$, there exists a FO sentence $\varphi_d$ such that
\begin{itemize}
\item[(i)] if $\log^* n \equiv \lfloor \frac{d}{4} \rfloor (\bmod\; d)$ then $G_n\models\varphi_d$ whp;
\item[(ii)] if $\log^* n \equiv \lfloor \frac{3d}{4} \rfloor (\bmod\; d)$ then $G_n\models\neg\varphi_d$ whp,
\end{itemize}
where $\log^*$ denotes the iterated logarithm, i.e. the number of times the logarithm need to be applied to the number to make it one or less.
\end{lemma}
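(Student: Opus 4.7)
The statement is essentially Shelah and Spencer's non-convergence result for rational exponents~\cite{SS}, packaged so that the oscillation tracks $\log^*n\bmod d$. My plan is to follow their construction, with minor bookkeeping to pin down the two residue classes of interest.

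The basic building block is a \emph{strictly balanced zero-extension} $(K,R)$: since $\alpha=p/q\in(0,1)$ is rational, one can exhibit a rooted pair of graphs with $e(K\setminus R)=\alpha\,v(K\setminus R)$ and with strict positivity of the corresponding density on every proper intermediate extension. For such a pair, the number of copies of $K$ extending a typical root in $G_n$ is asymptotically Poisson with a constant mean $\mu$, tunable by composing disjoint copies, so ``having at least one such extension'' is a FO statement whose probability on a typical root approaches $1-e^{-\mu}$. Chaining zero-extensions with safe (density-positive) rigid extensions, I would construct a finite family of FO formulae $\chi_0(\bar x),\chi_1(\bar x),\ldots,\chi_{d-1}(\bar x)$ such that the cardinality $N_k$ of their witness sets in $G_n$ is whp concentrated around a quantity behaving like $c_k\log^{(k)}n$, where $\log^{(k)}$ is the $k$-fold iterated logarithm. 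Each $\chi_{k+1}$ is obtained by a rigid extension anchored at the threshold ``$\bar x$ has at least $t_k$ witnesses of $\chi_k$'', producing one logarithmic drop per step, and the tower is closed cyclically modulo $d$.

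By design the hierarchy terminates, i.e.\ $\exists\bar x\,\chi_k(\bar x)$ flips from ``true whp'' to ``false whp'', at a level which depends on $n$ only through $\log^*n$. The sentence $\varphi_d$ is then defined as the Boolean combination of the $\exists\bar x\,\chi_k(\bar x)$ that picks out the residue class $\lfloor d/4\rfloor\pmod d$ along the cycle. Because $\lfloor d/4\rfloor$ and $\lfloor 3d/4\rfloor$ are at cyclic distance roughly $d/2$ apart, the two residues of interest land in opposite truth regimes of this Boolean combination.

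The main obstacle is calibrating the Poisson means $\mu_k$ at each level so that the transition $N_k\mapsto N_{k+1}$ is \emph{precisely} logarithmic and not merely some other slowly decreasing function. This requires propagating the layered second-moment and concentration arguments from~\cite{SS} uniformly through all $d$ levels; the hypothesis $d\geq 100$ enters as the slack needed to ensure that $\lfloor d/4\rfloor$ and $\lfloor 3d/4\rfloor$ fall safely inside the ``whp true'' and ``whp false'' windows, rather than straddling a boundary layer of the tower. In practice I would invoke the Shelah-Spencer construction as a black box and only verify that the two residues of interest sit on opposite sides of the resulting oscillation.
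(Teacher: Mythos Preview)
The paper does not supply its own proof of this lemma: it simply cites~\cite{SS} and remarks that the argument given there for $d=100$ works verbatim for every $d\geq 100$. Your proposal does essentially the same thing---you sketch the Shelah--Spencer architecture (zero-extensions with Poisson counts, a tower of FO formulae whose witness sets shrink by one logarithm per level, a Boolean combination reading off $\log^*n\bmod d$) and then explicitly say you would invoke their construction as a black box---so your approach matches the paper's.
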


In the original paper \cite{SS} Lemma~\ref{absCLSS} was formulated only for  $d=100$, but literally the same proof works for any $d \geq 100$ (the lower bound 100 could be sufficiently improved, though it is not important for us).

\begin{theorem}\label{thSSFOC}
For any rational $\alpha \in (0, 1)$, $\mathrm{FOC}(G(n, n^{-\alpha}))$ is not totally bounded. 
\end{theorem}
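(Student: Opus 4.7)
The plan is to use Lemma~\ref{absCLSS} to produce an infinite sequence of FO sentences $\varphi_{d_1},\varphi_{d_2},\ldots$ whose corresponding probability sequences, viewed in $\ell^\infty/c_0$, are pairwise at distance bounded away from $0$. This immediately precludes total boundedness.

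Fix the rational $\alpha\in(0,1)$ and let $G_n\sim G(n,n^{-\alpha})$. First I would select an infinite set $D=\{d_1<d_2<\cdots\}$ of pairwise coprime integers $\geq 100$; e.g., take $d_i$ to be the $i$-th prime exceeding $100$. For each $d\in D$, Lemma~\ref{absCLSS} supplies a FO sentence $\varphi_d$ with the stated behaviour on two distinguished residue classes of $\log^* n$ modulo $d$. Set $p_d:=\pi\bigl((\mathrm{Pr}(G_n\models\varphi_d))_{n\in\mathbb{N}}\bigr)\in\mathrm{FOC}(G_n)$.

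The main step is to show $\|p_{d_i}-p_{d_j}\|\geq 1/2$ whenever $i\neq j$. Given any two coprime $d,d'\in D$, the Chinese Remainder Theorem yields a residue $r$ modulo $dd'$ with
\[
r\equiv \lfloor d/4\rfloor\pmod d,\qquad r\equiv\lfloor 3d'/4\rfloor\pmod{d'}.
\]
Because $\log^* n$ is non-decreasing, unbounded and attains every non-negative integer value (in fact on arbitrarily long intervals, since the preimages of $k$ under $\log^*$ grow tetrationally in~$k$), the set $\{n:\log^* n\equiv r\pmod{dd'}\}$ is unbounded. Along this subsequence, Lemma~\ref{absCLSS} gives $\mathrm{Pr}(G_n\models\varphi_d)\to 1$ while $\mathrm{Pr}(G_n\models\varphi_{d'})\to 0$, hence
\[
\|p_d-p_{d'}\|=\limsup_{n\to\infty}\bigl|\mathrm{Pr}(G_n\models\varphi_d)-\mathrm{Pr}(G_n\models\varphi_{d'})\bigr|=1.
\]
Therefore $\{p_{d_i}:i\geq 1\}$ is an infinite $1$-separated subset of $\mathrm{FOC}(G_n)$, so $\mathrm{FOC}(G(n,n^{-\alpha}))$ cannot be covered by finitely many balls of radius $1/3$ and is not totally bounded.

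The only place where care is needed is the arithmetic step: one must verify both that the pairwise coprimality of the chosen $d_i$'s permits CRT to combine one ``true'' congruence for $\varphi_d$ with one ``false'' congruence for $\varphi_{d'}$, and that the iterated logarithm actually hits each residue class on an unbounded set of $n$. Everything else is a direct packaging of Lemma~\ref{absCLSS} together with the definition of the quotient norm $\|\cdot\|$ on $\ell^\infty/c_0$ as a $\limsup$.
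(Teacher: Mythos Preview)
Your proof is correct and follows essentially the same approach as the paper: both arguments pick pairwise coprime moduli $d\geq 100$ (primes, in fact), invoke Lemma~\ref{absCLSS}, use the Chinese Remainder Theorem to align a ``true'' residue for one sentence with a ``false'' residue for another, and conclude that the corresponding vectors in $\ell^\infty/c_0$ are $1$-separated. The only cosmetic difference is that the paper introduces auxiliary sequences $y_p(n)$ (equal to $1$, $0$, or $x_p(n)$ on the three relevant residue classes) to make the $\limsup$ computation explicit, whereas you argue directly via the subsequence limit; both are valid.
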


\begin{proof}
For every prime number $p\geq 100$, let us consider the sentence $\varphi_p$ whose existence is claimed by Lemma~\ref{absCLSS}. Let $x_p(n)$ be the probability $\mathrm{Pr}(G(n, n^{-\alpha}) \models \varphi_p)$. Then, vectors $x_p = \pi \left( (x_p(n))_{n \in \mathbb{N}} \right)$ are in $\mathrm{FOC}(G(n, n^{-\alpha}))$. Let $(y_p(n))_{n \in \mathbb{N}} \in \ell^{\infty}$ be the sequence such that 
\begin{itemize}
\item[(i)] if $\log^* n \equiv \lfloor \frac{p}{4} \rfloor (\bmod\; p)$ then $y_p(n)=1$;
\item[(ii)] if $\log^* n \equiv \lfloor \frac{3p}{4} \rfloor (\bmod\; p)$ then $y_p(n)=0$;
\item[(iii)] $y_p(n) = x_p(n)$, otherwise.
\end{itemize}
The sequence $x_p(n) - y_p(n)$ converges to zero because, for each $n$ such that $\log^* n \equiv \lfloor \frac{p}{4} \rfloor (\bmod\; p)$ or $\log^* n \equiv \lfloor \frac{3p}{4} \rfloor (\bmod\; p)$, we have the required convergence due to the properties of $\varphi_p$ given by Lemma~\ref{absCLSS}, and $x_p(n) - y_p(n) = 0$ for all other $n$.

Thus, vectors $x_p$ and $\pi \left( (y_p(n))_{n \in \mathbb{N}} \right)$ are equal. By the Chinese remainder theorem, for each pair of distinct primes $p$ and $q$ there exists an integer number $m$ such that $m \equiv \lfloor \frac{p}{4} \rfloor (\bmod\; p)$ and $m \equiv \lfloor \frac{3q}{4} \rfloor (\bmod\; q)$. Therefore, for each $n$ such that $\log^* n \equiv m (\bmod\; pq)$, we have $y_p(n) - y_q(n) = 1$. Since there are infinitely many $n$ such that $\log^* n \equiv m (\mathrm{mod}\; pq)$, we have $\|x_p - x_q\| \geq 1$. Hence, we have infinitely many vectors in the set $\mathrm{FOC}(G(n, n^{-\alpha}))$, which are at distances at least $1$ from each other, i.e. $\mathrm{FOC}(G(n, n^{-\alpha}))$ is not totally bounded.
\end{proof}

So, indeed, $\mathrm{FOC}(G(n, n^{-\alpha}))$ is not totally bounded when $\alpha \in (0,1)\cap\mathbb{Q}$. We are not able to present a ``nice'' $p$ so that $\mathrm{FOC}(G(n, p))$ is either $d$-dimensional, $d>1$, or infinite-dimensional and totally bounded. However, this appears to be possible for $G(n\mid\varphi)$. 

\subsection{Complexity of $G(n\mid\varphi)$}\label{axFOC}

First of all, note that for any FO sentence $\varphi$, if $\liminf\limits_{n\to\infty} \mathrm{Pr}(G(n, \frac{1}{2}) \models \varphi) > 0$ (it actually may only happen when the limit is 1), then $\mathrm{FOC}(G(n\mid\varphi)) = \{ 0, 1 \}$ due to the FO 0--1 law for $G(n, \frac{1}{2})$. Moreover, there is a FO sentence $\varphi$ such that $G(n\mid\varphi)$ obeys the FO convergence law but not the FO 0--1 law. For example, consider $\varphi$ which expresses the property of consisting of isolated vertices and exactly one connected component of size $3$. Then, probability of containing a triangle converges to $\frac{1}{4}$.

\begin{theorem}\label{TH:AXFOC}
For a FO sentence $\varphi$, we let $G_n\sim G(n\mid\varphi)$.
\begin{enumerate}
\item[(i)] There is a FO sentence $\varphi$ such that $\mathrm{FOC}(G_n)$ is a dense subset of $[0, 1]$.
\item[(ii)] For each $d \geq 1$, there is a FO sentence $\varphi$ such that $\mathrm{FOC}(G_n)$ spans a $d$-dimensional subspace of $\ell^{\infty}/c_0$.
\item[(iii)] There is a FO sentence $\varphi$ such that $\mathrm{FOC}(G_n)$ is totally bounded but spans an infinite-dimensional subspace of $\ell^{\infty}/c_0$.
\item[(iv)] There is a FO sentence $\varphi$ such that $\mathrm{FOC}(G_n)$ is not totally bounded.
\end{enumerate}
\end{theorem}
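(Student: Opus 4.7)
The plan is to treat the four parts separately; (i), (ii), and (iv) admit relatively direct constructions within this section, while (iii) is the main technical obstacle and is deferred to Section~\ref{sc:proofiii}. For (i), I take $\varphi$ to be the FO sentence ``every vertex has degree exactly $2$'', so that $G(n\mid\varphi)$ is a uniformly random $2$-regular graph on $[n]$. By the classical result on short cycle counts in random regular graphs, for every $k\geq 3$ the number of $k$-cycles converges in distribution to $\mathrm{Pois}(\lambda_k)$ with $\lambda_k=1/(2k)$, jointly independently in $k$; hence for every finite $S\subseteq\{3,4,\ldots\}$ the FO sentence ``there is no cycle of length $k$ for any $k\in S$'' has probability tending to $\exp(-\sum_{k\in S}1/(2k))$. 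Divergence of $\sum 1/(2k)$ makes these sums dense in $[0,\infty)$, so the corresponding exponentials are dense in $(0,1]$; combining with any unsatisfiable sentence yields a dense subset of $[0,1]$, and the FO convergence law (either from \cite{HK} or directly from the Poisson limit together with the fact that the isomorphism type of a $2$-regular graph is determined by its cycle multiset) gives $\mathrm{FOC}(G_n)\subset\mathbb{R}$.

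For (ii), I take $\varphi$ saying ``the graph is a disjoint union of cliques, all of size at most $d$, and at most one clique has size smaller than $d$''; this is FO via transitivity of adjacency, a degree bound, and a two-variable constraint forcing any two vertices of degree less than $d-1$ to be adjacent. Every graph on $[n]$ satisfying $\varphi$ is isomorphic to the canonical graph $\lfloor n/d\rfloor K_d\sqcup K_{n\bmod d}$, so every FO sentence has probability $0$ or $1$, determined by $n\bmod d$. The $d$ indicator sentences $\psi_r =$ ``there is a clique of size exactly $r$'' for $r=0,1,\ldots,d-1$ project under $\pi$ to the indicator vectors $e_0,\ldots,e_{d-1}$ of the $d$-dimensional subspace of $d$-periodic classes in $\ell^{\infty}/c_0$, and span it, exactly as in the proof of Theorem~\ref{thBinFOC}(i).

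Part (iii) is the chief obstacle and is carried out in Section~\ref{sc:proofiii}. Mirroring Theorem~\ref{thBinFOC}(ii), the goal is to build a single FO sentence $\varphi$ whose conditional random graph admits an infinite family of FO-definable features with indicator probabilities that are linearly independent modulo $c_0$ yet have summable $\ell^{\infty}/c_0$ norms. The key new device, to be developed in that section, is a construction of FO sentences that pin down the isomorphism type of specific \emph{asymmetric} graphs, which gives the fine-grained access to the structure of $G(n\mid\varphi)$ needed to realise both the linear independence and the norm-decay. I expect this combinatorial construction, together with the verification of the norm estimates, to be the main difficulty of the theorem.

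For (iv), the plan is to reproduce inside the conditional model the Shelah--Spencer iterated-logarithm construction of Lemma~\ref{absCLSS}. Concretely, I would design $\varphi$ forcing $G(n\mid\varphi)$ to contain an FO-definable sparse sub-structure whose induced distribution effectively matches $G(N(n), N(n)^{-\alpha})$ for some rational $\alpha\in(0,1)$ and some $N(n)\to\infty$; the sentences $\varphi_p$ of Lemma~\ref{absCLSS}, relativised to this sub-structure, then yield via the Chinese-remainder argument from the proof of Theorem~\ref{thSSFOC} infinitely many vectors in $\mathrm{FOC}(G_n)$ pairwise at distance at least $1$ in $\ell^{\infty}/c_0$, contradicting total boundedness.
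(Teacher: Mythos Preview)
Your treatment of parts (i), (ii), and (iii) is essentially the same as the paper's. For (i) the paper also uses the $2$-regular condition, citing Lynch~\cite{Ly05} for the convergence law and Bollob\'as--Wormald (Lemma~\ref{lmWo}) for the joint Poisson limit of short cycle counts; your density argument via $\exp(-\sum_{k\in S}1/(2k))$ is a clean variant of the paper's more explicit bookkeeping. For (ii) the paper uses the identical clique-union sentence; the one step you glide over is that the truth value is \emph{eventually} determined by $n\bmod d$, which needs the standard fact (the paper cites~\cite{Co}) that $m_0K_d\sqcup K_r$ and $mK_d\sqcup K_r$ are $k$-elementarily equivalent for all $m\geq m_0(k)$.

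Part (iv) is where your plan diverges from the paper, and here there is a genuine gap. You propose to embed the Shelah--Spencer iterated-logarithm machinery by designing $\varphi$ so that $G(n\mid\varphi)$ contains a FO-definable substructure distributed like $G(N(n),N(n)^{-\alpha})$. It is not at all clear how to realise such a substructure inside a \emph{uniform} conditional model: you would have to arrange that, conditioned on $\varphi$, some definable piece of the graph is free enough to be binomial with the right polynomial density, and nothing in your outline indicates how to achieve that. The paper takes an entirely different and far more elementary route: it exhibits a single FO sentence $\varphi$ expressing ``$G\cong K_s\Box K_t$ for some $s,t\geq 1$'' (Definition~\ref{defCartesian}) via three short local axioms, and then uses the FO sentences $\psi_d=$ ``there is an inclusion-maximal $d$-clique''. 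A direct automorphism count shows that for $n=d_1p$ with $p$ prime and $p>d_2>d_1$, the isomorphism class $K_{d_1}\Box K_{n/d_1}$ dominates the conditional measure, giving $\mathrm{Pr}(G(n\mid\varphi)\models\psi_{d_1})\to 1$ while $\mathrm{Pr}(G(n\mid\varphi)\models\psi_{d_2})=0$; hence the vectors $\pi((\mathrm{Pr}(G(n\mid\varphi)\models\psi_d))_n)$ are pairwise at distance $\geq 1$. This avoids any appeal to Lemma~\ref{absCLSS} or Theorem~\ref{thSSFOC} and delivers (iv) in a page. I recommend abandoning the Shelah--Spencer embedding idea for (iv) and adopting the Cartesian-product construction instead.
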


We postpone the proof of (iii) to Section~\ref{sc:proofiii}: it is long enough to interrupt the flow of the paper and it requires an additional background that we outline in the beginning of Section~\ref{sc:proofiii}.

\begin{proof}[Proof of parts (i), (ii) and (iv) of Theorem~\ref{TH:AXFOC}]
{\bf To prove (i),} consider a FO sentence $\varphi$ which expresses the property of being $2$-regular. For the respective random graph $G(n\mid\varphi)$, the FO convergence law was proven by Lynch in~\cite{Ly05}. Then, $\mathrm{FOC}(G(n\mid\varphi))$ is a subset of $[0, 1]$. To prove that this subset is dense, we refer to the result proven by Bollob\'{a}s and Wormald~\cite{Bo, Wo80, Wo81}.

\begin{lemma}[Bollob\'{a}s, Wormald \cite{Bo, Wo80, Wo81}]\label{lmWo}
Fix an integer $d\geq 2$ and $C>0$. In random uniform $d$-regular graphs on $[n]$, the vectors of numbers of cycles of length $\ell \leq C$ converge in distribution to a vector of independent Poisson random variables $\mathrm{Pois}\left((d-1)^\ell/(2\ell)\right)$.
\end{lemma}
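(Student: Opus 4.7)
The plan is to work in Bollob\'{a}s's configuration (pairing) model and apply the method of factorial moments, then transfer to the uniform random $d$-regular graph by conditioning on simplicity. Assume throughout that $n$ is such that $dn$ is even (the other parity is vacuous). Set up $n$ buckets $V_1, \ldots, V_n$ of $d$ half-edges each, and let $M$ be a uniformly random perfect matching on the resulting $dn$ points; the multigraph $G_M$ on $[n]$ has an edge for each pair in $M$ with endpoints determined by the containing buckets. The crucial fact is that, conditional on $G_M$ being simple (no loops and no multi-edges), $G_M$ is uniformly distributed over all $d$-regular simple graphs on $[n]$. For $1 \leq \ell \leq C$ let $X_\ell^*$ denote the number of $\ell$-cycles in $G_M$, with loops counted as $1$-cycles and pairs of parallel edges as $2$-cycles.

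Step 1 is the factorial moment computation in the pairing model. The number of perfect matchings on $2m$ points is $(2m-1)!! = (2m)!/(2^m m!)$, and the probability that a specified set of $s$ disjoint pairs all appear in $M$ is $(dn-2s-1)!!/(dn-1)!! = (1+o(1))(dn)^{-s}$, uniformly for bounded $s$. To compute $\mathbb{E}\bigl[\prod_{\ell=1}^C (X_\ell^*)_{k_\ell}\bigr]$ one sums over ordered tuples of vertex-disjoint cycles of the prescribed lengths: for each $\ell \geq 3$, choosing a cyclically ordered $\ell$-tuple of distinct vertices contributes $(1+o(1))n^\ell/(2\ell)$, picking an ordered pair of distinct half-edges at each vertex contributes $(d(d-1))^\ell$, and the pair probabilities contribute $(dn)^{-\ell}$. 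Multiplying yields asymptotic parameter $\lambda_\ell = (d-1)^\ell/(2\ell)$ per $\ell$-cycle. Analogous bookkeeping gives $\lambda_1 = (d-1)/2$ and $\lambda_2 = (d-1)^2/4$. Contributions from non-disjoint or self-intersecting tuples are $o(1)$, so
\begin{equation*}
\mathbb{E}\!\left[\prod_{\ell=1}^C (X_\ell^*)_{k_\ell}\right] \xrightarrow[n\to\infty]{} \prod_{\ell=1}^C \lambda_\ell^{k_\ell}.
\end{equation*}

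Step 2 invokes the method of moments. A Poisson distribution is determined by its factorial moments ($\mathbb{E}[(Z)_k] = \lambda^k$ for $Z \sim \mathrm{Pois}(\lambda)$), and joint factorial moments of independent random variables multiply. Hence convergence of all mixed factorial moments to $\prod_\ell \lambda_\ell^{k_\ell}$ implies joint convergence $(X_1^*,\ldots,X_C^*) \Rightarrow (Z_1,\ldots,Z_C)$ with the $Z_\ell$ independent $\mathrm{Pois}(\lambda_\ell)$. Step 3 transfers this to the uniform random $d$-regular graph: $G_M$ is simple iff $X_1^* = X_2^* = 0$, an event whose probability tends to $e^{-\lambda_1-\lambda_2} > 0$ by Step 2. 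Since conditioning on a limit event of positive probability preserves weak convergence on the complementary coordinates, $(X_3^*, \ldots, X_C^*)$ given $\{X_1^* = X_2^* = 0\}$ converges jointly to independent $\mathrm{Pois}(\lambda_\ell)$ for $3 \leq \ell \leq C$. By the conditional uniformity of $G_M$, this is exactly the claimed limit law for cycle counts in the uniform random $d$-regular graph.

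The main technical obstacle is Step 1: one must carefully verify that overlapping or degenerate cycle placements (distinct cycles sharing vertices or half-edges, labelings that collapse) contribute $o(1)$ relative to the main term, uniformly across the bounded exponents $k_\ell$, and that the $(1+o(1))$ error in the pair-probability asymptotic survives multiplication over a bounded number of cycles. The boundedness of $C$ is precisely what makes the combinatorics tractable; everything else reduces to standard Poisson moment arithmetic.
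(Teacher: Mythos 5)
Your proposal is correct: the configuration-model construction, the factorial-moment computation yielding $\lambda_\ell=(d-1)^\ell/(2\ell)$, and the transfer to the uniform model by conditioning on $\{X_1^*=X_2^*=0\}$ constitute the classical proof of this result, which is precisely the argument in the cited works of Bollob\'{a}s and Wormald. The paper itself states the lemma only as a citation and gives no proof, so there is nothing further to compare.
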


Let a FO sentence $\psi_{\ell}$ express the property of containing a cycle of length $\ell$. For a $\{0, 1\}$-word $W$ of length $w$, let $\psi_W$ be a conjunction of sentences $\psi_\ell$ if $W(\ell-2)=1$, and $\neg \psi_\ell$ if $W(\ell-2)=0$. For each pair of distinct words $W$ and $W'$ of length $w$, $\psi_W$ contradicts $\psi_{W'}$. Also, the disjunction of $\psi_W$ over all words $W$ of length $w$ is a tautology. Therefore, for an arbitrary labelling $W_1, W_2, \dots, W_{2^w}$ of all such words, we have that 
$$
q_s := \lim\limits_{n \to \infty} \mathrm{Pr}\left(G(n\mid\varphi) \models \bigvee_{i = 1}^s \psi_{W_i}\right) = \sum_{i = 1}^s \lim\limits_{n \to \infty} \mathrm{Pr}(G(n\mid\varphi) \models \psi_{W_i}).
$$
Note that $q_0 = 0$ and $q_{2^w} = 1$. Also, from Lemma~\ref{lmWo}, we have that
\begin{align*}
q_s - q_{s+1}  & = \lim\limits_{n \to \infty} \mathrm{Pr}(G(n\mid\varphi) \models \psi_{W_s}) \\
&=\prod_{W_s(\ell-2)=1}(1 - e^{-\frac{1}{2\ell}}) \prod_{W_s(\ell-2)=0} e^{-\frac{1}{2\ell}} \leq \prod_{\ell = 3}^{w+2} e^{-\frac{1}{2\ell}} < \frac{e^{\frac{3}{2}}}{\sqrt{w+2}}.
\end{align*}
Therefore, $q_s$ is an increasing sequence of numbers in $[0, 1]$ such that $q_0 = 0$, $q_{2^w} = 1$ and $q_s - q_{s-1} < e^{\frac{3}{2}}/\sqrt{w+2}$. Hence, for each number $x \in [0, 1]$, there is an element of this sequence such that $|x - q_s| < \frac{1}{2} e^{\frac{3}{2}}/\sqrt{w+2}$. Since $q_s$ are limiting probabilities for FO sentences, and $w$ can be chosen arbitrary large, $\mathrm{FOC}(G(n\mid\varphi))$ is a dense subset of $[0, 1]$ as needed. 

{\bf To prove (ii),} consider a FO sentence $\varphi$ which expresses the property of being a disjoint union of $d$-cliques and at most one $r$-clique for some $0 \leq r < d$. Note that, for each $n$, this property defines a single isomorphism class. 
 Fix a FO sentence $\psi$. Since for any graphs $A$, $B$, there exists $m_0 \in \mathbb{N}$ such that, for any $m \geq m_0$, graphs $m_0 A \sqcup B$ and $m A \sqcup B$ are not distinguishable by $\psi$ (see~\cite{Co}) we get that there exists $t_0$ such that for all $t \geq t_0$ graphs on $[dt_0+r]$ and $[dt+r]$ that satisfy $\varphi$ are not distinguishable by $\psi$. Then, the sequence $(\mathrm{Pr}(G(n\mid\varphi) \models \psi))_{n \in \mathbb{N}}$ consists only of zeros and ones and is $d$-periodic for $n$ large enough. Therefore, $\mathrm{FOC}(G(n\mid\varphi))$ is a set of projections of $d$-periodic sequences of ones and zeros, i.e. is contained in the $d$-dimensional subspace $\pi(L_d) \subset \ell^{\infty}/c_0$, where $L_d$ is the $d$-dimensional space of all $d$-periodic sequences in~$\ell^{\infty}$.

For each $0 \leq r < d$ consider a FO sentence $\psi_r$ which expresses the property of containing an isolated $r$-clique. The $n$-th element of the sequence $\mathrm{Pr}(G(n\mid\varphi) \models \psi_r)$ equals one if $n \equiv r (\mathrm{mod}\; d)$, and zero otherwise. Projections of $(\mathrm{Pr}(G(n\mid\varphi) \models \psi_r)_{n \in \mathbb{N}}$, $0 \leq r < d$, generate the space $\pi(L_d)$. Thus, indeed $\mathrm{FOC}(G(n\mid\varphi))$ spans a $d$-dimensional subspace of $\ell^{\infty}/c_0$. 

{\bf To prove (iv),} we need the following definition.

\begin{definition}\label{defCartesian}
Let $G$ and $H$ be two graphs. {\it The Cartesian product} $G \Box H$ is the graph on $V(G) \times V(H)$ with adjacency relation $(u, v) \sim (u', v') \Leftrightarrow ((u \sim u') \wedge (v = v')) \vee ((u = u') \wedge (v \sim v'))$. In other words, every edge of the Cartesian product belongs either to an induced subgraph $G_v \cong G$ on $\{(u, v) \mid u \in V(G)\}$ for some $v\in V(H)$ or to an induced subgraph $H_u \cong H$ on $\{(u, v) \mid v \in V(H)\}$ for some $u\in V(G)$.
\end{definition}

Let us present a FO sentence $\varphi$ that describes the property of being isomorphic to $K_s \Box K_t$, for some $s, t > 0$. We construct this sentence as a conjunction of three sentences describing the following properties:

\begin{enumerate}
\item[(a)] For each vertex $v$, its neighbourhood consists of two disjoint cliques $A_v$ and $B_v$;
\item[(b)] For every pair of non-adjacent vertices $u$ and $v$, there is a unique edge from $u$ to $A_v$ and a unique edge from $u$ to $B_v$;
\item[(c)] For every vertex $v$ and any two its non-adjacent neighbours $x\neq y$, there is a unique vertex $u \not = v$ adjacent to $x$ and $y$. 
\end{enumerate}

Let us prove that this sentence expresses the desired property. Observe that $K_s \Box K_t$ satisfies it. Let us then consider any graph $G$ satisfying this sentence. Consider a vertex $v$, from (a) we have two cliques $A_v$ and $B_v$. For any vertex $x$ in $A_v$, we have two cliques $A_x = A_v \cup \{v\} \backslash \{x\}$ and $B_x$. The clique $B_x$ does not intersect $A_x$ by the property (a), and does not intersect $B_v$ because there is no edges between $x$ and $B_v$. Also, we conclude that each vertex adjacent to $x$ and not adjacent to $v$ is a vertex of $B_x$. Similarly, for $y \in B_v$, $B_y = B_v \cup \{v\} \backslash \{y\}$, $A_y$ does not intersect both $B_y$ and $A_v$, and each vertex adjacent to $y$ and not adjacent to $v$ is a vertex of $A_y$. 

From (b) and (c), we have that vertices $u$ which are not adjacent to $v$ are in a one-to-one correspondence with pairs of vertices $x \in A_v$ and $y \in B_v$: each such $u$ is adjacent to the respective $x$ and $y$, and it is not adjacent to any other vertex in $A_v \cup B_v$. Therefore, for each pair $x \in A_v$ and $y \in B_v$, cliques $B_x$ and $A_y$ have a unique common vertex  $u$. This implies that all cliques $B_x$ have the same size as $B_v$, and all cliques $A_y$ have the same size as $A_v$. Hence, the graph $G$ consists of cliques $\{v\} \cup A_v$ and $\{y\} \cup A_y$, for all $y \in B_v$, and cliques $\{v\} \cup B_v$ and $\{x\} \cup B_x$, for all $x \in A_v$, i.e. isomorphic to $K_s \Box K_t$, for some $s, t > 0$.

Now, we fix this FO sentence $\varphi$ describing the property of being isomorphic to $K_s \Box K_t$, for some $s, t > 0$. Let, for $n\in\mathbb{Z}_{>0}$, $D(n)$ be the set of all divisors of $n$. Let $\mu_d=2$, if $d = \sqrt{n}$, and $\mu_d=1$, otherwise. For each $d \in D(n)$ there are $\mu_d d!\left(\frac{n}{d}\right)!$ automorphisms of the graph $K_d \Box K_{\frac{n}{d}}$. Therefore, there are $\frac{n!}{\mu_d d!\left(\frac{n}{d}\right)!}$ graphs on $[n]$ isomorphic to $K_d \Box K_{\frac{n}{d}}$. The probability that $G(n\mid\varphi)$ is isomorphic to $K_d \Box K_{\frac{n}{d}}$ equals
$$
\frac{n!/\mu_d}{ d!\left(\frac{n}{d}\right)!}/\left(\sum_{d' \in D(n)} \frac{n!}{2 d'!\left(\frac{n}{d'}\right)!}\right) = \frac{1/\mu_d}{ d!\left(\frac{n}{d}\right)!}/\left(\sum_{d' \in D(n)} \frac{1}{2 d'!\left(\frac{n}{d'}\right)!}\right)
$$
where we have $2$ in the denominator of the normalisation factor instead of $\mu_{d'}$ since we count twice every graph $K_{d'}\Box K_{n/d'}$ when $d'\neq\sqrt{n}$. 

Let $\psi_d$ be a FO sentence which expresses the property of containing an inclusion-maximal clique of size $d$, i.e. a clique of size $d$ which is not included in any clique of size $d+1$. If a graph on $[n]$ satisfies $\varphi$ then it is isomorphic to $K_d \Box K_{\frac{n}{d}}$ if and only if it satisfies $\psi_d$. Consider $d_1 < d_2$, let us prove that, for each $\varepsilon > 0$ there are infinitely many numbers $n$ such that $\mathrm{Pr}(G(n\mid\varphi) \models \psi_{d_1})-\mathrm{Pr}(G(n\mid\varphi) \models \psi_{d_2}) > 1-\varepsilon$. Let $n = d_1 p$, where $p$ is a prime number bigger than $d_2$. Therefore, $n$ is not divisible by $d_2$, and $\mathrm{Pr}(G(n\mid\varphi) \models \psi_{d_2}) = 0$. Since $d_1 < d_2 < p$, each divisor $m$ of $n$ such that $m \leq \sqrt{n}$ cannot be divisible by $p > \sqrt{n}$. Hence, such divisors are divisors of $d_1$. Thus, we can estimate the normalisation factor for the probability $\mathrm{Pr}(G(n\mid\varphi) \models \psi_{d_1})$ in the following way:
\begin{align*}
\sum_{d' \in D(n)} \frac{1/2}{d'!\left(\frac{n}{d'}\right)!} &= \sum_{d' \in D(d_1)} \frac{1}{d'!\left(\frac{n}{d'}\right)!} \leq \frac{1}{d_1!\left(\frac{n}{d_1}\right)!} + \frac{|D(d_1)| - 1}{\left(\frac{2n}{d_1}\right)!} \\
&\leq\frac{1}{d_1!\left(\frac{n}{d_1}\right)!}\left(1 + \frac{(d_1+1)!p!}{(2p)!}\right) \leq \frac{1}{d_1!\left(\frac{n}{d_1}\right)!}\left(1 + \frac{p!p!}{(2p)!}\right) \leq \frac{1+2^{-p}}{d_1!\left(\frac{n}{d_1}\right)!}.
\end{align*}
Therefore, for all primes $p$ such that $(1 + 2^{-p})^{-1} > 1 - \varepsilon$, we have $\mathrm{Pr}(G(n\mid\varphi) \models \psi_{d_1}) > 1 - \varepsilon$. So, each pair $\psi_{d_1}$ and $\psi_{d_2}$ defines a pair of vectors in $\mathrm{FOC}(G(n\mid\varphi))$ at distance at least $1$, and $\mathrm{FOC}(G(n\mid\varphi))$ is not totally bounded.
\end{proof}


\section{Stochastic FO reduction}
\label{SC:SFOR}

In this section we define a {\it stochastic FO reduction} and describe its useful properties (in Section~\ref{sc:SFOR_def}). Then we show its effectiveness by using it to derive certain logical limit laws for dense (in Section~\ref{sc:SFOR_dense}) and sparse (in Section~\ref{sc:SFOR_sparse}) relational structures as well as to transfer higher FO complexities between random relational structures. Finally, in Section~\ref{sc:SFOR_Muller} we use the stochastic FO reduction to generalise the result of Larrauri, M\"{u}ller, and Noy about the closure of FO complexity of binomial random $d$-hypergraphs with $p=c/n^{d-1}$ from undirected hypergraphs to directed hypergraphs for any possible way to choose an orientation of hyperedges.

\subsection{Definition and main properties}
\label{sc:SFOR_def}

Let $\sigma,\sigma'$ be two signatures; $\mathcal{D}_n$ and $\mathcal{D}'_n$ be the sets of all finite structures on $[n]$ over $\sigma$ and $\sigma'$ respectively; $\mathcal{D}=\sqcup_{n\in\mathbb{N}}\mathcal{D}_n$ and $\mathcal{D}'=\sqcup_{n\in\mathbb{N}}\mathcal{D}'_n$; $D_n,D'_n$ be random relational $n$-structures over $\sigma,\sigma'$ respectively. Moreover, for any two FO sentences $\varphi,\varphi'$ over $\sigma,\sigma'$ respectively, let $\mathcal{D}(\varphi)\subset\mathcal{D}$ and $\mathcal{D}'(\varphi')\subset\mathcal{D}'$ be sets of all structures satisfying $\varphi$ and $\varphi'$ respectively. Finally, let us consider algebras $\mathcal{A}=\{\mathcal{D}(\varphi)\}$, $\mathcal{A}'=\{\mathcal{D}'(\varphi')\}$ (recalling that an {\it algebra} is closed under finite unions in contrast to a $\sigma$-algebra). For an $(\mathcal{A}\mid\mathcal{A}')$-measurable function $f:\mathcal{D}\to\mathcal{D}'$ and a FO sentence $\varphi'$, we denote by $f^{-1}(\varphi')=:\varphi$ a FO sentence such that $\mathcal{D}(\varphi)=f^{-1}(\mathcal{D}'(\varphi'))$.

\begin{definition}
\textbf{A stochastic FO reduction} from $D'=(D'_n)_{n\in\mathbb{N}}$ to $D=(D_n)_{n\in\mathbb{N}}$ is an $(\mathcal{A}\mid\mathcal{A}')$-measurable function $f:\mathcal{D}\to\mathcal{D}'$ such that, for every $n\in\mathbb{N}$, $f$ maps $n$-structures to $n$-structures, and $\lim\limits_{n \to \infty}|\mathrm{Pr}(D_n \models f^{-1}(\varphi')) - \mathrm{Pr}(D_n' \models \varphi')| = 0$ for every FO sentence $\varphi'$ over $\sigma'$.
\end{definition}

If there is a stochastic FO reduction from $D'$ to $D$, we say that $D'$ is {\it reducible} to $D$ (or sometimes we say that $D'_n$ is reducible to $D_n$ meaning of course a reduction of the entire sequences) and denote it as $D' \preceq D$ (or $D_n' \preceq D_n$). Let us first observe that the basic property of being a preorder, that holds for reductions in the computational complexity theory, holds for our reduction as well.

\begin{claim}\label{redpreorder}
The stochastic FO reduction relation $\preceq$ is a preorder.
\end{claim}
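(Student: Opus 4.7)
The plan is to verify the two defining properties of a preorder: reflexivity and transitivity. Both should follow from the definitions after unwinding them carefully; the only mild subtlety is keeping straight what $f^{-1}(\varphi')$ means on the syntactic side versus the set-theoretic side.

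For reflexivity, I will exhibit the identity map $\mathrm{id}\colon \mathcal{D}\to \mathcal{D}$ as a stochastic FO reduction from $D$ to itself. It trivially sends $n$-structures to $n$-structures. Measurability is immediate since $\mathrm{id}^{-1}(\mathcal{D}(\varphi))=\mathcal{D}(\varphi)\in\mathcal{A}$, so we may take $\mathrm{id}^{-1}(\varphi)=\varphi$. Then $\mathrm{Pr}(D_n\models \mathrm{id}^{-1}(\varphi))-\mathrm{Pr}(D_n\models \varphi)=0$ for every $n$, so the limit condition is satisfied.

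For transitivity, suppose $D''\preceq D'$ via $g\colon \mathcal{D}'\to \mathcal{D}''$ and $D'\preceq D$ via $f\colon \mathcal{D}\to \mathcal{D}'$. I will show that $h:=g\circ f\colon \mathcal{D}\to \mathcal{D}''$ is a stochastic FO reduction from $D''$ to $D$. The map $h$ obviously sends $n$-structures to $n$-structures. For measurability, given a FO sentence $\varphi''$ over $\sigma''$, set $\psi:=g^{-1}(\varphi'')$ (a FO sentence over $\sigma'$ by hypothesis) and $\varphi:=f^{-1}(\psi)$ (a FO sentence over $\sigma$ by hypothesis); then
\[
h^{-1}(\mathcal{D}''(\varphi''))=f^{-1}(g^{-1}(\mathcal{D}''(\varphi'')))=f^{-1}(\mathcal{D}'(\psi))=\mathcal{D}(\varphi),
\]
so we may take $h^{-1}(\varphi''):=\varphi$, witnessing $(\mathcal{A}\mid \mathcal{A}'')$-measurability. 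For the probabilistic condition, I will apply the triangle inequality:
\[
|\mathrm{Pr}(D_n\models h^{-1}(\varphi''))-\mathrm{Pr}(D''_n\models \varphi'')|
\leq |\mathrm{Pr}(D_n\models f^{-1}(\psi))-\mathrm{Pr}(D'_n\models \psi)|+|\mathrm{Pr}(D'_n\models g^{-1}(\varphi''))-\mathrm{Pr}(D''_n\models \varphi'')|.
\]
The first summand tends to $0$ because $f$ witnesses $D'\preceq D$ applied to the sentence $\psi$, and the second tends to $0$ because $g$ witnesses $D''\preceq D'$ applied to $\varphi''$. Hence the left-hand side tends to $0$, completing the verification.

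I do not expect any real obstacle here; the argument is essentially bookkeeping. The only point worth stressing is that the hypothesis ``$(\mathcal{A}\mid \mathcal{A}')$-measurable'' is exactly what guarantees that $g^{-1}(\varphi'')$ and then $f^{-1}(g^{-1}(\varphi''))$ remain first-order definable, which is what makes the two reduction conditions chainable in the triangle-inequality step.
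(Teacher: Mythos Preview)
Your proof is correct and follows essentially the same approach as the paper: reflexivity via the identity map with $\mathrm{id}^{-1}(\varphi)=\varphi$, and transitivity via the composition $g\circ f$ together with the triangle inequality applied to the intermediate sentence $g^{-1}(\varphi'')$. The only difference is that you spell out the measurability verification for the composition slightly more explicitly than the paper does.
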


\begin{proof}
Firstly, we prove that $\preceq$ is reflexive. Let we have a random relational structure $D_n$. We define a stochastic FO reduction by the identity mapping $id: \mathcal{D} \to \mathcal{D}$. It is clear that this mapping is $(\mathcal{A}\mid\mathcal{A})$-measurable, maps $n$-structures to $n$-structures, and satisfies $\lim\limits_{n \to \infty} |\mathrm{Pr}(D_n \models id^{-1}(\varphi)) - \mathrm{Pr}(D_n \models \varphi)| = 0$, implying the reflexivity.

Next, we prove the transitivity. Let we have two stochastic FO reductions $f: \mathcal{D} \to \mathcal{D}'$ from $D'_n$ to $D_n$ and $g: \mathcal{D}' \to \mathcal{D}''$ from $D''_n$ to $D'_n$. As a reduction from $D''_n$ to $D_n$ we use the composition $g \circ f$. A composition of $(\mathcal{A}\mid\mathcal{A}')$-measurable and $(\mathcal{A}'\mid\mathcal{A}'')$-measurable functions is $(\mathcal{A}\mid\mathcal{A}'')$-measurable. Also, a composition of two functions which map $n$-structures to $n$-structures, for each $n \in \mathbb{N}$, satisfies the same property. Finally,
\begin{align*}
0 & \leq \lim\limits_{n \to \infty}|\mathrm{Pr}(D_n \models (g \circ f)^{-1}(\varphi)) - \mathrm{Pr}(D_n'' \models \varphi)| \\ 
 &\leq \lim\limits_{n \to \infty}|\mathrm{Pr}(D_n \models  f^{-1}(g^{-1}(\varphi))) - \mathrm{Pr}(D_n' \models g^{-1}(\varphi))| + \\ 
 &\quad\quad\quad\quad\quad\quad\quad\quad\quad\quad+ \lim\limits_{n \to \infty}|\mathrm{Pr}(D_n' \models g^{-1}(\varphi)) - \mathrm{Pr}(D_n'' \models \varphi)| = 0.
\end{align*}
Hence, we have
$$
\lim\limits_{n \to \infty}|\mathrm{Pr}(D_n \models (g \circ f)^{-1}(\varphi)) - \mathrm{Pr}(D_n'' \models \varphi)| = 0,
$$ 
that finishes the proof.
\end{proof}

Next, we show key property of stochastic FO reductions which allow us to transfer FO complexities between different random relational structures.

\begin{claim}\label{redFOC}
Suppose $D'_n \preceq D_n$. Then, $\mathrm{FOC}(D_n') \subseteq \mathrm{FOC}(D_n)$.
\end{claim}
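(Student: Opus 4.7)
The plan is to take an arbitrary element $v \in \mathrm{FOC}(D_n')$ and exhibit a FO sentence over $\sigma$ whose probability sequence under $D_n$ projects to $v$. This is essentially a direct unpacking of the definitions, so the proof should be a short $\pi$-chase rather than anything substantive.

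More concretely: by definition of FO complexity, any $v \in \mathrm{FOC}(D_n')$ has the form $v = \pi\bigl((\mathrm{Pr}(D_n' \models \varphi'))_{n \in \mathbb{N}}\bigr)$ for some FO sentence $\varphi'$ over $\sigma'$. The stochastic FO reduction $f$ then supplies a FO sentence $\varphi := f^{-1}(\varphi')$ over $\sigma$ with the defining property
\[
\lim_{n\to\infty}\bigl|\mathrm{Pr}(D_n \models \varphi) - \mathrm{Pr}(D_n' \models \varphi')\bigr| = 0.
\]
This says exactly that the sequence $(\mathrm{Pr}(D_n \models \varphi) - \mathrm{Pr}(D_n' \models \varphi'))_{n \in \mathbb{N}}$ lies in $c_0$, so the two sequences represent the same class in $\ell^{\infty}/c_0$. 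That is,
\[
\pi\bigl((\mathrm{Pr}(D_n \models \varphi))_{n \in \mathbb{N}}\bigr) = \pi\bigl((\mathrm{Pr}(D_n' \models \varphi'))_{n \in \mathbb{N}}\bigr) = v,
\]
and the left-hand side belongs to $\mathrm{FOC}(D_n)$ by definition, witnessing the inclusion $v \in \mathrm{FOC}(D_n)$.

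There is really no obstacle here; the hypothesis of stochastic FO reducibility was crafted precisely so that this pullback-of-sentences argument goes through. The only point worth a sentence in the write-up is observing that $\varphi = f^{-1}(\varphi')$ is indeed a FO sentence over $\sigma$ (which is guaranteed by $(\mathcal{A}\mid\mathcal{A}')$-measurability of $f$, since the preimage of $\mathcal{D}'(\varphi') \in \mathcal{A}'$ lies in $\mathcal{A}$ and is therefore of the form $\mathcal{D}(\varphi)$ for some FO $\varphi$). Since $\varphi'$ was arbitrary, every element of $\mathrm{FOC}(D_n')$ is obtained in this way, completing the claim.
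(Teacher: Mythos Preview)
Your proof is correct and essentially identical to the paper's own argument: both pick $v\in\mathrm{FOC}(D_n')$, write it as $\pi((\mathrm{Pr}(D_n'\models\varphi'))_n)$ for some FO $\varphi'$, and use the defining property of the reduction to conclude that $\pi((\mathrm{Pr}(D_n\models f^{-1}(\varphi')))_n)=v$. Your extra sentence justifying why $f^{-1}(\varphi')$ is FO via $(\mathcal{A}\mid\mathcal{A}')$-measurability is a nice clarification that the paper leaves implicit.
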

\begin{proof}
Let $f: \mathcal{D} \to \mathcal{D}'$ be a reduction from $D_n'$ to $D_n$. Consider a vector $v \in \mathrm{FOC}(D_n')$. There is a FO sentence $\varphi$ in the signature of $D_n'$ such that $v = (\mathrm{Pr}(D_n'\models\varphi))_{n\in\mathbb{N}} + c_0$. Since $f$ is the stochastic FO reduction, there is a FO sentence $f^{-1}(\varphi)$ in the signature of $D_n$ such that $\lim\limits_{n \to \infty}|\mathrm{Pr}(D_n \models f^{-1}(\varphi)) - \mathrm{Pr}(D_n' \models \varphi)| = 0$. Therefore, $v = (\mathrm{Pr}(D_n'\models\varphi))_{n\in\mathbb{N}} + c_0 = (\mathrm{Pr}(D_n\models f^{-1}(\varphi)))_{n\in\mathbb{N}} + c_0 \in \mathrm{FOC}(D_n)$.
\end{proof}

\begin{corollary}\label{corRedLaws}
Suppose $D'_n \preceq D_n$.
\begin{enumerate}
\item[(i)] If $D_n$ obeys the FO 0--1 law, then $D_n'$ obeys the FO 0--1 law as well.
\item[(ii)] If $D_n$ obeys the FO convergence law, then $D_n'$ obeys the FO convergence law as well.
\end{enumerate}
\end{corollary}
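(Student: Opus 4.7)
The plan is to obtain both parts as essentially immediate corollaries of Claim~\ref{redFOC} combined with the characterisations of the two limit laws given by Claim~\ref{lawsFOC}. No new analytic work seems to be required; the only minor subtlety is to make sure that the inclusion $\mathrm{FOC}(D_n')\subseteq\mathrm{FOC}(D_n)$ delivered by Claim~\ref{redFOC} is tight enough on both ends for part~(i).

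For part~(ii) the argument is a one-liner: by Claim~\ref{lawsFOC}(ii), $D_n$ obeying the FO convergence law is equivalent to $\mathrm{FOC}(D_n)\subset\mathbb{R}$, and by Claim~\ref{redFOC} this inclusion is inherited by $\mathrm{FOC}(D_n')$, which again by Claim~\ref{lawsFOC}(ii) means that $D_n'$ obeys the FO convergence law.

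For part~(i) I would combine Claims~\ref{redFOC} and~\ref{lawsFOC}(i) in the same way: $\mathrm{FOC}(D_n)=\{0,1\}$ gives $\mathrm{FOC}(D_n')\subseteq\{0,1\}$. To conclude equality (and thus the FO 0--1 law for $D_n'$ via Claim~\ref{lawsFOC}(i)), I would just observe that for any random $n$-structure the FO tautology $\forall x\,(x=x)$ contributes the constant vector $1$ to $\mathcal{D}$ and its negation contributes $0$, so that $\{0,1\}\subseteq\mathrm{FOC}(D_n')$ holds unconditionally. Together with the inclusion from Claim~\ref{redFOC}, this forces $\mathrm{FOC}(D_n')=\{0,1\}$, which is the desired FO 0--1 law.

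Since the entire argument amounts to chaining two already-established claims, there is no real obstacle; the only thing to watch is not to mistake $\mathrm{FOC}(D_n')\subseteq\mathrm{FOC}(D_n)$ for the wrong direction, and to supply the trivial lower-bound argument for (i) so that the conclusion is an equality $\mathrm{FOC}(D_n')=\{0,1\}$ rather than merely $\mathrm{FOC}(D_n')\subseteq\{0,1\}$ (which, strictly speaking, is already enough by the ``if'' direction of Claim~\ref{lawsFOC}(i), but the symmetric formulation makes the reasoning more transparent).
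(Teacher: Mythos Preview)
Your proposal is correct and follows essentially the same approach as the paper: chain Claim~\ref{redFOC} with Claim~\ref{lawsFOC} to obtain $\mathrm{FOC}(D_n')\subseteq\mathrm{FOC}(D_n)\subseteq\{0,1\}$ (respectively $\subseteq\mathbb{R}$) and then invoke Claim~\ref{lawsFOC} again. Your additional remark that $\{0,1\}\subseteq\mathrm{FOC}(D_n')$ holds unconditionally is a harmless clarification the paper leaves implicit.
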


\begin{proof}
By Claim~\ref{lawsFOC} and Claim~\ref{redFOC}, we have that $\mathrm{FOC}(D_n') \subseteq \mathrm{FOC}(D_n) \subseteq \{0, 1\}$, for the case (i), and $\mathrm{FOC}(D_n') \subseteq \mathrm{FOC}(D_n) \subseteq \mathbb{R}$, for the case (ii). Then, by Claim~\ref{lawsFOC}, we obtain the assertion of the corollary.
\end{proof}

The opposite statement of Claim~\ref{redFOC} is false. Indeed, let $\sigma = \sigma' = \{U, V\}$, where $U$ and $V$ have arity $1$. We define distributions of $D_n$ and $D'_n$ in the following way:

\begin{itemize}
\item 
Let $P_n$ be the unique $n$-structure that satisfies $\forall x \,\, U(x) \wedge V(x)$. Set $\mathrm{Pr}(D'_n = P_n) = \mathrm{Pr}(D_n = P_n) = \frac{1}{4}$.

\item
Let $Q_n$ be the unique $n$-structure that satisfies $\forall x \,\, U(x) \wedge \neg V(x)$. Set $\mathrm{Pr}(D'_n = Q_n) = \frac{1}{4}$ and $\mathrm{Pr}(D_n = Q_n) = \frac{1}{2}$.

\item
Let $R_n$ be the unique $n$-structure that satisfies $\forall x \,\, \neg U(x) \wedge V(x)$. Set $\mathrm{Pr}(D'_n = R_n) = \frac{1}{4}$ and $\mathrm{Pr}(D_n = R_n) = 0$.

\item 
Let $S_n$ be the unique $n$-structure that satisfies $\forall x \,\, \neg U(x) \wedge \neg V(x)$. Set $\mathrm{Pr}(D'_n = S_n) = \mathrm{Pr}(D_n = S_n) = \frac{1}{4}$.
\end{itemize}

For each family of relational structures $\{P_n\}_{n\in\mathbb{N}}, \{Q_n\}_{n\in\mathbb{N}}, \{R_n\}_{n\in\mathbb{N}}, \{S_n\}_{n\in\mathbb{N}}$, we have that each two elements of one family are elementary equivalent (note that the equality is not included in signatures). Therefore, for each FO sentence $\varphi$ over $\sigma$, we have that the probabilities $\mathrm{Pr}(D_n \models \varphi)$ and $\mathrm{Pr}(D'_n \models \varphi)$ do not depend on $n$. Since almost surely $D'_n$ has four values, and $D_n$ has three values, complexities of these random structures equal sets of sums of elements of sub(multi)sets of the multisets $\{\!\!\{\frac{1}{4}, \frac{1}{4}, \frac{1}{4}, \frac{1}{4}\}\!\!\}$ and $\{\!\!\{\frac{1}{4}, \frac{1}{2}, \frac{1}{4}\}\!\!\}$ respectively. Hence, $\mathrm{FOC}(D_n) = \mathrm{FOC}(D_n') = \{ 0, \frac{1}{4}, \frac{1}{2}, \frac{3}{4}, 1 \}$.

Suppose, we have a stochastic FO reduction $f$ from $D'$ to $D$. It implies that $f^{-1}(P_n)$, $f^{-1}(Q_n)$, $f^{-1}(R_n)$, $f^{-1}(S_n)$ divide the set $\{P_n, Q_n, S_n\}$ into disjoint parts, and therefore one of them is empty. Without loss of generality $f^{-1}(P_n) \cap \{P_n, Q_n, S_n\}$ is empty. Then, $\mathrm{Pr}(D_n' \in f^{-1}(P_n)) = 0$ and $\lim\limits_{n \to \infty}|\mathrm{Pr}(D_n' \in f^{-1}(P_n)) - \mathrm{Pr}(D_n = P_n)| = \frac{1}{4} \not = 0$, contradiction.

\subsection{Application to FO zero-one laws for dense structures}
\label{sc:SFOR_dense}

Let us now use stochastic FO reductions to transfer FO 0--1 laws.

\subsubsection{Random graphs}

We consider the signature $\{=,\rightarrow\}$, where $\rightarrow$ has arity 2, and denote by $\vec{G}(n, p)$ the binomial random directed graph without loops on the set of vertices $[n]$, i.e. every directed edge (out of the set of $n(n-1)$ possible edges) appears independently of the others with probability $p$.

\begin{theorem}\label{redDigraphs}
Let $p\in(0,1)$ (not necessarily a constant). There are stochastic FO reductions $G(n, p^2) \preceq \vec{G}(n, p) \preceq D^\sigma(n, p)$, where $\sigma = \{=, P\}$. In particular, for a constant  $p \in (0, 1)$, $\vec{G}(n, p)$ obeys the FO 0--1 law.
\end{theorem}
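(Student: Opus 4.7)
\medskip

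The plan is to exhibit both reductions by explicit natural maps whose pushforward distributions coincide exactly (not just asymptotically) with the target ones, which makes the probability-difference condition trivially satisfied.

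For the reduction $\vec{G}(n,p) \preceq D^\sigma(n,p)$, I would define $g : \mathcal{D}^\sigma \to \vec{\mathcal{D}}$ sending a $\sigma$-structure $S$ on $[n]$ to the directed loopless graph with arc set $\{(u,v) : u\neq v,\ S\models P(u,v)\}$. Measurability with respect to the algebras $\mathcal{A}$ and $\mathcal{A}_{\vec{\,\,\,}}$ is witnessed by the syntactic substitution: for any FO sentence $\varphi'$ over $\{=,\to\}$, take $g^{-1}(\varphi')$ to be $\varphi'$ with every atom $x\to y$ replaced by $P(x,y)\wedge\neg(x=y)$. Under $D^\sigma(n,p)$ the relation $P$ is chosen by $n^2$ independent Bernoulli$(p)$ bits; the map $g$ simply discards the $n$ diagonal bits, so the pushforward $g_*D^\sigma(n,p)$ is literally $\vec{G}(n,p)$. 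Consequently $\mathrm{Pr}(D^\sigma(n,p)\models g^{-1}(\varphi'))=\mathrm{Pr}(\vec{G}(n,p)\models\varphi')$ for \emph{every} $n$, a fortiori in the limit.

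For the reduction $G(n,p^2) \preceq \vec{G}(n,p)$, I would define $f : \vec{\mathcal{D}} \to \mathcal{G}$ sending a directed loopless graph to the undirected graph on $[n]$ in which $\{u,v\}$ is an edge iff both arcs $u\to v$ and $v\to u$ are present (i.e.\ keep the ``mutual'' edges and symmetrise). Again $f$ is $(\mathcal{A}_{\vec{\,\,\,}}\mid\mathcal{A}_{\mathrm{gr}})$-measurable via the substitution $x\sim y \mapsto (x\to y)\wedge(y\to x)$. Since the $n(n-1)$ arcs of $\vec{G}(n,p)$ are independent Bernoulli$(p)$, each unordered pair $\{u,v\}$ becomes an edge of $f(\vec{G}(n,p))$ independently with probability $p\cdot p = p^2$; thus $f_*\vec{G}(n,p) = G(n,p^2)$ exactly. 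So $\mathrm{Pr}(\vec{G}(n,p)\models f^{-1}(\varphi'))=\mathrm{Pr}(G(n,p^2)\models\varphi')$ on the nose.

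Composing via Claim~\ref{redpreorder} yields $G(n,p^2)\preceq D^\sigma(n,p)$, but the two reductions as stated are what the theorem asks for. For the final assertion, fix constant $p\in(0,1)$; the signature $\sigma=\{=,P\}$ with one binary predicate and probability $p$ falls under the FO $0$--$1$ law for $D^{(d_1,\ldots,d_s)}(n,p_1,\ldots,p_s)$ at constant probabilities stated in Section~\ref{sc:intro} (Glebskii--Kogan--Liogonkii--Talanov / Fagin). Applying Corollary~\ref{corRedLaws}(i) to $\vec{G}(n,p)\preceq D^\sigma(n,p)$ then transfers the $0$--$1$ law to $\vec{G}(n,p)$. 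There is essentially no obstacle here: since both pushforward distributions match the target exactly, the bulk of the proof is just specifying the two maps and verifying that the syntactic substitution realises $(\mathcal{A}\mid\mathcal{A}')$-measurability; the mild subtlety worth spelling out is that the ambient probability $p$ need not be constant for the reductions themselves to exist (the $0$--$1$ conclusion only uses constant $p$).
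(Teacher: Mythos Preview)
Your proposal is correct and essentially identical to the paper's own proof: the paper defines exactly the same two maps (delete loops via $P(x,y)\wedge\neg(x=y)$, then symmetrise via $(x\to y)\wedge(y\to x)$), observes that the pushforward distributions coincide exactly with $\vec G(n,p)$ and $G(n,p^2)$ respectively, and then invokes Corollary~\ref{corRedLaws} together with the FO 0--1 law for $D^\sigma(n,p)$ at constant $p$. The only cosmetic difference is that the paper names the maps $f$ and $g$ in the opposite order.
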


\begin{proof}
Define the reduction from $\vec{G}(n, p)$ to $D^\sigma(n, p)$ as the function $f$ which deletes loops from a directed graph. This mapping can be ``defined" by the FO formula $\psi_{\rightarrow}(x, y) = P(x, y) \wedge \neg(x = y)$. So, for each FO sentence $\varphi$ over signature $\{=, \rightarrow\}$, the FO sentence $f^{-1}(\varphi)$ is constructed from $\varphi$ by replacing each $x \rightarrow y$ by $\psi_{\rightarrow}(x, y)$. Note that $f(D^{\sigma}(n,p))\stackrel{d}=\vec{G}(n,p)$. Thus, $\mathrm{Pr}(D^\sigma(n, p) \models f^{-1}(\varphi)) = \mathrm{Pr}(\vec{G}(n, p) \models \varphi)$ and $f$ is indeed a stochastic FO reduction.

From Corollary~\ref{corRedLaws} and the fact that $D^{\sigma}(n, p)$ obeys the FO 0--1 law, we have that $\vec{G}(n, p)$ obeys the FO 0--1 law as well.

Let $g$ be a mapping from directed graphs without loops to undirected graphs which replaces pairs of directed edges $(x, y)$ and $(y, x)$ by an undirected edges $\{x, y\}$ and deletes all the other directed edges. The mapping $g$ can be defined by the FO formula $\psi_{\sim}(x, y) = (x \rightarrow y) \wedge (y \rightarrow x)$. Then, as in the previous case, the FO sentence $g^{-1}(\varphi)$ is obtained by replacing all $x \sim y$ by $\psi_{\sim}$. Furthermore, $g(\vec{G}(n,p))\stackrel{d}=G(n,p)$ because the presence of an undirected edge $\{x, y\}$ depends only on the presence of two directed edges $(x, y)$ and $(y, x)$, and the probability that both of them are presented is $p^2$. So, we have the equality $\mathrm{Pr}(\vec{G}(n, p) \models g^{-1}(\varphi)) = \mathrm{Pr}(G(n, p^2) \models \varphi)$ which finishes the proof.
\end{proof}

Let us stress once again that in \cite{Fa} FO 0--1 laws for the binomial random graph and for $D^{\sigma}(n,p)$ are proven separately. Due to Theorem~\ref{redDigraphs}, the validity of the FO 0--1 law for $D^{\sigma}(n,p)$ for all constant $p$ implies its validity for $G(n,p)$ for all constant $p$ as well.

Actually, Theorem~\ref{redDigraphs} admits a generalisation to arbitrary signatures and distributions. We state it below, and then use for other particular reductions. It is straightforward that Theorem~\ref{redDigraphs} follows immediately from this claim.

As above, we consider two relational signatures $\sigma,\sigma'$ and respective sequences of random structures $D,D'$. For every $P\in\sigma'$ of arity $a$, assume that we are given with a FO formula $\psi_P$ of arity $a$ over $\sigma$. Let $f:\mathcal{D}\to\mathcal{D}'$ be defined as follows: for every $P\in\sigma'$ and every $X\in\mathcal{D}$, we set $f(X)\models P(x_1,\ldots,x_a)$ if and only if $X\models\psi_P(x_1,\ldots,x_a)$. We call $f$ a {\it reduction defined by $(\psi_P,\,P\in\sigma')$}. 

\begin{claim}\label{redFormula}
Let $D$ be a random structure over $\sigma$, and let $f$ be a reduction defined by $(\psi_P,\,P\in\sigma')$. Letting $D'\stackrel{d}=f(D)$, we get that $D'\preceq D$ and $f$ reduces $D'$ to $D$.
\end{claim}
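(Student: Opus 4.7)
The plan is to verify directly the three defining properties of a stochastic FO reduction for the map $f$. The first two are almost immediate: $f(X)$ has the same domain $[n]$ as $X$ (only the interpretations of the symbols of $\sigma'$ are redefined), so $f$ maps $n$-structures to $n$-structures; and for $(\mathcal{A}\mid\mathcal{A}')$-measurability it suffices to check that $f^{-1}(\mathcal{D}'(\varphi'))$ lies in $\mathcal{A}$ for every FO sentence $\varphi'$ over $\sigma'$.

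The central step is to construct explicitly, for every FO formula $\varphi'(x_1,\ldots,x_k)$ over $\sigma'$, a FO formula $f^{-1}(\varphi')$ over $\sigma$, by induction on the structure of $\varphi'$: replace each atomic subformula $P(x_{i_1},\ldots,x_{i_a})$ with $P\in\sigma'$ by the given $\psi_P(x_{i_1},\ldots,x_{i_a})$, and commute the substitution with Boolean connectives and quantifiers. A routine induction on $\varphi'$ then yields
$$
f(X)\models\varphi'(a_1,\ldots,a_k)\;\iff\; X\models f^{-1}(\varphi')(a_1,\ldots,a_k)
$$
for every $X\in\mathcal{D}$ and every $a_1,\ldots,a_k$ in its domain. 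Specialising to $k=0$ gives $f^{-1}(\mathcal{D}'(\varphi'))=\mathcal{D}(f^{-1}(\varphi'))\in\mathcal{A}$, which establishes measurability and also pins down the sentence $f^{-1}(\varphi')$ whose existence is invoked in the definition of a reduction.

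The third property then comes for free from $D'_n\stackrel{d}= f(D_n)$:
$$
\mathrm{Pr}(D_n\models f^{-1}(\varphi'))=\mathrm{Pr}(f(D_n)\in\mathcal{D}'(\varphi'))=\mathrm{Pr}(D'_n\models\varphi'),
$$
so $|\mathrm{Pr}(D_n\models f^{-1}(\varphi'))-\mathrm{Pr}(D'_n\models\varphi')|$ is in fact identically zero for every $n$, which is strictly stronger than the $n\to\infty$ vanishing required. Combining the three items gives $D'\preceq D$ with $f$ as the witnessing reduction.

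The only mildly delicate point I foresee is variable capture during the syntactic substitution: when some $\psi_P$ is inserted under a quantifier of $\varphi'$, bound variables of $\psi_P$ must be $\alpha$-renamed to avoid clashing with free or quantified variables of the surrounding formula. This is a standard hygienic manoeuvre when substituting formulas for atomic formulas and presents no real difficulty, but it should be mentioned. Beyond that, the claim is essentially the observation that maps defined pointwise by FO formulas are automatically morphisms of the algebras $\mathcal{A}$ and $\mathcal{A}'$ generated by FO-definable sets, with $\varphi'\mapsto f^{-1}(\varphi')$ providing the corresponding syntactic translation.
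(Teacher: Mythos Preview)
Your proposal is correct and follows essentially the same approach as the paper: construct $f^{-1}(\varphi')$ by syntactically replacing each atomic $P$ by $\psi_P$, which gives measurability, and then use $D'\stackrel{d}=f(D)$ to get exact equality of probabilities. Your version is simply more detailed (explicit induction, the $n$-structure preservation, and the remark on variable capture), but the underlying argument is identical.
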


\begin{proof}
In order to see that $f$ is $(\mathcal{A}\mid\mathcal{A}')$-measurable, it is sufficient to observe that, for every FO $\varphi'$ over $\sigma'$, $f^{-1}(\varphi')$ is obtained from $\varphi'$ by replacing each $P$ by $\psi_P$. Since the distribution of $D'$ coincides with the distribution of $f(D)$, we have the equality $\lim\limits_{n \to \infty}|\mathrm{Pr}(D \models f^{-1}(\varphi)) - \mathrm{Pr}(D' \models \varphi)| = 0$ which finishes the proof.
\end{proof}

A similar to $f$ object which is called {\it a FO translation} appeared in~\cite{I83} and was used for reductions between languages (for more details, we refer a reader to the book~\cite{I12}, where the concept of {\it FO reductions} is introduced and its properties are described). \\

We denote by $G_{loop}(n, p)$ the binomial random undirected graph which allows loops on the set of vertices $[n]$, i.e. every edge (out of the set of $\frac{n(n+1)}{2}$ possible edges) appears independently of the others with probability $p$. We consider this structure over the signature $\{=, \sim\}$. The proof by reduction of the FO 0--1 law for this random structure (with constant $p$) is not as straightforward as for $\vec{G}(n, p)$. Indeed, if we apply a reduction to the $D^\sigma(n, \sqrt{p})$ defined by the FO formula $P(x, y) \wedge P(y, x)$, then we get a random undirected graph which allows loops, but the probability of the presence of a loop is $p$, while a non-loop edge has the emergence probability $p^2 \not = p$. 

Let us denote by $G'_{loop}(n, p, q)$ the random undirected graph which allows loops on the set of vertices $[n]$, but over the signature $\{=, \sim', L\}$, where $\sim'$ has arity $2$ and $L$ has arity $1$. The predicate $\sim'$ expresses the presence of non-loop undirected edges, the predicate $L$ expresses the presence of loops. The distribution of $G'_{loop}(n, p, q)$ is such that each edge appears independently with probability $q$ if it is a loop, and $p$ otherwise. To prove the FO 0--1 law for $G_{loop}(n, p)$, we prove the equivalence between this structure and $G'_{loop}(n, p, p)$. The next claim is a generalisation of this statement. 

Since stochastic FO reductions define a preorder, they induce an equivalence relation on random relational structures: we call $D_n$ and $D_n'$ {\it equivalent} if $D_n \preceq D_n'$ and $D_n' \preceq D_n$. For our purposes, it is useful to have a ``loopless'' representative in every equivalence class, which is defined below. 

\begin{definition}
Let $\sigma = \{=, P_1, \dots, P_s\}$ be a signature, where $P_i$ is a predicate symbol of arity $a_i$. Let $D_n$ be a random $n$-structure over the signature $\sigma$. The random structure $D_n$ is called \textbf{loopless} if, for all $1 \leq i \leq s$ and $1 \leq j < k \leq a_i$, the following equality holds:
$$
\mathrm{Pr}(D_n \models \forall x_1 \dots \forall x_{a_i} \,\, (x_j = x_k) \Rightarrow \neg P_i(x_1, \dots, x_{a_i})) = 1.
$$
\end{definition}

\begin{claim}\label{redloopless}
For each random relational structure $D_n$ over the signature $\sigma$, there is a loopless random relational structure $D^*_n$ which is equivalent to $D_n$. Moreover, $G_{loop}(n, p)$ is equivalent to $G'_{loop}(n, p, p)$.
\end{claim}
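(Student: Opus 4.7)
The plan is to reduce the general claim to a single uniform construction and then specialise it to graphs. Given $\sigma=\{=,P_1,\ldots,P_s\}$ with $P_i$ of arity $a_i$, I enrich $\sigma$ to a signature $\sigma^*$ containing, for each $i$ and each equivalence relation $\pi$ on $\{1,\ldots,a_i\}$ (with blocks $B_1,\ldots,B_{k(\pi)}$ ordered by their minimal elements), a fresh predicate $P_{i,\pi}$ of arity $k(\pi)$. The intended reading is that $P_{i,\pi}(x_1,\ldots,x_{k(\pi)})$ holds iff the $x_u$ are pairwise distinct and $P_i$ holds on the $a_i$-tuple $(x_{j(1)},\ldots,x_{j(a_i)})$, where $j(\ell)$ is the index of the $\pi$-block containing $\ell$.

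I then build two reductions via Claim~\ref{redFormula}. Setting
$$\psi_{P_{i,\pi}}(x_1,\ldots,x_{k(\pi)})=\bigwedge_{1\leq u<v\leq k(\pi)}\neg(x_u=x_v)\,\wedge\, P_i(x_{j(1)},\ldots,x_{j(a_i)})$$
defines a map $f\colon\mathcal{D}\to\mathcal{D}^*$ whose image $D^*_n:=f(D_n)$ is loopless by construction. In the other direction, for each equivalence relation $\pi$ on $\{1,\ldots,a_i\}$ let $\chi_\pi(y_1,\ldots,y_{a_i})$ be the formula asserting that $y_\ell=y_{\ell'}$ iff $\ell\sim_\pi\ell'$. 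These $\chi_\pi$ are mutually exclusive and exhaust all possibilities, so
$$\psi_{P_i}(y_1,\ldots,y_{a_i})=\bigvee_\pi\bigl(\chi_\pi(y_1,\ldots,y_{a_i})\,\wedge\, P_{i,\pi}(y_{\min B_1},\ldots,y_{\min B_{k(\pi)}})\bigr)$$
defines a map $g\colon\mathcal{D}^*\to\mathcal{D}$ for which $g\circ f$ is the identity on $\sigma$-structures. Consequently $g(D^*_n)\stackrel{d}{=}D_n$, and Claim~\ref{redFormula} applied twice yields $D_n\preceq D^*_n\preceq D_n$.

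For the second assertion, I specialise to $\sigma=\{=,\sim\}$. Only two equivalence relations on $\{1,2\}$ exist, producing predicates of arities $2$ and $1$ which I identify with $\sim'$ and $L$ of $\sigma'$. The general formulas then collapse to $\psi_{\sim'}(x,y)=(x\sim y)\wedge\neg(x=y)$, $\psi_L(x)=(x\sim x)$, and $\psi_\sim(x,y)=((x\sim' y)\wedge\neg(x=y))\vee((x=y)\wedge L(x))$. Since in $G_{loop}(n,p)$ loops and non-loop edges are mutually independent Bernoulli$(p)$ variables, one reads off that $f(G_{loop}(n,p))\stackrel{d}{=}G'_{loop}(n,p,p)$, and the corresponding equality under $g$ gives the converse, yielding the claimed equivalence.

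I do not anticipate a real obstacle: the construction is essentially bookkeeping, and the key conceptual point is simply that ``equality types'' partition the tuples on which each relation is interpreted. The only delicacy is choosing the representatives $\min B_j$ consistently in $\psi_{P_i}$ and $\psi_{P_{i,\pi}}$ so that $g\circ f$ literally equals the identity on structures rather than merely agreeing in distribution.
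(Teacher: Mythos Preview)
Your proposal is correct and follows essentially the same approach as the paper: both introduce, for each relation $P_i$ and each partition (equivalently, equivalence relation) on $[a_i]$, a new predicate of the corresponding arity, define the forward map $f$ by restricting to distinct tuples, define the backward map $g$ by a disjunction over partitions, and then specialise to the two partitions of $\{1,2\}$ to recover $G'_{loop}(n,p,p)$. The only cosmetic difference is that you fix the block representatives as minima and explicitly verify $g\circ f=\mathrm{id}$, whereas the paper leaves the choice of section $\beta'$ arbitrary and the verification implicit.
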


\begin{proof}
Let us give some auxiliary definitions.
\begin{itemize}
\item Let $\mathcal{B} = \{B_1, \dots, B_t\}$ be a partition of the set $[a_i]$, for some $1 \leq t \leq a_i$.
\item Let $P_i^\mathcal{B}$ be a predicate symbol of arity $t$.
\item Let $\beta: [a_i] \to [t]$ be the mapping such that $k \in B_{\beta(k)}$.
\item Let $\beta': [t] \to [a_i]$ be a mapping such that $\beta(\beta'(k))=k$.
\item Let signature $\bar{\sigma}$ consist of $=$ and $P_i^\mathcal{B}$ over all $i$ and $\mathcal{B}$.
\item Let $\bar{\mathcal{D}}$ be the set of all relational structures over the signature~$\bar{\sigma}$.
\end{itemize}
Let $f: \mathcal{D} \to \bar{\mathcal{D}}$ be a function which maps each structure over the signature $\sigma$ to a structure over the signature $\bar{\sigma}$ by assigning each $P_i^\mathcal{B}(x_1, \dots, x_t)$ the value $P_i(x_{\beta(1)}, \dots, x_{\beta(a_i)})$, if $x_j$ are pairwise distinct, and zero, otherwise. We let the distribution of the random structure $D^*_n$ be induced by $f$ on $\bar{\mathcal{D}}$ from the probability distribution on $\mathcal{D}$.
 It is clear, that $D^*_n$ is loopless because every $P_i^\mathcal{B}(x_1, \dots, x_t)$ in the image of $f$ is zero on each tuple $(x_1, \dots, x_t)$ with two coinciding $x_j = x_k$.
By Claim~\ref{redFormula}, $f$ is a stochastic FO reduction defined by formulae
$$
P_i(x_{\beta(1)}, \dots, x_{\beta(a_i)}) \wedge \bigwedge_{1 \leq j_1 < j_2 \leq t}\neg(x_{j_1} = x_{j_2}).
$$
Also, we have a reduction $g: \bar{\mathcal{D}} \to \mathcal{D}$ defined by formulae
$$
P_i(x_1, \dots, x_{a_i}) = \bigvee_{\mathcal{B}} \left(P_i^{\mathcal{B}}(x_{\beta'(1)}, \dots, x_{\beta'(t)}) \wedge \bigwedge_{1 \leq j_1 < j_2 \leq t}\neg(x_{\beta'(j_1)} = x_{\beta'(j_2)})\right).
$$
For $G_{loop}(n, p)$, it returns the structure $G'_{loop}(n, p, p)$.
\end{proof}


Let $\tau$ be a signature $\{=, P, L\}$ where $P$ has arity $2$ and $L$ has arity $1$. There is a stochastic FO reduction $G'_{loop}(n, p, q)$ to $D^{\tau}(n, \sqrt{p}, q)$ defined by the FO formulae $P(x, y) \wedge P(y, x) \wedge \neg(x = y)$ and $L(x)$. Hence, by Corollary~\ref{corRedLaws}, Claim~\ref{redpreorder} and Claim~\ref{redloopless}, we have

\begin{theorem}
Let $p \in (0, 1)$ be a constant. The binomial undirected random graph which allows loops $G_{loop}(n, p)$ obeys the FO 0--1 law.
\label{th:0-1_loopless}
\end{theorem}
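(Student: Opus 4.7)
The plan is to transfer the FO 0--1 law along a short chain of stochastic FO reductions ending at a structure for which the 0--1 law is already known classically.

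First I would apply Claim~\ref{redloopless} to obtain that $G_{loop}(n,p)$ is equivalent to the loopless structure $G'_{loop}(n,p,p)$; in particular $G_{loop}(n,p) \preceq G'_{loop}(n,p,p)$. Next I would invoke the reduction set up in the paragraph preceding the theorem, defined via the formulas $P(x,y) \wedge P(y,x) \wedge \neg(x=y)$ and $L(x)$, which together with Claim~\ref{redFormula} witness $G'_{loop}(n,p,p) \preceq D^{\tau}(n,\sqrt{p},p)$. Composing these two reductions via transitivity (Claim~\ref{redpreorder}) yields $G_{loop}(n,p) \preceq D^{\tau}(n,\sqrt{p},p)$.

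Since $\sqrt{p}$ and $p$ both lie in $(0,1)$ and are constant, the target $D^{\tau}(n,\sqrt{p},p)$ satisfies the classical FO 0--1 law of Glebskii--Kogan--Liogonkii--Talanov and Fagin for binomial random relational structures with constant interpretation probabilities, as recalled in the introduction. Corollary~\ref{corRedLaws}(i) then lifts this 0--1 law back along the reduction, giving the FO 0--1 law for $G_{loop}(n,p)$, which is the desired conclusion.

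There is no substantive obstacle once the machinery of Section~\ref{sc:SFOR_def} and Claim~\ref{redloopless} are in hand. The only conceptual point worth flagging is why the chain needs both steps rather than a direct reduction $G_{loop}(n,p) \preceq D^{\sigma}(n,q)$ for a single parameter $q$: the naive choice $q=\sqrt{p}$ with the formula $P(x,y)\wedge P(y,x)$ produces non-loop edges with the correct probability $p$ but loops with the \emph{wrong} probability $\sqrt{p}$. Introducing $G'_{loop}$ and the enriched signature $\tau$ with an independent unary loop predicate $L$ at probability $p$ is precisely what decouples these two probabilities and lets the reduction go through.
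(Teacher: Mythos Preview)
Your proof is correct and follows essentially the same route as the paper: use Claim~\ref{redloopless} to pass from $G_{loop}(n,p)$ to $G'_{loop}(n,p,p)$, then the formula-defined reduction to $D^{\tau}(n,\sqrt{p},p)$, and finally transfer the classical 0--1 law back via Claim~\ref{redpreorder} and Corollary~\ref{corRedLaws}(i). Your closing remark on why the intermediate step through $G'_{loop}$ is needed is exactly the obstruction the paper flags just before introducing that structure.
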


\subsubsection{Random hypergraphs}

We now switch to random $d$-uniform hypergraphs. It is known~\cite{GKLT, Fa} that oriented random hypergraphs obey the FO 0--1 law. Using the stochastic FO reduction, we show that this is also the case for unoriented random hypergraphs. Note that an unoriented hypergraph can be considered as an oriented hypergraph where all hyperedges that can be obtained from each other via a permutation $\sigma\in S_d$ are identified. Naturally, one may consider partially oriented hypergraphs by identifying hyperedges that belong to an $H$-orbit of an oriented hyperedge for some subgroup $H$ of $S_d$. Using reductions we prove the FO 0--1 law for all these hypergraphs too. Let us give an accurate definition.

Let $S_d$ be the group of permutations of $[d]$. We define the action of $S_d$ on a hyperedge $\{v_1,\ldots,v_d\}$ in the natural way: $g(v_1,\ldots,v_d) = (v_{g^{-1}(1)}, \dots, v_{g^{-1}(d)})$, $g\in S_d$. Let $H$ be a subgroup of $S_d$. 

\begin{definition}
A \textbf{$d$-uniform $H$-hypergraph} on $[n]$ is an oriented hypergraph without loops which is invariant under the action of $H$ on $[n]_d$, where $[n]_d$ is the set of all $d$-tuples of distinct vertices from $[n]$. An \textbf{$H$-hyperedge} of this graph is an orbit of its hyperedge under the action of $H$. A \textbf{binomial $d$-uniform $H$-hypergraph} $G^H(n, p)$ is a random structure over the signature $\sigma_H = \{=, P_H\}$, where $P_H$ has arity $d$, where every $H$-hyperedge is presented independently with probability $p$.
\end{definition}

A generalisation of the notion of $H$-oriented hypergraphs for structures with arbitrary number of relations was introduced in  \cite{L21} and studied in the context of convergence laws in sparse regimes. 

\begin{theorem}\label{groupRed}
Let $H$ be a subgroup of $S_d$, $K$ be a subgroup of $H$, and $p \in (0, 1)$ (not necessarily a constant). There is a stochastic FO reduction $G^H(n, p^{[H:K]}) \preceq G^K(n, p)$. In particular, for a constant $p \in (0, 1)$, $G^H(n, p)$ obeys the FO 0--1 law.
\end{theorem}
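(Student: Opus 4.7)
The plan is to realize the reduction as a FO-definable map $f$ from $K$-hypergraphs to $H$-hypergraphs, in the sense of Claim~\ref{redFormula}. Writing $m := [H:K]$, I would first fix a set of right coset representatives $g_1,\ldots,g_m$ of $K$ in $H$, so that $H = \bigsqcup_{i=1}^m K g_i$. The structural input is that the action of $S_d$ on $[n]_d$ is free (any $g\in S_d$ fixing a tuple of distinct entries is forced to be the identity), so every $H$-orbit of a tuple $\vec{x}\in[n]_d$ splits into exactly $m$ pairwise distinct $K$-orbits, represented by $g_1(\vec{x}),\ldots,g_m(\vec{x})$.

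With this in hand I would define the formula over $\sigma_K$
$$
\psi_{P_H}(x_1, \ldots, x_d)\ :=\ \Bigl(\bigwedge_{1 \le i < j \le d}\neg(x_i = x_j)\Bigr)\ \wedge\ \bigwedge_{i=1}^{m} P_K\bigl(g_i(x_1,\ldots,x_d)\bigr),
$$
and let $f$ be the reduction defined by $\psi_{P_H}$ as in Claim~\ref{redFormula}. The step I expect to be the main obstacle is checking that $\psi_{P_H}$ is $H$-invariant in its free variables inside every $K$-hypergraph $X$; without this, $f(X)$ would fail to be an $H$-hypergraph and the construction breaks down. The idea will be that for any $h\in H$ the tuple $(g_1 h,\ldots,g_m h)$ is another system of right coset representatives of $K\backslash H$, so there exist a permutation $\tau$ of $\{1,\dots,m\}$ and elements $k_i\in K$ with $g_i h = k_i g_{\tau(i)}$; the $K$-invariance of $P_K$ in $X$ then yields $P_K(g_i h(\vec{x})) = P_K(g_{\tau(i)}(\vec{x}))$, so the conjunction in $\psi_{P_H}$ is unaltered by $h$.

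Once $H$-invariance is established, verifying that $f(G^K(n,p))\stackrel{d}{=}G^H(n,p^m)$ will be routine: each $H$-hyperedge in $f(X)$ is present iff the $m$ distinct underlying $K$-hyperedges are present in $X$, which happens with probability $p^m$; distinct $H$-hyperedges decompose into disjoint collections of $K$-hyperedges and hence their presence events are mutually independent. Claim~\ref{redFormula} then delivers the desired reduction $G^H(n,p^m)\preceq G^K(n,p)$.

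For the ``in particular'' clause, I would apply the above with $K=\{\mathrm{id}\}$ and parameter $q := p^{1/|H|}\in(0,1)$, giving $G^H(n,p)\preceq G^{\{\mathrm{id}\}}(n,q)$. The structure $G^{\{\mathrm{id}\}}(n,q)$ is in turn FO-reducible to $D^{(d)}(n,q)$ via the ``delete loops'' reduction exactly as in the proof of Theorem~\ref{redDigraphs}, and $D^{(d)}(n,q)$ obeys the FO $0$--$1$ law for every constant $q\in(0,1)$ by \cite{GKLT,Fa}. Transitivity of $\preceq$ (Claim~\ref{redpreorder}) together with Corollary~\ref{corRedLaws} then transports this $0$--$1$ law to $G^H(n,p)$, finishing the argument.
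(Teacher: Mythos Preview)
Your proposal is correct and follows essentially the same approach as the paper: define the $H$-predicate as a conjunction of $K$-predicates taken along the $H$-orbit, verify that the pushforward of $G^K(n,p)$ is $G^H(n,p^{[H:K]})$, and then chain through $K=\{\mathrm{id}\}$ and the ``delete loops'' reduction to $D^{(d)}(n,q)$ for the $0$--$1$ law. The only cosmetic difference is that the paper conjuncts over all $g\in H$, namely $\bigwedge_{g\in H}P_K(x_{g^{-1}(1)},\ldots,x_{g^{-1}(d)})$, which makes $H$-invariance immediate by reindexing and spares the coset argument; your version over coset representatives is logically equivalent on $K$-hypergraphs and your $H$-invariance check via $g_i h = k_i g_{\tau(i)}$ is valid.
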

\begin{proof}
A reduction $f$ is defined by the formula
$$
\bigwedge_{g \in H}P_K(x_{g^{-1}(1)}, \dots, x_{g^{-1}(d)}).
$$
Indeed, $f(G^K(n, p))\stackrel{d}=G^H(n, p^{[H:K]})$ 
\begin{itemize}
\item
because the presence of an $H$-hyperedge $(x_1, \dots, x_d)$ in $f(G^K(n, p))$ depends only on the presence of $K$-hyperedges $(x_{g^{-1}(1)}, \dots, x_{g^{-1}(d)})$ in $G^K(n, p)$, for $g \in H$;
\item
the probability that all of them are presented is $p^{[H:K]}$ as, for each coset $gK \subseteq H$, the hyperedge $(x_{g^{-1}(1)}, \dots, x_{g^{-1}(d)})$ is presented with probability $p$, and there are exactly $[H:K]$ such cosets.
\end{itemize}

Moreover, we have a reduction $G^{\{id\}}(n, p) \preceq D^{\sigma}(n, p)$, where $\sigma=\{=, P\}$ and $P$ has an arity $d$. In the same way, as in the proof of Theorem~\ref{redDigraphs}, this reduction is defined by the formula 
$$
P(x_1, \dots, x_d) \wedge \bigwedge_{i \not = j}(x_i \not= x_j).
$$
Therefore, for each subgroup $H$ of $S_d$, there is a reduction from $G^H(n, p)$ to $D^\sigma(n, p^{|H|^{-1}})$. Since $D^\sigma(n, p^{|H|^{-1}})$ obeys the FO 0--1 law, for constant $p$, by the Corollary~\ref{corRedLaws}, $G^H(n, p)$ obeys the FO 0--1 law as well.
\end{proof}

\begin{remark} In order to give a better flavour of the FOC-hierar\-chy of random $H$-hypergraphs, we prove that conjugate subgroups of $S_d$ define the same random structure up to equivalence. 

\begin{theorem}
Let $H$ be a subgroup of $S_d$, $g \in S_d$, and $p \in (0, 1)$ (not necessarily a constant). Random structures $G^H(n, p)$ and $G^{gHg^{-1}}(n, p)$ are equivalent. 
\end{theorem}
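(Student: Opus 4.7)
The plan is to construct an explicit stochastic FO reduction in each direction, both given by the same coordinate-permutation formula. To ease notation, write $P$ and $P'$ for the arity-$d$ predicates of $\sigma_H$ and $\sigma_{gHg^{-1}}$ respectively. Define the reduction $f$ by the single formula
$$\psi_{P'}(x_1, \ldots, x_d) \;=\; P(x_{g(1)}, \ldots, x_{g(d)}),$$
so that a hyperedge $(x_1, \ldots, x_d)$ of $f(X)$ corresponds to the hyperedge $(x_{g(1)}, \ldots, x_{g(d)})$ of $X$.

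The main verification is that if $X$ is $H$-invariant then $f(X)$ is $gHg^{-1}$-invariant. Writing $k = g h g^{-1} \in gHg^{-1}$ and unwinding the definitions, the required invariance $P'^{f(X)}(x_{k^{-1}(1)}, \ldots, x_{k^{-1}(d)}) \Leftrightarrow P'^{f(X)}(x_1, \ldots, x_d)$ reduces, after cancelling a $g g^{-1}$ inside the permutation composition, to the $h$-invariance of $X$ applied to the tuple $(x_{g(1)}, \ldots, x_{g(d)})$. Since $g$ is a bijection on coordinates, distinctness of entries is preserved, so looplessness carries over; moreover $f$ induces a bijection between $H$-orbits and $gHg^{-1}$-orbits on loopless $d$-tuples from $[n]$.

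Under this orbit bijection, the independent Bernoulli$(p)$ structure on $H$-orbits in $G^H(n,p)$ is transported exactly to the independent Bernoulli$(p)$ structure on $gHg^{-1}$-orbits, so $f(G^H(n,p)) \stackrel{d}{=} G^{gHg^{-1}}(n,p)$; Claim~\ref{redFormula} then yields $G^{gHg^{-1}}(n,p) \preceq G^H(n,p)$. Since $H = g^{-1}(gHg^{-1})g$, running the identical construction with $(g, H)$ replaced by $(g^{-1}, gHg^{-1})$ supplies the reverse reduction, establishing the equivalence. There is no substantive obstacle; the one subtlety is keeping the conjugation bookkeeping straight when checking that coordinate permutation by $g$ sends the $H$-action to the $gHg^{-1}$-action, and this is a short algebraic calculation.
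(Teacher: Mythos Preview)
Your proof is correct and follows essentially the same approach as the paper: both construct the reduction via a single coordinate-permutation formula and then invoke symmetry in $(g,H)$ for the reverse direction. The only cosmetic difference is that the paper first exhibits the reduction $G^H(n,p)\preceq G^{gHg^{-1}}(n,p)$ using $\psi(x_1,\dots,x_d)=P_{gHg^{-1}}(x_{g^{-1}(1)},\dots,x_{g^{-1}(d)})$, whereas you first exhibit $G^{gHg^{-1}}(n,p)\preceq G^H(n,p)$ using $P(x_{g(1)},\dots,x_{g(d)})$; these are the same construction with the roles of $H,gHg^{-1}$ and $g,g^{-1}$ interchanged.
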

\begin{proof}
It is sufficient to show a stochastic FO reduction from $G^H(n, p)$ to $G^{gHg^{-1}}(n, p)$ for arbitrary $H$ and $g$. The reduction $f$ is defined by the formula $\psi(x_1, \dots, x_d) = P_{gHg^{-1}}(x_{g^{-1}(1)}, \dots, x_{g^{-1}(d)})$. Let us check that this formula defines an $H$-oriented graph. Consider a permutation $h \in H$.
\begin{align*}
\psi(x_{h^{-1}(1)}, \dots, x_{h^{-1}(d)}) & = P_{gHg^{-1}}(x_{h^{-1}(g^{-1}(1))}, \dots, x_{h^{-1}(g^{-1}(d))})  \\
& = P_{gHg^{-1}}(x_{g^{-1}((ghg^{-1})^{-1}(1))}, \dots, x_{g^{-1}((ghg^{-1})^{-1}(d))})  \\ 
& = P_{gHg^{-1}}(x_{g^{-1}(1)}, \dots, x_{g^{-1}(d)}) = \psi(x_1, \dots, x_d).
\end{align*}
The penultimate equality holds true due to the definition of $P_{gHg^{-1}}$. So, permutation $g$ maps each $gHg^{-1}$-orbit of $d$-tuples to an $H$-orbit, and therefore we have that $f(G^{gHg^{-1}}(n, p))\stackrel{d}=G^H(n, p)$.
\end{proof}

\end{remark}

\subsubsection{Parametric sentences}


{\it A parametric FO sentence} is a conjunction of sentences 
$$
\forall x_1 \,\, \forall x_2 \,\, \ldots \forall x_s \bigwedge_{1 \leq i < j \leq s} (x_i \neq x_j) \Rightarrow \psi,
$$
where $\psi$ is a quantifier-free formula, and all atomic formulas $P(y_1, \dots, y_t)$ appearing in $\psi$ satisfy $\{y_1, \dots, y_t\} = \{x_1, \dots, x_s\}$. 

\begin{theorem}[Oberschelp~\cite{Oberschelp}]
\label{OberThm}
For each parametric sentence $\varphi$ over a relational signature $\sigma$, the structure $D^\sigma (n \mid \varphi)$ obeys the FO 0--1 law.
\end{theorem}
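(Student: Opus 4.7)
Plan: I would prove Theorem~\ref{OberThm} by applying the stochastic FO reduction machinery of Section~\ref{sc:SFOR_def} to transfer the FO 0--1 law from a binomial random structure to $D^\sigma(n\mid\varphi)$.

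The key observation is that a parametric sentence imposes only \emph{local} constraints. Writing $\varphi=\bigwedge_{i=1}^{k}\forall x_1\ldots\forall x_{s_i}\,(\text{pairwise distinct})\Rightarrow\psi_i$, the parametricity condition forces every atom of $\psi_i$ to use all of $x_1,\ldots,x_{s_i}$ as its argument set. Hence the $i$-th clause constrains, on each $s_i$-subset $S\subseteq[n]$, only those atomic facts whose argument set equals $S$, and constraints on distinct subsets touch disjoint atomic facts. Therefore, under the unconditioned $D^\sigma(n,\tfrac12,\ldots,\tfrac12)$ the local validity events are independent across subsets, and $D^\sigma(n\mid\varphi)$ decomposes as an independent product (over $s_i$-subsets) of uniform distributions on valid local configurations; since the universal quantifier symmetrises each $\psi_i$, one may assume $S_{s_i}$-invariance without loss of generality.

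To construct the reduction, I would introduce an auxiliary signature $\tau$ whose predicates encode, for each subset size $s_i$ and each valid configuration index $j\in\{1,\ldots,M_i\}$, a Bernoulli random label $R^{(i)}_{j}$ of arity $s_i$, supplemented by a few auxiliary asymmetric predicates used for symmetry-breaking within $S_{s_i}$-orbits. Setting $p(R^{(i)}_{j})=1/(M_i-j+1)$ makes the ``first-matching-label'' index $\min\{j:R^{(i)}_j\text{ holds on }S\}$ uniform on $\{1,\ldots,M_i\}$ for each $s_i$-subset $S$ independently; the auxiliary asymmetric predicates then pick a uniform representative of the orbit singled out by this index. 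Defining each $P\in\sigma$ by a FO formula over $\tau$ that implements this recipe yields, via Claim~\ref{redFormula}, a reduction $f$ with $f(D^\tau(n,\bar p))\overset{d}{=}D^\sigma(n\mid\varphi)$, i.e.\ $D^\sigma(n\mid\varphi)\preceq D^\tau(n,\bar p)$. Corollary~\ref{corRedLaws} together with the classical FO 0--1 law for binomial random structures~\cite{GKLT,Fa} then gives the result; the hypothesis that the empty structure satisfies $\varphi$ ensures $M_i\geq 1$, so all probabilities are well defined.

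The main technical obstacle is the symmetry-breaking step. Valid configurations generally lie in nontrivial $S_{s_i}$-orbits, so a naive first-matching rule returns an orbit rather than an individual configuration; writing a FO formula that selects a representative with \emph{exactly} uniform probability requires a cascade of auxiliary asymmetric predicates to break ties, with careful probability bookkeeping to ensure that the induced marginal on $\sigma$-atoms of each $s_i$-subset matches the desired uniform distribution. An alternative, more direct route would be an adaptation of the extension-axiom strategy used in the proof of the classical FO 0--1 law, following Oberschelp's original argument~\cite{Oberschelp}; however, that bypasses the reduction framework the paper is designed to illustrate.
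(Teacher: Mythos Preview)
Your plan coincides with what the paper actually carries out, but only for a \emph{special case}: the paper does not prove Theorem~\ref{OberThm} via reductions at all. The theorem is merely cited to Oberschelp, and the paper then proves Claim~\ref{redOber}, which is precisely the case you end up treating --- the one where the empty structure satisfies~$\varphi$. You even invoke this hypothesis explicitly (``the hypothesis that the empty structure satisfies $\varphi$ ensures $M_i\ge 1$''), but the theorem as stated carries no such hypothesis. So as a proof of Theorem~\ref{OberThm} your proposal is incomplete; as a proof of Claim~\ref{redOber} it is essentially the paper's argument, with a slightly different bookkeeping for the first-match probabilities.

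The ``main technical obstacle'' you flag is not a technicality but the genuine obstruction, and your proposed fix does not work. A reduction defined by FO formulas to a binomial structure $D^{\sigma'}(n,\bar p)$ with all $p_j\in(0,1)$ cannot in general produce a structure satisfying $\varphi$ with probability~$1$. Concretely, take the tournament axiom $\forall x\forall y\,(x\neq y)\Rightarrow (R(x,y)\oplus R(y,x))$: for any FO formula $\psi_R(x,y)$ over a binomial $\sigma'$-structure and any fixed pair $x,y$, the event that the transposition $(x\,y)$ is an automorphism of the whole $\sigma'$-structure has positive probability, and on that event $\psi_R(x,y)\Leftrightarrow\psi_R(y,x)$, so the XOR fails. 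No ``cascade of auxiliary asymmetric predicates'' with Bernoulli marginals escapes this, because the fallback configuration (all auxiliary atoms on the subset taking symmetric values) occurs with positive probability and forces a symmetric type. The paper acknowledges exactly this: in the Remark after Claim~\ref{redOber} it states that it \emph{cannot} show a reduction proving Theorem~\ref{OberThm} in full, and instead reduces to structures augmented with non-binomial local-ordering predicates $LOrd_s$ --- themselves defined by parametric axioms --- which does not close the loop. For the full theorem one has to fall back on the extension-axiom argument you mention as an alternative; that is Oberschelp's original proof, not the reduction framework.
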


It is worth noting that the the proof from~\cite{Oberschelp} can be generalised to show the FO 0--1 law for a binomial random structure with any constant probabilities of the relations from $\sigma$ subject to a parametric sentence $\varphi$. Then, Theorem~\ref{th:0-1_loopless} is a particular case of such a generalisation.\\

For a loopless structure $D$ over $\sigma$, we say that tuples $(u_1, \dots, u_s)$ and $(v_1, \dots, v_s)$ of distinct elements of $D$ have the same {\it type} if $P(u_{\tau(1)}, \dots, u_{\tau(a)}) \Leftrightarrow P(v_{\tau(1)}, \dots, v_{\tau(a)})$ for all integer $a\geq s$, all predicates $P \in \sigma$ of arity $a$,  and all surjections $\tau:[a]\to[s]$. Let us denote by $\mathcal{P}^{\sigma}_s$ the set of all possible parametric types of $s$-element tuples in a structure over $\sigma$. Clearly, $|\mathcal{P}^{\sigma}_s|=\prod_{a\geq s}2^{|[a]\twoheadrightarrow[s]|\cdot n_a(\sigma)}$ where $|[a]\twoheadrightarrow[s]|$ is the number of all surjections $\tau:[a]\to[s]$ and $n_a(\sigma)$ is the number of predicates of arity $a$ in $\sigma$.  Each parametric sentence $\varphi$ is equivalent to a conjunction of sentences expressing that each $s$-element tuple has a parametric type in a specific subset $\mathcal{P}^{\sigma}_s(\varphi) \subset \mathcal{P}^{\sigma}_s$.
 We will denote an {\it empty parametric type}, i.e. the parametric type of an $s$-element tuple in an empty structure, by $\epsilon_s$.

\begin{claim}\label{redOber}
Let $\varphi$ be a parametric sentence. If $\epsilon_s \in \mathcal{P}^{\sigma}_s(\varphi)$ for each $s\in\mathbb{Z}_{>0}$, then $D^{\sigma}(n\mid\varphi) \preceq D^{\sigma'}(n, p_1, p_2, \dots, p_t)$ for some relational signature $\sigma'$ consisting of $t$ relational symbols and some $p_1, p_2, \dots, p_t \in [0,1]$.
\end{claim}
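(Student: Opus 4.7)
The plan is to exhibit an explicit binomial structure $D^{\sigma'}(n,\vec p)$ and an $(\mathcal{A}\mid\mathcal{A}')$-measurable map $f:\mathcal{D}^{\sigma'}\to\mathcal{D}^\sigma$ of the FO-translation form from Claim~\ref{redFormula}, so that $f(D^{\sigma'}(n,\vec p))$ has distribution $D^\sigma(n\mid\varphi)$ exactly. First I would invoke Claim~\ref{redloopless} to assume without loss of generality that $D^\sigma(n\mid\varphi)$ is loopless, so that $\varphi$ only constrains tuples of distinct elements. In this setting $D^\sigma(n\mid\varphi)$ factorises as the product, over unordered $V\subseteq[n]$, of independent uniform choices of a compatible local structure $L_V$ in the set $\mathcal{L}_V(\varphi)$ of $\varphi$-admissible truth assignments to atoms $P(v_1,\ldots,v_a)$ with $\{v_i\}=V$. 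Since the universal quantifiers in $\varphi$ force $\mathcal{P}^\sigma_s(\varphi)$ to be $S_s$-invariant, one has $|\mathcal{L}_V(\varphi)|=T_s:=|\mathcal{P}^\sigma_s(\varphi)|$, depending only on $s:=|V|$.

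Next I would construct $\sigma'$ by attaching, for each $s\le A$ (with $A$ the maximum arity in $\sigma$), a finite family of auxiliary $s$-ary (and possibly $(s{+}1)$-ary) Bernoulli relations whose probabilities I tune so that the following two-stage FO-procedure, run on each $V$, outputs a uniform element of $\mathcal{L}_V(\varphi)$. In Stage~1 (\emph{orbit selection}), symmetric indicators $\tilde R_k(V):=\bigwedge_{\pi\in S_s}R_k(u_{\pi(1)},\ldots,u_{\pi(s)})$, one per non-trivial $S_s$-orbit in $\mathcal{L}_V(\varphi)$, drive a priority cascade that picks an orbit $O\subseteq\mathcal{L}_V(\varphi)$ with probability $|O|/T_s$; the cascade probabilities follow from a telescoping identity analogous to the one in Section~\ref{axFOC}, and the hypothesis $\epsilon_s\in\mathcal{P}^\sigma_s(\varphi)$ ensures the ``no indicator triggered'' branch maps consistently onto the trivial $S_s$-fixed orbit $\{\epsilon_s\}$. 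In Stage~2 (\emph{symmetry breaking}), for each orbit $O$ with $|O|>1$ I use additional asymmetric bits over the $s!$ orderings of $V$ together with an $S_s$-equivariant partition of the resulting pattern-space into $|O|$ classes of equal probability, so the chosen element of $O$ is uniform.

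Setting $\psi_P(v_1,\ldots,v_a)$ to be the FO formula that simulates these two stages on $V:=\{v_1,\ldots,v_a\}$ and reads off $P(v_1,\ldots,v_a)$ from the resulting $L_V$ turns $f$ into a FO translation of the type in Claim~\ref{redFormula}. Independence of the $L_V$ across distinct $V$ is automatic because the underlying Bernoulli bits are indexed by disjoint ordered tuples, whence $f(D^{\sigma'}(n,\vec p))$ matches $D^\sigma(n\mid\varphi)$ exactly and Claim~\ref{redFormula} yields $D^\sigma(n\mid\varphi)\preceq D^{\sigma'}(n,\vec p)$.

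The main obstacle is Stage~2: when $|O|$ does not divide $2^{s!}$, no single Bernoulli-$(1/2)$ $s$-ary relation admits an $S_s$-equivariant surjection onto $O$ with equal fibres. I would resolve this by enlarging the auxiliary bit pool --- using several $s$-ary relations, or a single $(s{+}1)$-ary relation whose extra coordinate $u\notin V$ supplies further bits --- and tuning their Bernoulli probabilities via a system of polynomial equations whose solvability in $[0,1]$ follows from an intermediate-value argument together with the finiteness of the orbit structure.
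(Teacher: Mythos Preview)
Your proposal shares the paper's overall architecture: reduce to the loopless case via Claim~\ref{redloopless}, factor by arity so that the local data on distinct unordered $s$-sets are independent, run a priority cascade whose default branch is the empty type $\epsilon_s$, and tune the Bernoulli parameters by an intermediate-value argument so that each admissible type is hit with probability $1/T_s$.

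The difference is in how the cascade is organised. You split the sampling into two stages --- first pick an $S_s$-orbit via symmetric indicators, then break symmetry inside the orbit --- which forces you to confront the divisibility obstacle in Stage~2 and to contemplate extra or higher-arity auxiliary bits. The paper instead runs a \emph{single} cascade directly over the individual non-empty types $\alpha_1,\ldots,\alpha_{m-1}$ (with $m=T_s$), indexed by the \emph{ordered} tuple. Its indicator at step $i$ is a disjunction, over $r_i$ independent arity-$s$ relations $R_{s,i,j}$ all with the same parameter $p_{s,i}$, of the asymmetric pattern
\[
\psi_{s,i,j}(x_1,\ldots,x_s)\;=\;R_{s,i,j}(x_1,\ldots,x_s)\ \wedge\bigwedge_{\tau\in S_s,\ \tau\neq\mathrm{id}}\neg R_{s,i,j}(x_{\tau(1)},\ldots,x_{\tau(s)}),
\]
and the pair $(r_i,p_{s,i})$ is chosen so that $\bigl(1-p_{s,i}(1-p_{s,i})^{s!-1}\bigr)^{r_i}=(m-i)/(m-i+1)$. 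This is exactly the multi-copy-plus-IVT mechanism you sketch to resolve your Stage~2 obstacle, but applied one level up, at the level of individual types rather than inside orbits --- so no orbit bookkeeping and no $(s{+}1)$-ary relations are needed. What your two-stage version buys is a more transparent account of $S_s$-equivariance (and hence of why the image structure actually satisfies $\varphi$); what it costs is the extra Stage~2 machinery, which the paper's asymmetric single cascade absorbs in one shot.
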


\begin{proof}
Applying the construction of a loopless random structure from Claim~\ref{redloopless}, we get a random structure $D^{\tilde\sigma}(n \mid \tilde\varphi)$ which is equivalent to $D^{\sigma}(n \mid \varphi)$ such that only loopless structures satisfy the parametric sentence $\tilde\varphi$. 
 Thus, further in the proof, we assume that $D^{\sigma}(n \mid \varphi)$ is loopless. On the class of loopless structures, the parametric sentence $\varphi$ is equivalent to a conjunction of sentences 
$$
\forall x_1 \,\, \forall x_2 \,\, \ldots \forall x_s \bigwedge_{1 \leq i < j \leq s} (x_i \neq x_j) \Rightarrow \psi
$$
where $\psi$ is a quantifier-free formula with predicates of arity $s$ only (and we do not require all arguments to participate in every predicate as in the case of a parametric sentence evaluated on arbitrary structures). So, $\varphi\cong_{\mathrm{loopless}}\varphi_1 \wedge \ldots \wedge \varphi_k$ where $\varphi_s$ comprises predicates of arity $s$ only and $\cong_{\mathrm{loopless}}$ denotes logical equivalence on the set of all loopless structures. Denote by $\sigma_s$ the set of relational symbols in $\sigma$ of arity $s$. Then, a loopless $\sigma$-structure $D$ satisfies $\varphi$ if and only if, for every $s\in\mathbb{Z}_{>0}$, its restriction on $\sigma_s$ satisfies~$\varphi_s$. 

Fix $s\in[k]$ and denote $m := m(\sigma,s)= |\mathcal{P}^{\sigma}_s(\varphi)|$. Consider an arbitrary ordering $\alpha_1, \dots, \alpha_{m-1}, \alpha_m = \epsilon_s$ of $\mathcal{P}^{\sigma}_s$. For each $1 \leq i < m$, we introduce a family of relational symbols $R_{s,i,1}, R_{s,i,2}, \dots, R_{s,i,r_i}$ of arity $s$ and a probability $p_{s,i} \in (0, 1)$ such that 
$$
 \left(1 - p_{s,i}(1 - p_{s,i})^{s! - 1}\right)^{r_i} = \frac{m - i}{m - i + 1}.
$$
Let us show that such $r_i$ and $p_{s,i}$ indeed exist. Fix any real $p \in (0, 1)$. Then $1 - p(1 - p)^{s! - 1} \in (0, 1)$. Therefore, there is a nonnegative integer $r_i$ such that 
$$
0<\left(1 - p(1 - p)^{s! - 1}\right)^{r_i} < \frac{m - i}{m - i + 1}.
$$
Since the function $(1 - x(1 - x)^{s! - 1})^{r_i}$ is continuous and equals $1$ at $x = 0$, there is a real number $p_{s,i}:=x \in (0, p) \subset (0, 1)$ such that $(1 - x(1 - x)^{s! - 1})^{r_i} = \frac{m - i}{m - i + 1}$. 

For $i\in[m-1]$ and $j\in[r_i]$, let $\psi_{s,i, j}(x_1, \dots, x_s)$ be the FO formula
$$
R_{s,i, j}(x_1, \dots, x_s) \wedge \bigwedge_{\tau \in S_s, \tau \neq id} \neg R_{s,i, j}(x_{\tau(1)}, \dots, x_{\tau(s)}).
$$
Let $\psi_{s,i}(x_1, \dots, x_s)$ be the FO formula 
$$
\bigvee_{1 \leq j \leq r_i} \psi_{s,i, j}(x_1, \dots, x_s).
$$

In what follows, we construct a reduction defined by formulas in such a way that, for every $i\in[m-1]$, the parametric type of $(x_1, \dots, x_s)$ over $\sigma$ is $\alpha_i$ if and only if
$$
\psi'_{s,i}(x_1, \dots, x_s) := \psi_{s,i}(x_1, \dots, x_s) \wedge \bigwedge_{1 \leq j < i} \neg \psi_{s,i}(x_1, \dots, x_s)
$$
is satisfied over $\sigma'$, and $\epsilon_s$ otherwise.

For a predicate $P \in \sigma$ of arity $s$, denote by $A_P$ the set of all $\alpha_i$ such that, for a tuple $(x_1, \dots, x_s)$ of parametric type $\alpha_i$, $P(x_1, \dots, x_s)$ holds. The desired FO formula for $P \in \sigma$ is defined as follows:
$$
\psi_{P}(x_1, \dots, x_s) = \bigvee_{\alpha_i \in A_P} \psi'_{s,i}(x_1, \dots, x_s).
$$
Due to the definition of $p_{s, i}$, we have that $\psi'_{s,i}(x_1, \dots, x_s)$ is true with the probability $\frac{1}{m}$. Therefore, formulas $\psi_{P}$ indeed define a reduction from $D^{\sigma}(n\mid\varphi)$ to the random structure $D^{\sigma'}(n,p_{s, i, j}:=p_{s,i},\,s\in[1,k], i\in[1,m(s)-1], j\in[1,r_i(s)])$
 where $\sigma'$ consists of predicates $R_{s, i, j}$.
\end{proof}

So, FO stochastic reductions allow to transfer 0--1 laws from unconditional random structures to random structures conditioned on parametric sentences. This proves Theorem~\ref{OberThm} in the particular case described in Claim~\ref{redOber}.

\begin{remark} Though we cannot show a reduction that proves Theorem~\ref{OberThm} in full, we are able to show a reasonably narrow family of random structures any random structure conditioned on a parametric sentence can be reduced to. Let $D^{\sigma}_{lord}(n, p_1, \dots, p_t)$ be a structure $D^{\sigma}(n, p_1, \dots, p_t)$ with additional predicates $LOrd_s$ (which we call {\it local orderings}) of arity $s$ for all $s$ not more than the maximal arity appearing in $\sigma$. Local orderings satisfy the following axioms:
$$
\forall x_1 \ldots \forall x_s\,\, LOrd_s(x_1, \dots, x_s) \Rightarrow \bigwedge_{1 \leq i < j \leq s} (x_i \neq x_j);
$$
$$
\forall x_1 \ldots \forall x_s \,\, LOrd_s(x_1, \dots, x_s) \Rightarrow \bigwedge_{\tau \in S_s, \tau \neq id} \neg LOrd_s(x_{\tau(1)}, \dots, x_{\tau(s)});
$$
$$
\forall x_1 \ldots \forall x_s \,\, \bigwedge_{1 \leq i < j \leq s} (x_i \neq x_j) \Rightarrow \bigvee_{\tau \in S_s}  LOrd_s(x_{\tau(1)}, \dots, x_{\tau(s)}).
$$
The respective predicates are uniformly distributed in $D^{\sigma}_{lord}(n, p_1, \dots, p_t)$ while the other $t$ predicates have corresponding probabilities $p_1,\ldots,p_t$. Note that these axioms are parametric. In  a similar way as in the proof of Theorem~\ref{OberThm}, we are able to show that, for each parametric sentence $\varphi$, there is a stochastic FO reduction $D^{\sigma}(n\mid\varphi) \preceq D^{\sigma'}_{lord}(n, p_1, p_2, \dots, p_t)$ for some relational signature $\sigma'$ and $p_1, p_2, \dots, p_t \in [0,1]$. Indeed, in the same way, we may assume that the random structure is loopless and $\varphi=\varphi_1 \wedge \ldots \wedge \varphi_k$, where $\varphi_s$ depends on predicates of arity $s$ only. Letting $m := |\mathcal{P}^{\sigma}_s(\varphi)|$, we define the probability $p_{s,i}$, $i\in[m-1]$, of the $i$th relational symbol in the new signature $\sigma'$  by $p_{s,i}^{s!} = \frac{1}{m - i + 1}.$ It remains to construct a reduction defined by formulas from $D^{\sigma}(n\mid\varphi)$ to the random structure $D^{\sigma'}_{lord}(n, p_{s, i},\,s\in[k],i\in[m(s)])$, where $\sigma'$ consists of the new predicates $R_{s,i}$, $i\in[m-1]$, and $LOrd_s$. As in the proof of Theorem~\ref{OberThm}, we do it
 in such a way that the parametric type of $(x_1, \dots, x_s)$ is the $i$th type from $\mathcal{P}^{\sigma}_s(\varphi)$, $i\in[m-1]$, if and only if
$$
\psi'_{s,i}(x_1, \dots, x_s) := \psi_{s,i}(x_1, \dots, x_s) \wedge \bigwedge_{1 \leq j < i} \neg \psi_{s,j}(x_1, \dots, x_s)
$$
is satisfied, where
$$
\psi_{s,i}(x_1, \dots, x_s)=LOrd_s(x_1, \dots, x_s) \wedge \bigwedge_{\tau \in S_s} \neg R_{s, i}(x_{\tau(1)}, \dots, x_{\tau(s)}).
$$
\end{remark}

\subsection{Application to FO limit laws for sparse structures}
\label{sc:SFOR_sparse}

For technical reasons, we will require a claim stating that a small shift of the probability parameter of a binomial random structure does not affect its equivalence class. 
 Let $\sigma$ be a relational signature, and recall that $\mathcal{D}_n$ is the set of all finite structures on $[n]$ over $\sigma$. For every $n$, consider a non-negative integer $s_n$ and an arbitrary mapping $r_n:\{0,1\}^{s_n}\to\mathcal{D}_n$. Let $B(r_n, p)$ be a random structure over the signature $\sigma$ defined as $B(r_n,p)=r_n(\xi_1,\ldots,\xi_{s_n})$, where $\xi_i$ are independent random variables with Bernoulli distribution with the parameter $p$.  Note that the binomial random graph $G(n, p)$ is distributed as $B(r_n, p)$, where $r_n$ maps a sequence of ${n\choose 2}$ ones and zeros into the graph with edges corresponding to ones in the sequence. Similarly, binomial random $d$-uniform $H$-oriented hypergraphs and binomial random oriented hypergraphs with loops are distributed as $B(r_n, p)$ with an appropriately chosen $r_n$. We shall use the following assertion about the total variation distance between Bernoulli random variables (though we believe that it might be known, we have not found it in the literature, thus we present the full proof in Appendix~\ref{AppendixE}). 

\begin{lemma}
Let $p, q \in [0, 1]$, $p_{\min} = \min\{p, q, 1 - p, 1 - q\}$, and $s_n$ be a sequence of non-negative integers. Consider random vectors $(\xi_1,\ldots,\xi_{s_n})$ and $(\eta_1,\ldots,\eta_{s_n})$, where $\xi_i$ and $\eta_i$ are independent random variables with Bernoulli distribution with the parameter $p$ and $q$ respectively. If $\lim\limits_{n \to \infty} \sqrt{\frac{s_n}{p_{\min}}} |p - q| = 0$, then the total variation distance between the distributions of $(\xi_1,\ldots,\xi_{s_n})$ and $(\eta_1,\ldots,\eta_{s_n})$ converges to zero.
\label{lm:appendix_Bernoulli}
\end{lemma}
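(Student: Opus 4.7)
The plan is to bound the total variation distance via the squared Hellinger distance, exploiting the fact that Hellinger distance tensorises cleanly for product measures. With the normalisation $H^2(P, Q) = \sum_x (\sqrt{P(x)} - \sqrt{Q(x)})^2$ on a finite space, I would use two classical facts: (a) a Cauchy--Schwarz computation yields $d_{TV}(P, Q) \leq H(P, Q)$, and (b) for product measures, $H^2(P_1 \otimes \dots \otimes P_k, Q_1 \otimes \dots \otimes Q_k) \leq \sum_i H^2(P_i, Q_i)$ (a short induction from the identity $1 - \prod_i (1 - H^2(P_i, Q_i)/2) \leq \sum_i H^2(P_i, Q_i)/2$).

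The first step would be a direct computation of $H^2(\mathrm{Ber}(p), \mathrm{Ber}(q))$. Writing $(\sqrt{p} - \sqrt{q})^2 = (p-q)^2 / (\sqrt{p} + \sqrt{q})^2$ and using $\sqrt{p} + \sqrt{q} \geq 2\sqrt{p_{\min}}$ (since both $p, q \geq p_{\min}$), together with the parallel bound $(\sqrt{1-p} - \sqrt{1-q})^2 \leq (p-q)^2 / (4 p_{\min})$ (which requires $1-p, 1-q \geq p_{\min}$, and this is precisely why $p_{\min}$ is defined symmetrically in $p$ and $1-p$), one obtains
$$ H^2(\mathrm{Ber}(p), \mathrm{Ber}(q)) \leq \frac{(p-q)^2}{2 p_{\min}}. $$

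Combining with tensorisation gives
$$ d_{TV}(\xi, \eta)^2 \leq H^2\left(\mathrm{Ber}(p)^{\otimes s_n}, \mathrm{Ber}(q)^{\otimes s_n}\right) \leq \frac{s_n (p-q)^2}{2 p_{\min}} = \frac{1}{2}\left(\sqrt{\frac{s_n}{p_{\min}}}\,|p-q|\right)^2, $$
and the right-hand side tends to zero by hypothesis.

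The argument is short once the right tool is identified; the key observation is that the naive coupling bound $d_{TV} \leq s_n |p-q|$ (from sampling $\xi_i = \mathbf{1}_{U_i \leq p}$, $\eta_i = \mathbf{1}_{U_i \leq q}$ with common uniforms) loses a factor of order $\sqrt{s_n / p_{\min}}$ compared with what is needed, so linearity of total variation over coordinates is too crude, and one must use a distance (Hellinger, or equivalently KL divergence via Pinsker) that combines sub-additively over independent coordinates to extract the $\sqrt{s_n}$ improvement from the independence. The degenerate case $p_{\min} = 0$ is harmless: the hypothesis then either forces $p = q$ (nothing to prove) or forces $p, q \in \{0, 1\}$ with the variables being deterministic, so the claim is immediate.
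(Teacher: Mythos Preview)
Your proof is correct and considerably cleaner than the paper's. The paper proceeds by brute force: it writes the total variation distance as
\[
\sum_{k=0}^{s_n}\binom{s_n}{k}\bigl|p^k(1-p)^{s_n-k}-q^k(1-q)^{s_n-k}\bigr|,
\]
splits the sum at the sign change $k_0$, applies the mean value theorem in the variable $x\in(q,p)$ to convert the difference into $(p-q)$ times a derivative, telescopes the resulting sums down to a single central binomial term $2(p-q)s_n\binom{s_n-1}{k_0}t^{k_0}(1-t)^{s_n-1-k_0}$, and finally bounds that term via Stirling. It then handles the regime $p_{\min}<3/(2s_n)$ by a separate direct estimate. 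Your route---bounding $d_{TV}$ by Hellinger distance and using the multiplicativity of the Hellinger affinity under products---replaces all of this with two standard inequalities and a one-line computation of $H^2(\mathrm{Ber}(p),\mathrm{Ber}(q))$. The constant you obtain ($1/\sqrt{2}$) is at least as good as the paper's.

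One small wrinkle in your discussion of the degenerate case $p_{\min}=0$: it is not true that the hypothesis forces $p,q\in\{0,1\}$; rather, if $p_{\min}=0$ and $p\neq q$ then $\sqrt{s_n/p_{\min}}\,|p-q|=\infty$ whenever $s_n\geq 1$, so the hypothesis forces $s_n=0$ eventually (and the vectors are empty). This is still harmless, just a different mechanism than the one you stated.
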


\begin{corollary}\label{corShift}
Let $p, q \in [0, 1]$, $p_{\min} = \min\{p, q, 1 - p, 1 - q\}$, and $r_n: \{0, 1\}^{s_n} \to \mathcal{D}_n$ be a sequence of arbitrary mappings. If $\lim\limits_{n \to \infty} \sqrt{\frac{s_n}{p_{\min}}} |p - q| = 0$, then $B(r_n, p)$ is equivalent to $B(r_n, q)$.
\end{corollary}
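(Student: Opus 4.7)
The plan is to exhibit a stochastic FO reduction in each direction, and in fact the identity map $\mathrm{id}:\mathcal{D}\to\mathcal{D}$ will work for both. Recall that to verify $B(r_n,q)\preceq B(r_n,p)$ via $\mathrm{id}$ we just need, for every FO sentence $\varphi$ over $\sigma$,
\[
\lim_{n\to\infty}\bigl|\mathrm{Pr}(B(r_n,p)\models\varphi)-\mathrm{Pr}(B(r_n,q)\models\varphi)\bigr|=0.
\]
So the whole job is to bound this quantity by something already known to vanish.

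First I would observe that for a fixed FO sentence $\varphi$ the set $E_n:=\{D\in\mathcal{D}_n:D\models\varphi\}$ is a subset of $\mathcal{D}_n$, and by the definition of $B(r_n,\cdot)$,
\[
\mathrm{Pr}(B(r_n,p)\models\varphi)=\mathrm{Pr}\bigl((\xi_1,\ldots,\xi_{s_n})\in r_n^{-1}(E_n)\bigr),
\]
with an analogous identity for $q$ and $(\eta_1,\ldots,\eta_{s_n})$. Since for any measurable set $A\subset\{0,1\}^{s_n}$ the elementary total-variation bound gives
\[
\bigl|\mathrm{Pr}((\xi_1,\ldots,\xi_{s_n})\in A)-\mathrm{Pr}((\eta_1,\ldots,\eta_{s_n})\in A)\bigr|\leq d_{\mathrm{TV}}\bigl((\xi_1,\ldots,\xi_{s_n}),(\eta_1,\ldots,\eta_{s_n})\bigr),
\]
applying this with $A=r_n^{-1}(E_n)$ yields
\[
\bigl|\mathrm{Pr}(B(r_n,p)\models\varphi)-\mathrm{Pr}(B(r_n,q)\models\varphi)\bigr|\leq d_{\mathrm{TV}}\bigl((\xi_1,\ldots,\xi_{s_n}),(\eta_1,\ldots,\eta_{s_n})\bigr).
\]
The right-hand side does not depend on $\varphi$ or on $r_n$ at all.

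Finally I would invoke Lemma~\ref{lm:appendix_Bernoulli}: under the hypothesis $\sqrt{s_n/p_{\min}}\,|p-q|\to 0$, the total variation distance between the two product Bernoulli laws converges to $0$. Thus the identity map is a stochastic FO reduction $B(r_n,q)\preceq B(r_n,p)$; swapping the roles of $p$ and $q$ (the hypothesis and the bound are symmetric in $p,q$) gives the opposite reduction, establishing equivalence. The argument contains no real obstacle beyond citing the correct lemma; the only subtlety worth noting is that measurability of $\mathrm{id}$ with respect to the algebras $\mathcal{A}$ is immediate, so the reduction framework applies without modification.
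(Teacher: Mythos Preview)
Your proof is correct and essentially identical to the paper's: both use the identity map as the reduction, pull back $\mathcal{D}(\varphi)$ through $r_n$ to a subset of $\{0,1\}^{s_n}$, bound the probability difference by the total variation distance, and invoke Lemma~\ref{lm:appendix_Bernoulli}. The paper's version is slightly terser but the logical content is the same.
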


\begin{proof}
Consider the identity mapping $id: \mathcal{D} \to \mathcal{D}$. Let us prove that it is a stochastic FO reduction. Consider a FO sentence $\varphi$. It is eligible to set $id^{-1}(\varphi):=\varphi$. Let $A_n = r_n^{-1}(\mathcal{D}(\varphi))$. Then
\begin{eqnarray*}
\mathrm{Pr}(B(r_n,p)\models\varphi) & = & \mathrm{Pr}((\xi_1,\ldots,\xi_{s_n})\in A_n),\\
\mathrm{Pr}(B(r_n,q)\models\varphi) & = & \mathrm{Pr}((\eta_1,\ldots,\eta_{s_n})\in A_n).
\end{eqnarray*}
Since the total variation distance between vertors $(\xi_1,\ldots,\xi_{s_n})$ and $(\eta_1,\ldots,\eta_{s_n})$ converges to zero, we immediately get the statement of the corollary.
\end{proof}

This corollary advances our tool to prove logical limit laws using reductions. In order to demonstrate its efficiency, we prove the following.

\begin{theorem}\label{thLogRed}
Let $r$ be a nonnegative integer.
\begin{enumerate}
\item[(i)] The random structure $D^{(r+2)}(n,(r+1) \ln n/n)$ does not obey the FO 0--1 law.
\item[(ii)] For any positive integer $k$, $D^{(r+3)}(n, (1 + \frac{1}{k} + r) \ln n/n)$ does not obey the FO 0--1 law.
\item[(iii)] For any rational $\beta \in (\frac{2}{3}, 1)$, $\mathrm{FOC}(D^{(r+3)}(n, (\beta + r) \ln n/n))$ is not totally bounded.
\end{enumerate}
\end{theorem}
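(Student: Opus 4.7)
The unified strategy is to reduce each target, via an explicit stochastic FO reduction and Corollary~\ref{corShift}, to a simpler random structure whose logical behaviour is already understood. For parts (i) and (ii), I would use the ``out-isolation'' formula
\[
\psi_Q(v_1,\ldots,v_{d}):=\forall y\,\neg P(v_1,\ldots,v_{d},y)
\]
(with $d=r{+}1$ for (i) and $d=r{+}2$ for (ii)), reducing the $(d{+}1)$-ary source $D^{(d+1)}(n,q)$ to a $d$-ary binomial structure. Since the underlying hyperedges $\{P(v_1,\ldots,v_d,y):y\in[n]\}$ are pairwise disjoint across distinct $d$-tuples, the indicators $Q(v_1,\ldots,v_d)$ are independent Bernoulli variables with success probability $(1-q)^n$. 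For (i), $q=(r+1)\ln n/n$ gives $(1-q)^n=n^{-(r+1)}(1+O(\ln^2 n/n))$, and Claim~\ref{redFormula} combined with Corollary~\ref{corShift} (with $s_n=n^{r+1}$ and $p_{\min}\sim n^{-(r+1)}$) identifies the reduced structure with $D^{(r+1)}(n,n^{-(r+1)})$; the FO sentence $\exists x_1\ldots\exists x_{r+1}\,Q(x_1,\ldots,x_{r+1})$ holds there with probability tending to $1-e^{-1}\in(0,1)$, so the FO 0--1 law fails on the source and Corollary~\ref{corRedLaws}(i) transfers the failure to the target.

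Part (ii) proceeds analogously with $q=(r+1+1/k)\ln n/n$, yielding a reduction to (a structure equivalent to) $D^{(r+2)}(n,n^{-(r+1+1/k)})$. On that source, a fixed ordered $(r+2)$-uniform hypertree $T$ on $k$ hyperedges has $v=k(r+1)+1$ vertices, so its expected number of copies converges to the positive constant $1/|\mathrm{Aut}(T)|$ (the exponent $v-(r+1+1/k)k$ vanishes); a standard second-moment calculation gives the Poisson limit, so the FO sentence ``$T$ embeds'' has limit probability strictly in $(0,1)$, falsifying FO 0--1 on the source and, by Corollary~\ref{corRedLaws}(i), on the target.

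For part (iii), I would use the symmetric compound formula
\[
\psi_{\sim}(x,y):=\exists w_1\ldots\exists w_r\bigl[\forall z\,\neg P(x,y,z,w_1,\ldots,w_r)\,\vee\,\forall z\,\neg P(y,x,z,w_1,\ldots,w_r)\bigr].
\]
For fixed $\{x,y\}$, the events indexed by distinct $(w_1,\ldots,w_r)$ and the two disjuncts for each fixed $w$ depend on disjoint hyperedges and are therefore mutually independent; the same holds across different unordered pairs. A direct computation with $q=(\beta+r)\ln n/n$ gives edge probability $p_n=1-\exp(-2n^{-\beta}(1+o(1)))=2n^{-\beta}-2n^{-2\beta}+o(n^{-2\beta})$. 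With $s_n=\binom{n}{2}$ and $p_{\min}\sim n^{-\beta}$, the Corollary~\ref{corShift} quantity $\sqrt{s_n/p_{\min}}\cdot|p_n-2n^{-\beta}|=\Theta(n^{1-3\beta/2})$ tends to $0$ \emph{exactly} when $\beta>2/3$ --- this is precisely the source of the threshold in (iii). Hence the reduction identifies the target with $G(n,2n^{-\beta})$. Since the Shelah--Spencer construction behind Lemma~\ref{absCLSS} and Theorem~\ref{thSSFOC} is insensitive to a positive multiplicative constant in $p$ (the relevant ``safe extension'' counts all scale by constants while still alternating with $\log^*n$), $\mathrm{FOC}(G(n,2n^{-\beta}))$ is not totally bounded for rational $\beta\in(0,1)$, and Claim~\ref{redFOC} transfers this to the target.

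The main obstacle is the second-order error analysis in part (iii): the $O(n^{-2\beta})$ term coming from the quadratic in the Taylor expansion of $1-\exp(\cdot)$ is exactly what forces $\beta>2/3$ for Corollary~\ref{corShift} to apply. The remaining technicalities --- Poisson convergence of the subhypergraph count in (ii) and robustness of the Shelah--Spencer construction under multiplicative constants in $p$ --- are routine.
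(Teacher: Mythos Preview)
Your proposal is correct, and for parts (ii) and (iii) it follows a genuinely different route from the paper's proof.

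For part (i) you are doing essentially what the paper does: the formula $\forall y\,\neg P(v_1,\ldots,v_d,y)$ is exactly the paper's Claim~\ref{redlnnn}. The only difference is cosmetic --- you stop at $D^{(r+1)}(n,n^{-(r+1)})$ and argue directly there, while the paper further reduces to $D^{(1)}(n,1/n)$ via Claim~\ref{rednapr} before invoking the Poisson limit.

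For parts (ii) and (iii) the paper factors everything through the binomial random graph. Concretely, it builds a chain
\[
G(n,p)\ \preceq\ D^{(2)}(n,\cdot)\ \preceq\ D^{(r+2)}(n,\cdot)\ \preceq\ D^{(r+3)}(n,\cdot),
\]
using Theorem~\ref{redDigraphs}, Claim~\ref{rednapr} (the $\exists y$ step), and Claim~\ref{redlnnn} (the $\forall y\,\neg$ step) as black boxes, and then imports the Shelah--Spencer failure of the 0--1 law (for (ii)) and Theorem~\ref{thSSFOC} (for (iii)) from $G(n,p)$. The threshold $\beta>2/3$ arises there as the condition $\alpha>d/3$ in Claim~\ref{rednapr} with $d=2$.

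Your approach instead collapses the chain into a single hand-built formula. For (ii) this has a real payoff: you avoid Shelah--Spencer entirely and replace it by the elementary Poisson limit for a strictly balanced hypertree (any sub-hyperforest on $e'<k$ edges has $v'-\alpha e'=c-e'/k>0$), which is self-contained. For (iii) your compound formula $\psi_\sim$ does the $\forall z\,\neg$, the $\exists w$, and the symmetrisation in one shot; the $O(n^{-2\beta})$ correction from the quadratic term of $1-\exp(\cdot)$ then reproduces exactly the same $\beta>2/3$ threshold, and the appeal to the constant-robust version of Lemma~\ref{absCLSS} is identical to the paper's. Two small points worth cleaning up: add $\wedge(x\neq y)$ to $\psi_\sim$ so the image is genuinely irreflexive, and note that the coefficient of the $n^{-2\beta}$ term depends on $r$ (it is $-1$ for $r=0$ and $-2$ for $r\geq 1$), though this is irrelevant to the Corollary~\ref{corShift} check. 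The paper's modular chain is easier to reuse; your single-formula reductions are more direct and, in (ii), strictly more elementary.
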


\begin{proof}
We will use two FO reductions encapsulated in the two claims stated below that follow from Corollary~\ref{corShift} and reductions defined by formulae.

\begin{claim}\label{rednapr}
For any $\alpha > \frac{d}{3}$ and each nonnegative integer $r$, we have $D^{(d)}(n, \frac{c}{n^{\alpha}}) \preceq D^{(d+r)}(n, \frac{c}{n^{\alpha + r}})$.
\end{claim}

\begin{proof}
It is sufficient for us to combine Corollary~\ref{corShift} and reduction defined by a formula, to show the reduction $D^{(d)}(n, \frac{c}{n^{\alpha}}) \preceq D^{(d+1)}(n, \frac{c}{n^{\alpha + 1}})$, for $\alpha > \frac{d}{3}$. Let $P$ and $Q$ be predicates of arity $d$ and $d+1$ from signatures of $D^{(d)}(n, p)$ and $D^{(d+1)}(n, p)$ respectively. The formula $\exists y \,\, Q(y, x_1, \dots, x_d)$ reduces $D^{(d)}(n, \frac{c}{n^{\alpha}})$ to $D^{(d+1)}(n, 1 - \sqrt[n]{1 - \frac{c}{n^{\alpha}}})$. Indeed, $P(x_1, \dots, x_d)$ is true with probability 
$$
\frac{c}{n^{\alpha}} = 1 - \left(1 - \left(1 - \sqrt[n]{1 - \frac{c}{n^{\alpha}}}\right)\right)^n,
$$
and all such events are independent. Also, we have
$$
1 - \sqrt[n]{1 - \frac{c}{n^{\alpha}}} = \frac{c}{n^{\alpha + 1}} + O(n^{- 2 \alpha - 1}).
$$
Then
\begin{align*}
\sqrt{\frac{n^{d+2+\alpha}}{c}} \left|\left(1 - \sqrt[n]{1 - \frac{c}{n^{\alpha}}}\right) - \frac{c}{n^{\alpha + 1}}\right|  = \frac{1}{\sqrt{c}}n^{\frac{d+2+\alpha}{2}} O(n^{- 2 \alpha - 1})
 = O(n^{\frac{d - 3 \alpha}{2}}).
\end{align*}
Therefore, by Corollary~\ref{corShift}, we have an equivalence between $D^{(d+1)}(n, 1 - \sqrt[n]{1 - \frac{c}{n^{\alpha}}})$ and $D^{(d+1)}(n, \frac{c}{n^{\alpha + 1}})$. Note that, if $\alpha > \frac{d}{3}$, then, for each nonnegative integer $r$, we have that $\alpha + r > \frac{d+r}{3}$. Therefore, in the same way we can apply other $r-1$ reductions. It proves the claim.
\end{proof}

\begin{claim}\label{redlnnn}
For any $\alpha > d - 2$, $D^{(d)}(n, n^{-\alpha}) \preceq D^{(d+1)}(n, \frac{\alpha \ln n}{n})$.
\end{claim}

\begin{proof}
The formula $\forall y \,\,\neg Q(y, x_1, \dots, x_d)$ reduces $D^{(d)}(n, \frac{c}{n^{\alpha}})$ to $D^{(d+1)}(n, 1 - \sqrt[n]{\frac{c}{n^{\alpha}}})$. Let us estimate 
$$
1 - \sqrt[n]{\frac{c}{n^{\alpha}}} = 1 - e^{\frac{\ln c - \alpha \ln n}{n}} = \frac{\alpha \ln n - \ln c}{n} + O\left(\left(\frac{\ln n}{n}\right)^2\right).
$$

Suppose $\varepsilon$ is an arbitrary real number in the interval $(0, 1)$. For $c < 1 - \varepsilon \frac{(\ln n)^3}{n}$ and sufficiently large $n$, we have $1 - \sqrt[n]{\frac{c}{n^{\alpha}}} > \frac{\alpha \ln n}{n}$. For $c > 1 + \varepsilon \frac{(\ln n)^3}{n}$ and sufficiently large $n$, we have $1 - \sqrt[n]{\frac{c}{n^{\alpha}}} < \frac{\alpha \ln n}{n}$. Note that the function $f(c) = \sqrt[n]{\frac{c}{n^{\alpha}}}$ is continuous on the interval $(0, 2)$ and monotone. Then, for sufficiently large $n$ there is unique solution $c = c_n$ of the equation $1 - \sqrt[n]{\frac{c}{n^{\alpha}}} = \frac{\alpha \ln n}{n}$ in the interval $(0, 2)$, and $\lim\limits_{n \to \infty} (c_n - 1) \frac{n}{(\ln n)^3} = 0$. We get $D^{(d)}(n, \frac{c_n}{n^{\alpha}}) \preceq D^{(d+1)}(n, \frac{\alpha \ln n}{n})$. Thus, it remains to check the condition of Corollary~\ref{corShift} for $D^{(d)}(n, \frac{c_n}{n^{\alpha}})$ and $D^{(d)}(n, \frac{1}{n^{\alpha}})$:
$$
\sqrt{n^{d+\alpha}} \left|\frac{c_n}{n^{\alpha}} - \frac{1}{n^{\alpha}}\right| = n^{\frac{d+\alpha}{2}} o \left( \frac{(\ln n)^3}{n^{\alpha + 1}} \right) = o \left( (\ln n)^3 n^{\frac{d - 2 - \alpha}{2}} \right),
$$
completing the proof of the claim.
\end{proof}

Let us finish the proof of Theorem~\ref{thLogRed}. We start from (i). The FO 0--1 law fails for $D^{(1)}(n, \frac{1}{n})$, since the number of elements satisfying the unary predicate from the signature of this random structure converges in distribution to $\mathrm{Pois}(1)$. Using Claim~\ref{rednapr} for parameter $\alpha = 1 > \frac{1}{3} = \frac{d}{3}$, we get a reduction $D^{(1)}(n, \frac{1}{n}) \preceq D^{(r+1)}(n, \frac{1}{n^{r+1}})$. Using Claim~\ref{redlnnn} for parameter $\alpha = r + 1 > r - 1 = d - 2$, we get a reduction $D^{(r+1)}(n, \frac{1}{n^{r+1}}) \preceq D^{(r+2)}(n, \frac{(r+1) \ln n}{n})$. Then, we transfer the absence of the FO 0--1 law from $D^{(1)}(n, \frac{1}{n})$ to $D^{(r+2)}(n, \frac{(r+1) \ln n}{n})$.

Next, let us prove (ii). By Theorem~\ref{redDigraphs}, there is a reduction $G(n, (1 - n^{-1 - \frac{1}{k}})^2) \preceq D^{(2)}(n, 1 - n^{-1 - \frac{1}{k}})$. Since the inversion of all edges defines equivalences between $G(n, p)$ and $G(n, 1-p)$, and between $D^{(2)}(n, p)$ and $D^{(2)}(n, 1-p)$, we have $G(n, 2n^{-1 - \frac{1}{k}} - n^{-2 - \frac{2}{k}}) \preceq D^{(2)}(n, n^{-1 - \frac{1}{k}})$. Using Claim~\ref{rednapr} for parameter $\alpha = 1 + \frac{1}{k} > \frac{2}{3} = \frac{d}{3}$, we get a reduction $D^{(2)}(n, n^{-1 - \frac{1}{k}}) \preceq D^{(r+2)}(n, n^{-1 - \frac{1}{k}-r})$. Using Claim~\ref{redlnnn} for parameter $\alpha = 1 + \frac{1}{k} + r > r = d - 2$, we get a reduction $D^{(r+2)}(n, n^{-1 - \frac{1}{k}-r}) \preceq D^{(r+3)}(n, (1 + \frac{1}{k} + r) \ln n/n)$. Then, we obtain a reduction $G(n, 2n^{-1 - \frac{1}{k}} - n^{-2 - \frac{2}{k}}) \preceq D^{(r+3)}(n, (1 + \frac{1}{k} + r) \ln n/n)$. From \cite{SS}, we have an absence of the FO 0--1 law for $G(n, p)$ with $p \sim c n^{-1 - \frac{1}{k}}$, and then for $D^{(r+3)}(n, (1 + \frac{1}{k} + r) \ln n/n)$ as well.

Finally, we prove (iii). By Theorem~\ref{redDigraphs}, there is a reduction $G(n, 2 n^{-\beta} - n^{-2\beta}) \preceq D^{(2)}(n, n^{-\beta})$. Using Claim~\ref{rednapr} for parameter $\alpha = \beta > \frac{2}{3} = \frac{d}{3}$, we get a reduction $D^{(2)}(n, n^{-\beta}) \preceq D^{(r+2)}(n, n^{-\beta-r})$. Using Claim~\ref{redlnnn} for parameter $\alpha = \beta + r > r = d - 2$, we get a reduction $D^{(r+2)}(n, n^{-\beta-r}) \preceq D^{(r+3)}(n, (\beta + r) \ln n/n)$. Then, we obtain a reduction $G(n, 2n^{-\beta} - n^{-2\beta}) \preceq D^{(r+3)}(n, (\beta + r) \ln n/n)$ which implies that $\mathrm{FOC}(G(n, 2n^{-\beta} - n^{-2\beta})) \subseteq \mathrm{FOC}(D^{(r+3)}(n, (\beta + r) \ln n/n))$. 

The proof of Theorem~\ref{thSSFOC} and the proof of Lemma~\ref{absCLSS} in \cite{SS} do not rely on the exact equality $p = n^{-\alpha}$ but rather works for every $p \sim c n^{- \alpha}$. Then, $\mathrm{FOC}(G(n, 2n^{-\beta} - n^{-2\beta}))$ is not totally bounded, and it finishes the proof of the theorem by Claim~\ref{redFOC}.
\end{proof}

\subsection{First order complexity of random $H$-hypergraphs around the connectivity threshold}
\label{sc:SFOR_Muller}

For a subgroup $H$ of $S_d$, the random hypergraph $G^H(n,p)$ is defined in Section 3.2. In particular, $G^{S_d}(n,p)$ is the well-studied and commonly considered binomial unoriented hypergraph. Larrauri, M\"{u}ller, and Noy~\cite{LMN} proved the following.
\begin{theorem}[Larrauri, M\"{u}ller, Noy~\cite{LMN}]\label{thmLMN}
The set of limits $\lim_{n\to\infty}\mathrm{Pr}(G^{S_d}(n, \frac{c}{n^{d-1}})\models\varphi)$ over all FO sentences $\varphi$ is dense in $[0,1]$ if and only if $c\geq c_0^{S_d}$, where $c_0^{S_d}$ is the unique positive solution of the equation 
$$
\frac{1}{2} \ln \frac{1}{1 - \frac{c}{(d-2)!}} - \frac{c}{2(d-2)!} = \ln 2.
$$
\end{theorem}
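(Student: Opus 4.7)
The plan is to establish an FO convergence law for $G^{S_d}(n,c/n^{d-1})$ and then determine precisely when the closure of its set of FO limits equals $[0,1]$. Writing $\lambda:=c/(d-2)!$ (the mean offspring size of the natural exploration branching process on the hypergraph) and $\mu(\lambda):=\frac{1}{2}\ln\frac{1}{1-\lambda}-\frac{\lambda}{2}$, the defining equation for $c_0^{S_d}$ is $\mu(\lambda)=\ln 2$, so the task becomes: the limit set is dense in $[0,1]$ iff $\mu(\lambda)\geq\ln 2$. Since $\mu(\lambda)\to\infty$ as $\lambda\to 1^-$, the threshold lies in the subcritical regime $\lambda<1$, where the main analysis happens; the supercritical regime $\lambda\geq 1$ is handled separately, by exploiting the ``outside-the-giant'' part (which for $\lambda$ moderately above $1$ is a subcritical hypergraph with effective parameter still past the threshold, by the classical duality), and by detecting structural features of the giant for larger $\lambda$.

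In the subcritical regime, I would adapt Lynch's analysis from $G(n,c/n)$ to $d$-uniform unoriented hypergraphs. The first observation is that whp $G^{S_d}(n,c/n^{d-1})$ is a disjoint union of hypertree components together with $O(1)$ unicyclic components, each of bounded size. The counts of hypertree components of each fixed isomorphism type are of order $\Theta(n)$ and concentrate around their means, hence contribute deterministically to every FO limit, while standard Poisson approximation shows that the counts $X_T$ of unicyclic components of each fixed isomorphism type $T$ converge jointly to independent random variables $\mathrm{Pois}(\mu_T)$, with $\sum_T\mu_T=\mu(\lambda)$ by a direct branching-process computation (using that the ``cycle of $k$ hyperedges'' contribution is $\lambda^k/(2k)$). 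An Ehrenfeucht--Fra\"{\i}ss\'{e} game argument then identifies the limiting truth probability of any FO sentence with the probability of some event in the product algebra generated by the $X_T$, and conversely every such event is FO-definable by ``there are exactly $k$ unicyclic components isomorphic to $T$'' sentences.

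The third step is to decide when the closure of the set of these probabilities equals $[0,1]$. The structural direction is easy: the atom $E_0:=\{X_T=0\text{ for all }T\}$ has probability $e^{-\mu(\lambda)}$, and every FO-limit event $A$ either is disjoint from $E_0$ (so $\mathrm{Pr}(A)\leq 1-e^{-\mu(\lambda)}$) or contains $E_0$ (so $\mathrm{Pr}(A)\geq e^{-\mu(\lambda)}$), producing a forbidden open gap $(1-e^{-\mu(\lambda)},e^{-\mu(\lambda)})$ in the closure whenever $\mu(\lambda)<\ln 2$. For the converse, density when $\mu(\lambda)\geq\ln 2$, one approximates each target $p\in[0,1]$ by probabilities of finite disjoint unions of events $\{X_{T_i}\geq 1\}$: since $\mu_T=\Theta(\lambda^{|T|})$ becomes arbitrarily small with $|T|$, finite subsums of $\{\mu_T\}$ are dense in $[0,\mu(\lambda)]$, and the corresponding product $1-\prod_i e^{-\mu_{T_i}}$ sweeps out a dense subset of $[0,1-e^{-\mu(\lambda)}]$; combining with its complementary events covers $[e^{-\mu(\lambda)},1]$, and the two halves meet exactly when $e^{-\mu(\lambda)}\leq 1/2$.

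I expect the main obstacle to be the density direction of the third step. It is not enough to verify the numerical density of subsums of $\{\mu_T\}$: one must also check that every finite Boolean combination of the Poisson-count events $\{X_T\geq k\}$ is actually realisable by a single FO sentence, with quantifier depth controlled in terms of the maximum size of a type $T$ used, and with care taken that hypertree components of distinct isomorphism types are simultaneously distinguishable in FO. This combinatorial-logical bookkeeping---matching the right Ehrenfeucht--Fra\"{\i}ss\'{e} depth to the Poisson events being approximated---is the most delicate technical ingredient and is the core of the Larrauri--M\"{u}ller--Noy argument.
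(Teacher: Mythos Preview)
The paper does not prove this theorem---it is quoted as a result of Larrauri, M\"{u}ller, and Noy~\cite{LMN}, so there is no proof in the paper to compare your proposal against. Your outline does match the strategy of~\cite{LMN}, and the present paper in fact reproduces one half of that strategy (the ``not dense'' direction only) when establishing the oriented analogue in Claim~\ref{CL:CLAIM10}: there the Poisson limit for the total cycle count (Claim~\ref{hc}, also cited from~\cite{LMN}) is combined with a conditional FO 0--1 law on the acyclic event (Lemma~\ref{lemmaTrees}) to produce the forbidden gap $(1-e^{-\mu(\lambda)},e^{-\mu(\lambda)})$ whenever $\mu(\lambda)<\ln 2$, exactly as in your third step. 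The density direction, which you correctly identify as the more delicate part, is not reproduced anywhere in this paper and lives entirely in~\cite{LMN}.
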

Using stochastic FO reductions, we generalise this result to all possible orientations. \\

In \cite{L21} it was proven that $G^H(n, \frac{c}{n^{d-1}})$ obeys the FO convergence law. For $c>0$, let $L_c^H$ be the set of $\lim\limits_{n \to \infty} \mathrm{Pr}(G^H(n, \frac{c}{n^{d-1}}) \models \varphi)$ over all FO sentences $\varphi$. Let us denote by $c_0^H$ the infimum of the set of positive numbers $c$ such that $L_c^H$ is dense in $[0,1]$. In this section, we prove the following.

\begin{theorem}\label{generalLMN}
Let $H$ be a subgroup of $S_d$, $c>0$, and let $d \geq 2$ be an integer. Then, $c_0^H = \frac{|H|}{d!}c_0^{S_d}$. Moreover, for $c \geq c_0^H$, $L_c^H$ is dense.
\end{theorem}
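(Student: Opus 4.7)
The plan is to derive $c_0^H = \tfrac{|H|}{d!} c_0^{S_d}$, together with the density of $L_c^H$ for all $c \geq c_0^H$, from two complementary stochastic FO reductions: an ``existential'' dual of the reduction in Theorem \ref{groupRed} that passes between $G^H$ and $G^{S_d}$, and the analogous one between $G^H$ and $G^{\{\mathrm{id}\}}$. Together with Corollary \ref{corShift} (which absorbs the lower-order error when replacing $1-(1-p)^k$ by $kp$) and the tight bound $c_0^{\{\mathrm{id}\}} = \tfrac{1}{d!} c_0^{S_d}$ for the fully oriented case, which the authors note is obtainable by the same \cite{LMN} method that gives $c_0^{S_d}$, these reductions yield both inequalities.

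For the upper bound $c_0^H \leq \tfrac{|H|}{d!} c_0^{S_d}$ (and the moreover part), I set up the reduction defined by the formula
$$
\psi_{P_{S_d}}(x_1, \dots, x_d) = \bigvee_{g \in S_d} P_H(x_{g^{-1}(1)}, \dots, x_{g^{-1}(d)}).
$$
This formula is $S_d$-invariant because the right-regular action of $S_d$ merely permutes the disjuncts. Since $P_H$ is $H$-invariant, the $d!$ disjuncts coincide in groups of $|H|$ (one group per right coset $Hg$), so each $S_d$-hyperedge is declared present iff at least one of $d!/|H|$ independent underlying $H$-hyperedges is present. By Claim \ref{redFormula} this yields $G^{S_d}(n, 1-(1-p)^{d!/|H|}) \preceq G^H(n, p)$. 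Setting $p = c/n^{d-1}$, expanding $1-(1-p)^{d!/|H|} = \tfrac{d!c/|H|}{n^{d-1}} + O(n^{-2(d-1)})$, and verifying the hypothesis of Corollary \ref{corShift} (the condition reduces to $n^{(3-2d)/2} \to 0$, which holds for all $d \geq 2$) promotes this to $G^{S_d}(n, \tfrac{d!c/|H|}{n^{d-1}}) \preceq G^H(n, \tfrac{c}{n^{d-1}})$. Claim \ref{redFOC} then gives $L_{d!c/|H|}^{S_d} \subseteq L_c^H$, and Theorem \ref{thmLMN} applied when $d!c/|H| \geq c_0^{S_d}$ (i.e. $c \geq \tfrac{|H|}{d!} c_0^{S_d}$) delivers density of $L_c^H$.

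For the lower bound $c_0^H \geq \tfrac{|H|}{d!} c_0^{S_d}$, I symmetrically construct the reduction defined by
$$
\psi_{P_H}(x_1, \dots, x_d) = \bigvee_{h \in H} P_{\mathrm{id}}(x_{h^{-1}(1)}, \dots, x_{h^{-1}(d)}),
$$
which is manifestly $H$-invariant and, since $H$ acts freely on $d$-tuples of distinct elements, produces $G^H(n, 1-(1-p)^{|H|}) \preceq G^{\{\mathrm{id}\}}(n, p)$ via Claim \ref{redFormula}. Setting $p = \tfrac{c}{|H| n^{d-1}}$ and applying Corollary \ref{corShift} exactly as above gives $G^H(n, \tfrac{c}{n^{d-1}}) \preceq G^{\{\mathrm{id}\}}(n, \tfrac{c}{|H| n^{d-1}})$. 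Hence $L_c^H \subseteq L_{c/|H|}^{\{\mathrm{id}\}}$; if $c < \tfrac{|H|}{d!} c_0^{S_d}$ then $c/|H| < c_0^{\{\mathrm{id}\}}$, so by the tight oriented bound $L_{c/|H|}^{\{\mathrm{id}\}}$ is not dense in $[0,1]$, and neither is $L_c^H$.

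The main technical obstacle is confirming that the passage from $1-(1-p)^k$ to $kp$ is a genuine equivalence of random structures rather than a mere probabilistic approximation; this amounts to checking Corollary \ref{corShift} with the natural Bernoulli encoding (one coordinate per $H$- or $S_d$-orbit, giving $s_n = \Theta(n^d)$), $p_{\min} = \Theta(n^{-(d-1)})$, and $|p-q| = O(n^{-2(d-1)})$. Everything else is routine: the $S_d$- and $H$-invariance of the two formulas is immediate from the coset decomposition, and loops are excluded automatically because any atom $P_H$ or $P_{\mathrm{id}}$ on a tuple with coincident coordinates is false by definition.
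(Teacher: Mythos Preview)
Your proposal is correct and follows essentially the same route as the paper. The paper packages the two reductions as a single Proposition~\ref{reduceH} ($G^H(n,[H{:}K]c/n^{d-1})\preceq G^K(n,c/n^{d-1})$ for $K\leq H$), applied once with $(S_d,H)$ and once with $(H,\{\mathrm{id}\})$, and then invokes Claim~\ref{CL:CLAIM10} for the oriented lower bound---exactly your plan. The only cosmetic difference is that the paper obtains each reduction by composing the \emph{conjunction} formula of Theorem~\ref{groupRed} with two complementations ($G^H(n,p)\cong G^H(n,1-p)$), whereas you write down the De~Morgan dual \emph{disjunction} formula directly; the Corollary~\ref{corShift} verification (with $s_n=\Theta(n^d)$, $p_{\min}=\Theta(n^{1-d})$, $|p-q|=O(n^{2-2d})$, hence error $O(n^{(3-2d)/2})\to 0$) is identical in both presentations.
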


\begin{proof}
Theorem~\ref{generalLMN} follows from Claims~\ref{claim9}~and~\ref{CL:CLAIM10} stated below, their proofs appear after the proof of Theorem~\ref{generalLMN}.

\begin{claim}\label{claim9}
Let $H$ be a subgroup of $S_d$, $K$ be a subgroup of $H$, $c>0$, and let $d \geq 2$ be an integer. Then, $c_0^H \geq [H:K]c_0^K$. Moreover, for any $c \geq \frac{|H|}{d!}c_0^{S_d}$, the set $L_c^H$ is dense.
\end{claim}

\begin{claim}\label{CL:CLAIM10}
For $c < c_0^{S_d}/d!$, the set $L_c^{\{id\}}$ is not dense in $[0,1]$.
\end{claim}

By Claim \ref{CL:CLAIM10}, $c_0^{\{id\}} \geq c_0^{S_d}/d!$. Combining with Claim \ref{claim9}, we get $\frac{|H|}{d!}c_0^{S_d} \geq c_0^{H} \geq |H| c_0^{\{id\}} \geq \frac{|H|}{d!}c_0^{S_d}$. Thus, $c_0^H=\frac{|H|}{d!}c_0^{S_d}$. Moreover, due to Claim~\ref{claim9}, for each $c \geq c_0$, $L_c^H$ is dense, completing the proof of Theorem~\ref{generalLMN}.

\end{proof}

\begin{proof}[Proof of Claim~\ref{claim9}.]
Note that $G^H(n, p)$ is equivalent to $G^H(n, 1 - p)$ due to reductions defined by negations. From Theorem~\ref{groupRed} and Corollary~\ref{corShift}, we have the following reductions
\begin{align*}
G^H(n, [H:K]cn^{1-d}) & \preceq G^H(n, 1 - [H:K]cn^{1-d}) \\
 & \preceq G^H(n, (1 - cn^{1-d})^{[H:K]})
 \preceq G^K(n, 1 - cn^{1-d}) \preceq G^K(n, cn^{1-d}).
\end{align*}
We immediately get
\begin{proposition}\label{reduceH}
$G^H(n, [H:K]c/n^{d-1}) \preceq G^K(n, c/n^{d-1})$. 
\end{proposition}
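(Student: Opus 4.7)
The plan is to simply assemble the four-link chain of reductions already displayed just above the proposition. So I would verify each link and conclude by transitivity (Claim~\ref{redpreorder}).

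First, I would establish the two ``negation'' equivalences at the endpoints. The reduction defined by the formula $\psi_{P_H}(x_1,\ldots,x_d) = \neg P_H(x_1,\ldots,x_d)$ (combined with the distinctness condition on the $x_i$'s, so that loops remain absent) shows by Claim~\ref{redFormula} that $G^H(n,p)$ and $G^H(n,1-p)$ are equivalent, and similarly for $K$. This gives the first and last links
$G^H(n,[H{:}K]cn^{1-d}) \preceq G^H(n,1-[H{:}K]cn^{1-d})$ and $G^K(n,1-cn^{1-d}) \preceq G^K(n,cn^{1-d})$.

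The third link, $G^H(n,(1-cn^{1-d})^{[H:K]}) \preceq G^K(n,1-cn^{1-d})$, is a direct application of Theorem~\ref{groupRed} with the parameter $p := 1 - cn^{1-d}$.

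The only link that requires a nontrivial check is the middle one: the equivalence of $G^H(n,1-[H{:}K]cn^{1-d})$ and $G^H(n,(1-cn^{1-d})^{[H:K]})$ via Corollary~\ref{corShift}. For this I would note that, viewed as $B(r_n,p)$ and $B(r_n,q)$ for the natural map $r_n$ from Bernoulli bits (indexed by $H$-orbits of $d$-tuples of distinct vertices) to $H$-hypergraphs, the number of Bernoulli bits is $s_n = \Theta(n^d)$ and the smaller of the two probabilities $p_{\min}$ is of order $n^{1-d}$. The binomial expansion gives
\[
\bigl|(1-cn^{1-d})^{[H:K]} - (1-[H{:}K]cn^{1-d})\bigr| = O(n^{2(1-d)}).
\]
Hence $\sqrt{s_n/p_{\min}}\,|p-q| = O\bigl(n^{(2d-1)/2}\cdot n^{2(1-d)}\bigr) = O(n^{3/2-d})$, which tends to $0$ for every $d \geq 2$. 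So Corollary~\ref{corShift} applies and this link is an equivalence.

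I expect the main (and really only) obstacle to be the bookkeeping in the middle link: one has to make sure the parameter $s_n$ counts $H$-orbits rather than all $d$-tuples, so that the binomial structure underlying $G^H(n,p)$ matches the form $B(r_n,p)$ required by Corollary~\ref{corShift}, and then check that the resulting exponents work out for all $d \geq 2$. Once these four reductions are in place, transitivity of $\preceq$ (Claim~\ref{redpreorder}) gives the proposition.
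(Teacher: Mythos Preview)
Your proposal is correct and follows exactly the same four-link chain the paper displays immediately before stating the proposition; indeed you supply more detail than the paper, which simply cites Theorem~\ref{groupRed} and Corollary~\ref{corShift} without writing out the verification of the $\sqrt{s_n/p_{\min}}\,|p-q|\to 0$ hypothesis. Your bookkeeping for the middle link (with $s_n=\Theta(n^d)$, $p_{\min}=\Theta(n^{1-d})$, $|p-q|=O(n^{2(1-d)})$, giving $O(n^{3/2-d})\to 0$ for $d\ge 2$) is exactly what is needed.
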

By Proposition~\ref{reduceH}, for all $c>0$, $L_{[H:K]c}^H \subseteq L_c^K$. Therefore, if $L^H_{[H:K]c}$ is dense in $[0,1]$, then $L^K_c$ is dense in $[0,1]$ as well. Hence, $c_0^H \geq [H:K]c_0^K$. Applying this inequality for $S_d$ and $H$, we get $\frac{|H|}{d!}c_0^{S_d} = [S_d:H]^{-1}c_0^{S_d} \geq c_0^H$. Also, for each $c \geq \frac{|H|}{d!}c_0^{S_d}$, $L_{d!c/|H|}^{S_d}$ is dense in $[0,1]$ by Theorem~\ref{thmLMN}. By Proposition~\ref{reduceH}, $G^{S_d}(n, d!c/(|H|n^{d-1})) \preceq G^H(n, c/n^{d-1})$. Therefore, the set $L_c^H$ is dense in $[0,1]$ as well, completing the proof.
\end{proof}


In order to prove Claim~\ref{CL:CLAIM10}, we apply the strategy that was used in~\cite{LMN} to derive the analogous statement for {\it unoriented} hypergraphs. The proof is based on a Poisson limit theorem for the number of small cycles in random hypergraphs and the validity of the FO 0--1 law subject to the absence of cycles. Let us first recall these auxiliary results.

\begin{claim}[Larrauri, M\"{u}ller, Noy~\cite{LMN}]\label{hc}
Let $p \sim \frac{c}{n^{d-1}}$ with $c>0$. Set
$$
f(c) = \sum_{k \geq 2} \frac{\left(\frac{c}{(d-2)!}\right)^k}{2k} = \frac{1}{2} \ln \frac{1}{1 - \frac{c}{(d-2)!}} - \frac{c}{2(d-2)!}.
$$
Let $X_n$ be the total number of cycles in $G^{S_d}(n, p)$. Then 
$$
\lim\limits_{n \to \infty}\mathbb{E}(X_n) = f(c),
$$
$$
\lim\limits_{n \to \infty} \mathrm{Pr}(X_n = 0) = e^{-f(c)} = e^{\frac{c}{2(d-2)!}} \sqrt{1 - \frac{c}{(d-2)!}}.
$$
\end{claim}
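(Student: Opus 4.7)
The plan is to compute $\mathbb{E}[X_n]$ by a direct count of cycles of each length and then upgrade to a full Poisson limit law by the method of factorial moments, from which the formula for $\mathrm{Pr}(X_n=0)$ falls out. Write $X_n=\sum_{k\geq 2}X_n^{(k)}$, where $X_n^{(k)}$ counts (loose Berge) cycles of length $k$: cyclic sequences of $k$ distinct hyperedges with exactly one vertex shared between consecutive pairs and no further coincidences, spanning $k(d-1)$ vertices ($k$ ``junctions'' and $k(d-2)$ interior vertices). For a fixed $k$, one counts such configurations on a chosen set of $k(d-1)$ labelled vertices by selecting the junctions ($\binom{k(d-1)}{k}$ ways), cyclically ordering them up to rotation and reflection ($(k-1)!/2$ ways), and distributing the remaining $k(d-2)$ vertices into $k$ unordered blocks of size $d-2$, one per hyperedge ($(k(d-2))!/((d-2)!)^k$ ways). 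Multiplying by $\binom{n}{k(d-1)}$ and $p^k=(c/n^{d-1})^k$ and simplifying via $\binom{k(d-1)}{k}(k(d-2))!=(k(d-1))!/k!$ yields
\[
\mathbb{E}[X_n^{(k)}]=(1+o(1))\cdot\frac{1}{2k}\left(\frac{c}{(d-2)!}\right)^k \qquad (n\to\infty).
\]

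Next I would interchange $\lim_n$ with $\sum_{k\geq 2}$. A crude uniform upper bound on $\mathbb{E}[X_n^{(k)}]$ valid for all $2\leq k\leq n$, coming from the same count without the asymptotic approximation, makes this legitimate. Summing via $\sum_{k\geq 2}x^k/(2k)=\tfrac12\ln(1/(1-x))-x/2$ (applicable because the relevant range is $c/(d-2)!<1$) gives $\mathbb{E}[X_n]\to f(c)$. For the Poisson limit, I would then prove joint factorial moment convergence: for every $K\geq 2$ and nonnegative integers $j_2,\ldots,j_K$,
\[
\mathbb{E}\!\left[\prod_{k=2}^{K}(X_n^{(k)})_{(j_k)}\right]\longrightarrow \prod_{k=2}^{K}\lambda_k^{j_k},\qquad \lambda_k=\frac{1}{2k}\left(\frac{c}{(d-2)!}\right)^k.
\]
This is the standard overlap argument: the falling factorials enumerate $\prod j_k!$ times the number of $(\sum j_k)$-tuples of pairwise distinct cycles; the dominant contribution comes from tuples whose cycles are pairwise vertex- and edge-disjoint (which factorises as $\prod\lambda_k^{j_k}$), while any overlap saves at least one vertex-factor that is not compensated by the probability, so the overlapping configurations contribute $o(1)$. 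By the method of moments, $(X_n^{(k)})_{2\leq k\leq K}$ converges jointly in distribution to independent $\mathrm{Pois}(\lambda_k)$.

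The identity $\mathrm{Pr}(X_n=0)\to e^{-f(c)}$ then falls out of the sandwich
\[
\mathrm{Pr}\biggl(\sum_{k=2}^{K}X_n^{(k)}=0\biggr)-\mathbb{E}\biggl[\sum_{k>K}X_n^{(k)}\biggr]\leq \mathrm{Pr}(X_n=0)\leq \mathrm{Pr}\biggl(\sum_{k=2}^{K}X_n^{(k)}=0\biggr)
\]
by sending $n\to\infty$ (using the joint Poisson convergence) and then $K\to\infty$ (using $\sum_{k\geq 2}\lambda_k=f(c)<\infty$). The main obstacle, as so often in such Poissonisation arguments, is precisely the uniform tail control of $\mathbb{E}[X_n^{(k)}]$ for $k$ growing with $n$: it is what authorises both the swap of $\lim_n$ with $\sum_{k\geq 2}$ in the first-moment calculation and the interchange of the limits in $n$ and $K$ in the final sandwich.
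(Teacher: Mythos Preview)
The paper does not prove this claim at all: it is quoted verbatim from Larrauri, M\"uller and Noy~\cite{LMN} and used as a black box to derive Corollary~\ref{dhc}. So there is no in-paper proof to compare your attempt against.

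That said, your outline is the standard argument and is correct in substance. One small verbal slip: when you distribute the $k(d-2)$ interior vertices ``into $k$ unordered blocks of size $d-2$'' you actually want \emph{ordered} blocks (each block is attached to a specific edge in the already fixed cyclic order of junctions), and indeed the multinomial expression $(k(d-2))!/((d-2)!)^k$ you wrote is the count for ordered blocks; if the blocks were genuinely unordered you would have to divide by a further $k!$. With that wording fixed, your first-moment computation and the subsequent joint-factorial-moment Poisson argument are exactly how this is done in~\cite{LMN}. Note also that the series identity $\sum_{k\ge 2}x^k/(2k)=\tfrac12\ln\frac{1}{1-x}-\tfrac{x}{2}$ requires $x=c/(d-2)!<1$; this restriction is implicit (and satisfied) in the paper's applications, but the claim as displayed does not state it, so strictly speaking your proof covers the case $c<(d-2)!$ --- which is the only one the paper needs.
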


We define a cycle in an $d$-uniform oriented hypergraph as an oriented hypergraph with a set $W$ of $(d-1)s$ vertices and $s$ hyperedges such that there is no proper $W' \subset W$ inducing at least $\geq \frac{|W'|}{d-1}$ hyperedges in this hypergraph. Let us show that Claim~\ref{hc} implies

\begin{corollary}\label{dhc}
Let $p \sim \frac{c}{n^{d-1}}$ with $c>0$. Set
\begin{align}
f(c) = \sum_{k \geq 2} \frac{\left(d(d-1)c\right)^k}{2k} = \frac{1}{2} \ln \frac{1}{1 - d(d-1)c} - \frac{d(d-1)c}{2}. \label{eq:1Cor3}
\end{align}
Let $X_n$ be the total number of cycles in $G^{\{id\}}(n, p)$. Then 
$$
\lim\limits_{n \to \infty}\mathbb{E}(X_n)=f(c),
$$
$$
\lim\limits_{n \to \infty} \mathrm{Pr}(X_n = 0) = e^{-f(c)} = e^{\frac{d(d-1)c}{2}} \sqrt{1 - d(d-1)c}.
$$
\end{corollary}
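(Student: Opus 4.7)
The plan is to mirror the proof of Claim~\ref{hc} for $G^{S_d}(n,p)$, the only structural change being that in the setting of $G^{\{id\}}(n,p)$ each underlying unoriented cycle admits $(d!)^k$ different orientations of its $k$ hyperedges. First I would compute the expectation $\mathbb{E}[X_n^{(k)}]$, where $X_n^{(k)}$ is the number of cycles of length $k$. Such a cycle has $(d-1)k$ vertices, and on a fixed vertex set of that size the number of underlying unoriented cycle structures is $\frac{((d-1)k)!}{2k((d-2)!)^k}$, as is implicit in \cite{LMN}. Independently replacing each unoriented hyperedge by any of the $d!$ ordered tuples representing it multiplies the count by $(d!)^k$, so the number of labelled oriented cycles in $[n]$ is $(1+o(1))\frac{n^{(d-1)k}(d!)^k}{2k((d-2)!)^k}$. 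Multiplying by $p^k=(c/n^{d-1})^k$ and using $d!/(d-2)!=d(d-1)$ yields $\mathbb{E}[X_n^{(k)}]\to\frac{(d(d-1)c)^k}{2k}=:\lambda_k$; summing over $k\geq 2$ then gives $\lim_n\mathbb{E}[X_n]=f(c)$.

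Next I would upgrade this to joint Poisson convergence of $(X_n^{(k)})_{k=2}^K$ for every fixed $K$ by the method of factorial moments. The key estimate---that the expected number of ordered $r$-tuples of cycles of length $k$ sharing at least one vertex is negligible compared to $\lambda_k^r$---is essentially the same as in the unoriented case: the orientation multiplicity $(d!)^{rk}$ is multiplicative over hyperedges and so does not alter the dominant $n$-power. The same computation simultaneously yields independence of the limiting Poissons for different $k$.

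Finally I would remove the cutoff in $k$. In the regime $d(d-1)c<1$---which contains the range $c<c_0^{S_d}/d!\leq\frac{1}{d(d-1)}$ relevant for the application in Claim~\ref{CL:CLAIM10}---the tail $\sum_{k>K}\lambda_k$ tends to $0$ as $K\to\infty$, and combined with a uniform bound $\mathbb{E}[X_n^{(k)}]\leq(1+o_K(1))\lambda_k$ this gives that $X_n$ converges in distribution to $\mathrm{Pois}(f(c))$. Consequently $\mathbb{E}[X_n]\to f(c)$ and $\mathrm{Pr}(X_n=0)\to e^{-f(c)}$, and the closed form $f(c)=\tfrac{1}{2}\ln\frac{1}{1-d(d-1)c}-\frac{d(d-1)c}{2}$ follows from the Taylor expansion of $-\tfrac{1}{2}\ln(1-x)$ at $x=d(d-1)c$.

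The main obstacle is not conceptual but careful bookkeeping: verifying that the factorial-moment estimates of \cite{LMN} transfer verbatim once each unoriented hyperedge carries an orientation, which boils down to noting that the orientation multiplicities factor multiplicatively across hyperedges and so do not disturb the dominant-power counting. A naive shortcut via the reductions of Section~\ref{SC:SFOR}---for instance, the reduction $G^{S_d}(n,1-(1-p)^{d!})\preceq G^{\{id\}}(n,p)$ implicit in Theorem~\ref{groupRed}---is not available, because cycle count is not captured by any single FO sentence.
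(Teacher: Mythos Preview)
Your approach is correct but takes a substantially longer route than the paper. You essentially re-prove Claim~\ref{hc} in the oriented setting by redoing the factorial-moment computation with the extra $(d!)^k$ orientation factor. The paper instead derives Corollary~\ref{dhc} directly from Claim~\ref{hc} via a two-line probabilistic coupling: map $G\sim G^{\{id\}}(n,p)$ to the unoriented hypergraph $G'$ obtained by forgetting orientations; then (i) whp no two oriented hyperedges of $G$ share the same underlying $d$-set, so the number of cycles in $G$ equals the number of cycles in $G'$, and (ii) $G'\sim G^{S_d}(n,p')$ with $p'=1-(1-p)^{d!}\sim d!\,c/n^{d-1}$. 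Applying Claim~\ref{hc} with $c$ replaced by $d!\,c$ gives the result immediately, since $d!\,c/(d-2)!=d(d-1)c$.

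Your closing remark is half right: the stochastic FO reductions of Section~\ref{SC:SFOR} indeed do not transfer cycle counts. But the paper's shortcut is not an FO reduction --- it is a pointwise map on sample spaces that happens to preserve the cycle count almost surely and to push $G^{\{id\}}(n,p)$ forward to $G^{S_d}(n,p')$. No first-order expressibility is needed, only a distributional identity. Your version works and would be the right thing to do if Claim~\ref{hc} were unavailable, but here it duplicates effort.
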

\begin{proof}
For each oriented hypergraph $G$, we consider the unoriented hypergraph $G'$ obtained by erasing orientations: there is a hyperedge $\{v_1, \dots, v_d\}$ in $G'$ if and only if there is an oriented edge $(v_{\sigma(1)}, \dots, v_{\sigma(d)})$ in $G$, for some $\sigma \in S_d$. The corollary follows from the next two observations. 
\begin{itemize}
\item
For $G \sim G^{\{id\}}(n, p)$, the number of cycles in $G$ equals to the number of cycles in $G'$ whp, since whp $G$ does not contain two different hyperedges on the same set of $d$ vertices.
\item
If $G \sim G^{\{id\}}(n, p)$, then $G' \sim G^{S_d}(n, p')$, where $p' \sim \frac{d!c}{n^{d-1}}$. 
\end{itemize}

Then, we conclude that the number of cycles in $G^{\{id\}}(n, p)$ has the same distribution as the number of cycles in $G^{S_d}(n, p')$ asymptotically. Then, replacing of $c$ by $cd!$ in the statement of Claim~\ref{hc} yields Corollary~\ref{dhc}.

\end{proof}

Finally, let us state the validity of the FO 0--1 law subject to the absence of cycles. 

\begin{lemma}\label{lemmaTrees}
Let $p \sim \frac{c}{n^{d-1}}$ with $0<c<\frac{1}{d(d-1)}$. Let $X_n$ be the random variable equal to the total number of cycles in $G^{\{id\}}(n, p)$. Let $\varphi$ be a FO sentence. Then,
$$
\lim\limits_{n \to \infty} \mathrm{Pr}(G^{\{id\}}(n, p) \models \varphi\mid X_n = 0) \in \{0, 1\}.
$$
\end{lemma}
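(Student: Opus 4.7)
The plan is to reduce the claim to a locality argument on hyperforests. Since $c<\tfrac{1}{d(d-1)}$, Corollary~\ref{dhc} gives $\mathrm{Pr}(X_n=0)\to e^{-f(c)}>0$, so conditional and unconditional probabilities differ only by a bounded multiplicative factor, and it suffices to prove the required concentration estimates under the unconditional law. On the event $\{X_n=0\}$ there are no cycles in the sense of the paper, so by the Euler-type identity $v=(d-1)e+1$ within each connected component, the hypergraph is a disjoint union of hypertrees.

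The main combinatorial step is to show that, for every fixed isomorphism class $T$ of a connected $d$-uniform oriented hypertree with $\ell$ hyperedges and $\ell(d-1)+1$ vertices, the number $N_T$ of connected components of $G^{\{id\}}(n,p)$ isomorphic to $T$ satisfies $N_T=\Theta(n)$ w.h.p. The number of labelled embeddings of $T$ is $\Theta(n^{\ell(d-1)+1})$, each becomes the support of a connected component with probability $p^\ell(1-p)^{\Theta(n^{d-1})}=\Theta(n^{-\ell(d-1)})$, giving $\mathbb{E}\,N_T\sim\alpha_T n$ for an explicit constant $\alpha_T>0$; a routine second-moment computation (two vertex-disjoint candidates are asymptotically independent) yields $\mathrm{Var}\,N_T=O(\mathbb{E}\,N_T)$, and hence $N_T\geq k$ w.h.p.\ for every fixed $k$. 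A subcritical branching-process comparison (which applies a fortiori since $\tfrac{1}{d(d-1)}<(d-2)!$) further shows that w.h.p.\ every connected component of $G^{\{id\}}(n,p)$ has $O(\log n)$ hyperedges.

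A standard Gaifman / Ehrenfeucht--Fra\"{\i}ss\'e locality argument for disjoint unions of hypertrees then produces, for every FO sentence $\varphi$ of quantifier rank $r$, a size bound $s=s(r)$ and a threshold $k=k(r)$ such that any two hyperforests agreeing, for every hypertree type $T$ of size $\leq s$, on the number of $T$-components up to $k$, and whose $r$-ball types appearing in larger components are also realised by some vertex of a bounded-size component, are indistinguishable by $\varphi$. By the previous step, with conditional probability $1-o(1)$ both conditions hold for $G^{\{id\}}(n,p)\mid X_n=0$ and for any fixed ``reference'' hyperforest with the same property, so $\mathrm{Pr}(G^{\{id\}}(n,p)\models\varphi\mid X_n=0)$ converges to $0$ or $1$ according to the truth of $\varphi$ in the reference.

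The hardest part is implementing the locality step for the oriented-hypergraph signature with the paper's notion of cycle. Hanf's theorem applies cleanly only under bounded degree, which fails for the random model (vertex degrees are merely $O(\log n)$ w.h.p.\ in the subcritical regime), so one must work with Gaifman's more general locality theorem, or perform a direct Ehrenfeucht--Fra\"{\i}ss\'e analysis on hyperforests; additionally, one must verify that the $r$-ball types appearing in the (few) components of size growing with $n$ are realised by some vertex of a bounded-size component, so that the large components contribute nothing new at the level of FO sentences of quantifier rank $r$. Modulo these technicalities, the argument parallels Lynch's analysis of $G(n,c/n)$ in~\cite{Ly92}, the key gain being that conditioning on $X_n=0$ removes the Poisson cycle contribution responsible for non-convergent probabilities.
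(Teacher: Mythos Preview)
Your proposal is correct and takes essentially the same approach as the paper, which explicitly omits the proof and states it is ``literally the same'' as that of Lemma~4.7 in~\cite{LMN}, being ``a direct corollary of the fact that the number of connected components in $G^{\{id\}}(n,p)$ isomorphic to any fixed tree is not bounded in probability''---precisely the content of your second-moment step. The concerns you raise in the final paragraph about Hanf locality and large components are somewhat overcautious: the clean route (and presumably that of~\cite{LMN}) bypasses Hanf entirely and uses the Ehrenfeucht--Fra\"{\i}ss\'e game on disjoint unions directly, which needs only that there are finitely many $k$-equivalence classes of connected oriented $d$-hypertrees, each realised by some bounded-size tree that appears as a component at least $k$ times w.h.p.; no degree bound is required, and the growing components are absorbed automatically since their $k$-type is already witnessed by the abundant small components.
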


The proof is literally the same as the proof of Lemma 4.7 in~\cite{LMN} that states a similar 0--1 law for unoriented hypergraphs. It is a direct corollary of the fact that the number of connected components in $G^{\{id\}}(n, p)$ isomorphic to any fixed tree is not bounded in probability. So, we omit this proof.

\begin{proof}[Proof of Claim \ref{CL:CLAIM10}]
Note that $\frac{1}{d!}c_0^{S_d}$ is the unique positive solution of $e^{-f(c)}=\frac{1}{2}$, where $f$ is defined in~(\ref{eq:1Cor3}). By Corollary~\ref{dhc}, for each $c < c_0$, we have that 
$$
\lim\limits_{n \to \infty} \mathrm{Pr}(X_n = 0) = e^{-f(c)} > \frac{1}{2}.
$$
By Lemma~\ref{lemmaTrees}, we have that 
$$
\lim\limits_{n \to \infty} \mathrm{Pr}\left(G^{\{id\}}\left(n, \frac{c}{n^{d-1}}\right) \models \varphi \right) \in [0, 1 - e^{-f(c)}] \cup [e^{-f(c)}, 1].
$$
Then, the set $L_c^{\{id\}}$ is not dense in $[0,1]$. 
\end{proof}

\section{Decision problem}
\label{sc:rec_enum}

In this section, we prove that the problem of determining whether, given an input FO sentence $\varphi$, $G(n\mid\varphi)$ obeys the FO 0--1 law is not recursively enumerable. In what follows, we refer to this decision problem as $\mathsf{0}$--$\mathsf{1LAW}$.

\begin{theorem}
$\mathsf{0}$--$\mathsf{1LAW}$ is not recursively enumerable.
\label{th:RE}
\end{theorem}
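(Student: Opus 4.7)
The plan is to reduce from the co-r.e.-complete problem of finite unsatisfiability in first-order logic. By Trakhtenbrot's theorem, for the signature $\tau=\{R\}$ with $R$ a binary relation symbol, the set of $\tau$-sentences $\psi$ having no finite model is co-r.e.-complete, hence not r.e. It therefore suffices to produce a computable map $\psi\mapsto\varphi_\psi$ from $\tau$-sentences to FO graph sentences such that $\psi$ has no finite model iff $G(n\mid\varphi_\psi)$ obeys the FO 0--1 law; the non-r.e.-ness of $\mathsf{0}$--$\mathsf{1LAW}$ then follows. Without loss of generality I assume $\psi\Rightarrow\exists x\,(x=x)$, so that ``no finite model'' means ``no non-empty finite model''.

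The construction of $\varphi_\psi$ combines three ingredients. First, a standard Trakhtenbrot-style FO-definable rigid graph encoding that represents each $m$-element $\tau$-structure $A$ by a graph $E(A)$ on exactly $2m$ vertices; this furnishes a FO formula $\mu(x)$ which recognises encoded vertices in any graph, and a FO sentence $\alpha_\psi$ stating ``the induced subgraph on $\{x:\mu(x)\}$ is an $E(A)$ for some $A\models\psi$''. The two-vertices-per-element choice makes the encoded-vertex count always even, which will drive the final oscillation. Second, the 0--1 law-failing sentence $\beta$ from Theorem~\ref{TH:AXFOC}(ii) with $d=2$ (``$G$ is a disjoint union of $K_2$'s and at most one isolated vertex''); let $\beta^{\neg\mu}$ be its relativisation to non-encoded vertices. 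Third, let $\gamma=\forall x\forall y\,\neg(x\sim y)$ and $\sigma=\forall x\forall y\,((\mu(x)\wedge\neg\mu(y))\Rightarrow\neg(x\sim y))$. Define
\[\varphi_\psi:=(\alpha_\psi\wedge\beta^{\neg\mu}\wedge\sigma)\vee(\neg\alpha_\psi\wedge\gamma).\]

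If $\psi$ has no finite model then $\alpha_\psi$ is false on every finite graph (any non-empty encoded part would yield a finite non-empty $\tau$-model of $\psi$, impossible; the empty encoded part is excluded by the WLOG on $\psi$), so $\varphi_\psi$ collapses to $\gamma$ and $G(n\mid\varphi_\psi)=\bar K_n$ deterministically, giving the 0--1 law. If instead $\psi$ has a finite model of some size $m_0$, the graphs satisfying $\varphi_\psi$ are $\bar K_n$ together with all disjoint unions of an encoded $E(A)$ (for a $\psi$-model $A$ of any size $m$ with $2m\le n$) placed on a $2m$-subset of $[n]$ and a $\beta$-structure on the remaining $n-2m$ vertices; the number of such graphs grows at least polynomially in $n$, so $\bar K_n$ is asymptotically negligible. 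The test sentence
\[\chi^*:=\exists x\,\bigl(\neg\mu(x)\wedge\neg\exists y\,(\neg\mu(y)\wedge x\sim y)\bigr)\]
is true on $\bar K_n$, and on an encoded graph it holds (thanks to $\sigma$, which forces a non-encoded vertex's only neighbours to be non-encoded) iff the $\beta$-part has an isolated vertex, iff $n-2m$ is odd, iff $n$ is odd since $2m$ is always even. Hence $\Pr(G(n\mid\varphi_\psi)\models\chi^*)\to 1$ along odd $n$ and $\to 0$ along even $n$, violating the 0--1 law.

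The main obstacle is the first ingredient: writing the FO encoding $(\mu,\alpha_\psi)$ so that $\mu$ is definable without reference to $\psi$ and the relation $R^A$ is recoverable by a FO formula from the adjacency pattern among encoded vertex-pairs. I would handle this by attaching a distinguishing rigid local marker gadget to each encoded pair in the usual Trakhtenbrot-style manner, and then orient each pair using a second marker to interpret $R$. Once $(\mu,\alpha_\psi)$ is in place, computability of $\psi\mapsto\varphi_\psi$, the dominance of the encoded graphs over $\bar K_n$, and the parity calculation for $\chi^*$ are all routine.
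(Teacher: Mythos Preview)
Your approach is correct and genuinely different from the paper's, though both share the same skeleton: produce $\varphi=(\text{empty-like disjunct})\vee(\text{disjunct whose finite models exist iff the source problem is positive, and all have even size})$, then exhibit a sentence whose probability oscillates with the parity of $n$. The paper reduces from Hilbert's tenth problem, explicitly encoding a system of polynomial equations by graph gadgets whose ``leaf-count'' types represent variables and whose matchings/bijections enforce equality, sum, and product; their $\psi_P$ has only even-sized models because every non-leaf vertex carries an odd number of pendant leaves, and the oscillating sentence is simply $\mathrm{Empty}$. You instead reduce from Trakhtenbrot's theorem via a first-order interpretation: a fixed $\mu$ carves out the encoded part, $\alpha_\psi$ asserts that this part decodes to a $\psi$-model, and a matching-with-at-most-one-isolated-vertex on the complement drives the parity argument. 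Your route is more model-theoretic and modular (any undecidable finite-satisfiability problem would slot in), while the paper's is entirely self-contained and avoids invoking interpretation machinery.

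Two small points to tighten. First, the claim that $E(A)$ can be taken on \emph{exactly} $2m$ vertices is in tension with your own remedy of ``attaching a distinguishing rigid local marker gadget to each encoded pair'', which adds vertices; what the argument actually needs is only that $|V(E(A))|$ is always even and that every vertex of $E(A)$ satisfies $\mu$ while no vertex of a matching does---this is easy to arrange (e.g.\ make every gadget contain a triangle and take $\mu(x)$ to be ``$x$ lies in the connected component of a triangle''), but the count will typically be $cm$ for some even $c>2$, or even depend on $A$. Second, you should make explicit that $\alpha_\psi$ includes the clause $\exists x\,\mu(x)$ (so that on $\bar K_n$, where $\{x:\mu(x)\}=\varnothing$, one does not have to evaluate the interpreted $\psi$ on an empty structure) and that the class $\{E(A):A\text{ a }\tau\text{-structure}\}$ is itself FO-axiomatisable in the graph signature; both are standard but worth stating, since the correctness of the first disjunct hinges on the equivalence ``$G$ satisfies it iff $G=E(A)\sqcup B$ with $A\models\psi$ and $B$ a near-perfect matching'', and that equivalence needs both directions.
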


\begin{proof}
It is enough to reduce the problem of deciding whether a Diophantine equation has a solution ($\mathsf{DE}$ in what follows) to the complement of $\mathsf{0}$--$\mathsf{1LAW}$.  Indeed, $\mathsf{DE}$ is complete in the class of recursively enumerable languages due to Matiiasevich \cite{M70} (see also~\cite{M93} for a survey on the negative solution of Hilbert's 10th problem), then its complement $\overline{\mathsf{DE}}$ is not recursively enumerable, and therefore from the reduction we would get that $\mathsf{0}$--$\mathsf{1LAW}$ is not recursively enumerable as well.

So, for each integer polynomial $P(x_1, \dots, x_k)$ we shall compute a FO sentence $\varphi_P$ such that $P(x_1, \dots, x_k) = 0$ has an integer solution if and only if $G(n\mid\varphi_P)$ does not obey the FO 0--1 law. 

We first reduce the problem for integer solutions to a problem for positive integer solutions. For integer polynomial $P(x_1, \dots, x_k)$, there is a solution of $P(x_1, \dots, x_k) = 0$ in integers if and only if $P(y_1-z_1, \dots, y_k-z_k) = 0$ has a solution in positive integers. Also, we can move all monomials with negative coefficients in $P(y_1-z_1, \dots, y_k-z_k) = 0$ to the right-hand side of the equality. Then, we get an equation $Q(y_1, z_1, \dots, y_k, z_k) = R(y_1, z_1, \dots, y_k, z_k)$, where $Q$ and $R$ have nonnegative integer coefficients. 

Let us now consider a system of equations $\mathcal{S}$ that has a solution in positive integers if and only if $Q=R$ has such a solution. Initially we denote occurrences of all variables in $P$ or $Q$ by $t_1,\ldots,t_s$ --- here $s$ is the total number of occurrences of variables, taking into account their powers. Then for every $i\in[2,s]$, we add $t_i=t_j$ to the system, if there exists $j<i$ such that $t_i$ and $t_j$ denote exactly the same variable. Then, for each of the two polynomials $Q$ and $P$, we consider the sequence of arithmetic operations that are applied to compute them. Observe that the $j$-th operation can be written as either $t_{s+j}=t_it_{i'}$ or $t_{s+j}=t_i+t_{i'}$ for certain $i,i'\leq s+j-1$. For each of the operations, we add the respective equation to the system. Let us assume that $Q$ and $P$ are computed at steps $q$ and $p$ respectively (that is, $t_q=Q$ and $t_p=P$). The last equation in the system is $t_q=t_p$. Let us observe that indeed the initial Diophantine equation has a solution in integers if and only if the constructed system of equations $\mathcal{S}$ has a solution in positive integers.

Sequence of computations of $P$ and $Q$ encoded in $\mathcal{S}$ can be represented also by a FO sentence $\psi_P$ with the following properties: (1) $\psi_P$  has finite models if and only if $P=Q$ has solutions in positive integers; (2) if $G\models\psi_P$, then $G$ has even number of vertices and a graph obtained from $G$ by the addition of an isolated edge satisfies $\psi_P$ as well. Let us construct the sentence $\psi_P$ explicitly. First of all, we consider
\begin{itemize}
\item
a FO formula $\mathrm{Deg}_{\geq d}(v)$ saying that $v$ has degree at least $d$;
\item
a FO formula $\mathrm{Leaf}_i(v)$ saying that $v$ has exactly $2i+1$ neighbours of degree~$1$;
\item
$\mathrm{All}=\forall v \,\, \mathrm{Deg}_{\geq 1}(v) \wedge \left(\mathrm{Deg}_{\geq 2}(v) \Rightarrow \bigvee\limits_{i = 1}^s \mathrm{Leaf}_i(v)\right).$
\end{itemize}
The sentence $\mathrm{All}$ divides vertices into $s+1$ types: vertices of the first type have degree 1, vertices of type $d\in[2,\ldots,s+1]$ have exactly $2d+1$ neighbours of degree $1$. We then encode the three types of equations from $\mathcal{S}$ by FO sentences in the following way. 
\begin{itemize}

\item For $1\leq i<j\leq s$, $\mathrm{Equal}_{i,j}$ states that for each vertex $v$ that satisfies $\mathrm{Leaf}_i$ there is a unique vertex $u$ such that $\mathrm{Leaf}_j(u)\wedge(u \sim v)$, and vice versa. That is, there is a perfect matching between the sets of vertices that satisfy $\mathrm{Leaf}_i$ and $\mathrm{Leaf}_j$ which forces the numbers of these vertices to be equal.

\item For $1\leq i<j<d\leq s$, $\mathrm{Sum}_{i,j,d}$ states that for each vertex $v$ that satisfies $\mathrm{Leaf}_d$ there is a unique vertex $u$ such that $(\mathrm{Leaf}_i(u) \vee \mathrm{Leaf}_j(u))\wedge(u \sim v)$, and vice versa. That is, there is a perfect matching between the sets of vertices that satisfy $\mathrm{Leaf}_d$ and $\mathrm{Leaf}_i \vee \mathrm{Leaf}_j$ which forces the respective cardinalities to be equal.

\item For $1\leq i<j<d\leq s$,  $\mathrm{Prod}_{i,j,d}$ states that for each vertex $v$ that satisfies $\mathrm{Leaf}_d$ there is a unique pair of vertices $u,w$ such that $\mathrm{Leaf}_i(u) \wedge \mathrm{Leaf}_j(w)\wedge(u \sim v) \wedge (u \sim w)$, and vice versa. That is, there is a bijection between the set of vertices that satisfy $\mathrm{Leaf}_d$ and the set of pairs of vertices $u,w$ such that $\mathrm{Leaf}_i(u) \wedge \mathrm{Leaf}_j(w)$ which forces the cardinality of the first set to be equal to the product of cardinalities of sets $\{u\mid \mathrm{Leaf}_i(u)\}$ and $\{w\mid \mathrm{Leaf}_j(w)\}$.
\end{itemize}

Finally 
$$
\psi_P:=\mathrm{All}\wedge\bigwedge_{t_i=t_j} \mathrm{Equal}_{i,j}\wedge\bigwedge_{t_d=t_i+t_j} \mathrm{Sum}_{i,j,d}\wedge\bigwedge_{t_d=t_i\cdot t_j} \mathrm{Prod}_{i,j,d}
$$
where the conjunctions are over the respective equations in $\mathcal{S}$. 
Note that if $G\models\psi_P$ for a certain $G$, then the system $\mathcal{S}$ has a solution $(t_1,\ldots,t_s)$ in positive integers: $t_i$ is the numbers of vertices in $G$ that satisfy $\mathrm{Leaf}_i$. On the other hand if $\mathcal{S}$ has a solution $(t_1,\ldots,t_s)\in\mathbb{Z}_{>0}^s$, then a graph $G$ that satisfies $\psi_P$ can be constructed directly: first, consider a disjoint union of sets of vertices $T_1,\ldots,T_s$, $|T_i|=t_i$. Then, for every $i\in[s]$, join $2i+1$ leaves to every vertex from $T_i$. Next, observe that every successive equation from $\mathcal{S}$, but the last one, contains a new variable --- thus, we may recursively draw edges in the desired way defined by all equations, but the last one. Finally, if we get a graph that contains at least one edge between $T_p$ and $T_q$, then it may only happen if a perfect matching between $T_p$ and $T_q$ is already drawn since, otherwise, $|T_p|\neq |T_q|$. On the other hand, if no edge has been added between $T_p$ and $T_q$, then we add a perfect matching between them and finish the construction of $G$. 

Let us finally observe that any graph that satisfies $\psi_P$ must have even number of vertices: there is even number of vertices involved in isolated edges and any vertex satisfying $\mathrm{Leaf}_i$ has $2i+1$ neighbours of degree one.

We define the desired sentence $\varphi_P$ as  $\mathrm{Empty}\vee \psi_P$, where $\mathrm{Empty}=\forall x\forall y\,\neg(x\sim y)$ describes the property of being empty. If there are {\it no} integer solutions of $P(x_1, \dots, x_k) = 0$, then there are no graphs satisfying $\psi_P$. Therefore, any $G$ satisfying $\varphi_P$ is empty. We immediately get that $G(n\mid\varphi_P)$ obeys the FO 0--1 law. On the other hand, if there is an integer solution of $P(x_1, \dots, x_k) = 0$, then there is a solution  of $\mathcal{S}$ in $\mathbb{Z}_{>0}^s$. Let $G_0:=G=G(\mathcal{S})$ be the graph satisfying $\psi_P$ as above. Let $G_i$ be obtained from $G_0$ by adding $i$ isolated edges. Obviously, $G_i\models\psi_P$ for all $i$, and  $|V(G_i)| = 2i + |V(G_0)|$. For odd $n$, $\mathrm{Pr}(G(n\mid\varphi_P)\models\mathrm{Empty})=1$ because there are no graphs with odd number of vertices satisfying $\psi_P$. In contrast, for all even $n \geq |G_0|$, this probability is at most $1/2$, because there is at least one nonempty graph satisfying $\phi_P$. Therefore,  $\mathrm{Pr}(G(n\mid\varphi_P) \models\mathrm{Empty})$ does not converge, and so $G(n\mid\varphi_P)$ does not obey the FO 0--1 law, completing the proof.
\end{proof}

\begin{remark}
Actually, we proved that $\mathsf{0}$--$\mathsf{1LAW}$ is $\Pi_1$-hard. On the other hand, $\varphi \in \mathsf{0}$--$\mathsf{1LAW}$ if and only if
$$
\forall M \forall\psi\,\, \exists N\,\, \forall n > N\,\, \mathrm{Pr}(G(n \mid \varphi) \models \psi)\notin(1/M,1-1/M).
$$
Since the property $\mathrm{Pr}(G(n \mid \varphi) \models \psi)\notin(1/M,1-1/M)$ is decidable, we have that  $\mathsf{0}$--$\mathsf{1LAW} \in \Pi_3$. Unfortunately, we do not manage to find the level of $\mathsf{0}$--$\mathsf{1LAW}$ in the arithmetical hierarchy.
\end{remark}

\section{Proof of part (iii) of Theorem~\ref{TH:AXFOC}}
\label{sc:proofiii}

This section is divided into four parts. In Section~\ref{preliminaries}, we recall the necessary background used in the proof. In Section~\ref{overview} we show the general scheme of the proof and reduce the theorem to a construction of two FO sentences $\varphi_1$ and $\varphi_2$ that have to satisfy certain properties. After that, in Section~\ref{varphi1} and Section~\ref{varphi2} we construct the desired FO sentences and verify their properties.

\subsection{Preliminaries}\label{preliminaries}

Graphs $G$ and $H$ are {\it (elementary) $k$-equivalent} (we write $G \cong_k H$) if there is no FO sentence of {\it quantifier depth} $k$ that distinguishes between $G$ and $H$ (quantifier depth is the maximum number of nested quantifiers in the sentence, see the formal definition in~\cite{Libkin}). We have to recall a combinatorial approach to proving the elementary equivalence, this tool is widely known as the Ehrenfeucht--Fra\"{\i}ss\'{e} game. For simplicity of presentation (and this is also enough for our purposes), we recall the definition of this game for graphs; for arbitrary relational structures see \cite{Eh, Fr, Libkin}.

\begin{definition}
In the {\it Ehrenfeucht--Fra\"{\i}ss\'{e} game} on graphs $G$ and $H$ with $k$ rounds, there are two players, {\it Spoiler} and {\it Duplicator}. In round $i\in[k]$, Spoiler chooses a vertex in one of the graphs and then Duplicator chooses a vertex in the other graph. After  $k$ rounds are played, there are two tuples of $k$ not necessarily different vertices $(x_1, \dots, x_k)$ chosen in $G$ and $(y_1, \dots, y_k)$ chosen in $H$ are chosen by both players (the $i$-th vertex is chosen in the $i$-th round). Duplicator wins if $(x_i = x_j) \Leftrightarrow (y_i = y_j)$ and $(x_i \sim x_j) \Leftrightarrow (y_i \sim y_j)$, for each pair $1 \leq i < j \leq k$, that is the map $x_i\to y_i$ is a partial isomorphism between $G$ and $H$.
\end{definition}

\begin{theorem}[Ehrenfeucht, Fra\"{\i}ss\'{e} \cite{Eh, Fr}]
For two graphs $G$ and $H$, Duplicator has a winning strategy in the Ehrenfeucht--Fra\"{\i}ss\'{e} game on graphs $G$ and $H$ in $k$ rounds if and only if $G \cong_k H$.
\end{theorem}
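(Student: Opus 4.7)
The plan is to proceed by induction on $k$ after strengthening the statement to accommodate intermediate positions of the game: for any tuples $\bar a=(a_1,\ldots,a_\ell)$ in $G$ and $\bar b=(b_1,\ldots,b_\ell)$ in $H$ of the same length, Duplicator wins the $k$-round game from position $(\bar a,\bar b)$ iff no FO formula $\psi(x_1,\ldots,x_\ell)$ of quantifier depth $k$ separates $(G,\bar a)$ from $(H,\bar b)$. The theorem itself is the special case $\ell=0$. The base case $k=0$ is immediate: a zero-round game is won precisely when $a_i\mapsto b_i$ is a partial isomorphism, while a depth-$0$ formula is a Boolean combination of atomic formulas, so depth-$0$ equivalence says exactly that all atomic and negated-atomic facts over $\bar a$ and $\bar b$ agree.

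The key technical input is a finiteness lemma: over the fixed finite signature, for each $k$ and $\ell$ there are only finitely many inequivalent FO formulas of quantifier depth at most $k$ in $\ell$ free variables. The plan is to prove this by a parallel induction: depth-$0$ formulas are Boolean combinations of the finitely many atomic formulas on $x_1,\ldots,x_\ell$, and every depth-$k$ formula is logically equivalent to a Boolean combination of atomic formulas together with formulas $\exists x_{\ell+1}\,\theta(\bar x,x_{\ell+1})$ where $\theta$ has depth $\leq k-1$, and these are finitely many by induction. As a consequence, the depth-$k$ theory of any $(G,\bar a)$ can be captured by a single ``Hintikka'' formula $\tau^k_{G,\bar a}(\bar x)$ of depth at most $k$, characterized (up to equivalence) by the property that $(H,\bar b)\models\tau^k_{G,\bar a}$ iff $(G,\bar a)\cong_k(H,\bar b)$.

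For the direction ``$k$-equivalence $\Rightarrow$ Duplicator wins'', assume $(G,\bar a)\cong_k(H,\bar b)$ and suppose Spoiler plays, say, $a\in G$ in the first round. The formula $\exists y\,\tau^{k-1}_{G,\bar a a}(\bar x,y)$ has quantifier depth $k$ and holds on $(G,\bar a)$, hence by assumption on $(H,\bar b)$ as well; this supplies $b\in H$ with $(G,\bar a,a)\cong_{k-1}(H,\bar b,b)$. Duplicator answers with this $b$ and wins the remaining $k-1$ rounds by the inductive hypothesis; the case where Spoiler plays in $H$ is symmetric.

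For the reverse direction, suppose Duplicator wins from $(\bar a,\bar b)$ and let $\varphi(\bar x)$ have depth $k$. By the normal-form step above it suffices to treat $\varphi=\exists y\,\psi(\bar x,y)$ with $\psi$ of depth $k-1$ (Boolean combinations and the universal case are routine). If $(G,\bar a)\models\varphi$, pick a witness $a\in G$, view it as Spoiler's move, and let $b\in H$ be Duplicator's response; the remaining strategy is still winning for the $(k-1)$-round game on $(\bar a a,\bar b b)$, so the inductive hypothesis gives $(H,\bar b,b)\models\psi$ and thus $(H,\bar b)\models\varphi$. The main obstacle is the finiteness-of-types step: one must verify carefully that the enumeration of depth-$k$ formulas up to logical equivalence remains finite, so that the type of $(G,\bar a)$ can be encoded by a single formula $\tau^k_{G,\bar a}$ of the required quantifier depth, which is the syntactic glue translating the combinatorial game argument into statements about FO sentences.
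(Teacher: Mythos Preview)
Your argument is the standard, correct proof of the Ehrenfeucht--Fra\"{\i}ss\'{e} theorem via Hintikka formulas and back-and-forth induction; there is no gap. Note, however, that the paper does not give its own proof of this statement: it is quoted as a classical result with references to \cite{Eh,Fr} and used as a black box. So there is nothing to compare against --- your write-up simply supplies a proof where the paper omits one.
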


For constructing winning strategies in the Ehrenfeucht--Fra\"{\i}ss\'{e} game, we need the following auxiliary simple claims. 

\begin{claim}\label{Cartesian}
Let Spoiler has no winning strategy in $k$ moves on graphs $G_1$ and $G_2$. Then, for each graph $H$, there is no winning strategy for Spoiler in $k$ moves on graphs $G_1 \Box H$ and $G_2 \Box H$.
\end{claim}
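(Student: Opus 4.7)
The plan is to prove the contrapositive form: given that Duplicator has a winning strategy $\mathcal{S}$ in the $k$-round Ehrenfeucht--Fra\"{\i}ss\'e game on $G_1$ and $G_2$, construct a winning strategy $\mathcal{S}'$ for Duplicator in the $k$-round game on $G_1 \Box H$ and $G_2 \Box H$. By the Ehrenfeucht--Fra\"{\i}ss\'e theorem, this implies $G_1 \Box H \cong_k G_2 \Box H$, and hence Spoiler has no winning strategy.

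The strategy $\mathcal{S}'$ is the natural ``coordinatewise'' lift of $\mathcal{S}$. Recall that vertices of $G_i \Box H$ are pairs $(g,h)$ with $g \in V(G_i)$, $h \in V(H)$. Throughout the game Duplicator maintains two parallel subgames: an imagined game on $(G_1,G_2)$ played by means of $\mathcal{S}$ on the first coordinates, and a trivial ``copying'' game on $H$ played by mirroring the second coordinate exactly. More precisely, whenever Spoiler picks a vertex $(g,h)$ in $G_1 \Box H$ (the case $G_2 \Box H$ is symmetric), Duplicator interprets this as a Spoiler move $g$ in $G_1$, consults $\mathcal{S}$ to obtain a response $g' \in V(G_2)$, and replies with $(g',h) \in V(G_2 \Box H)$ in the actual game.

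After all $k$ rounds, let $((g_1,h_1),\dots,(g_k,h_k))$ and $((g_1',h_1'),\dots,(g_k',h_k'))$ be the resulting tuples in $G_1 \Box H$ and $G_2 \Box H$ respectively. By construction $h_i = h_i'$ for every $i$, while the sequences $(g_1,\dots,g_k)$ and $(g_1',\dots,g_k')$ form a Duplicator-winning position in the game on $G_1,G_2$, hence the map $g_i \mapsto g_i'$ is a partial isomorphism. Verifying the winning condition then reduces to checking, for each pair $i,j$, that equality and adjacency are preserved. Equality follows since $(g_i,h_i) = (g_j,h_j) \iff g_i = g_j \text{ and } h_i = h_j \iff g_i' = g_j' \text{ and } h_i' = h_j'$. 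Adjacency follows from the definition of $\Box$ together with the same two equivalences applied to the ``$g_i \sim g_j$, $h_i = h_j$'' and ``$g_i = g_j$, $h_i \sim h_j$'' disjuncts.

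There is no serious obstacle: the only thing to watch is that the simulation on the first coordinate is well-defined, i.e.\ Duplicator never needs information about $H$ to consult $\mathcal{S}$, which is immediate since Spoiler's first-coordinate moves in the real game are valid Spoiler moves in the simulated game on $G_1,G_2$. This argument is a prototypical composition (or ``product'') lemma for the Ehrenfeucht--Fra\"{\i}ss\'e game, and the same template will presumably be invoked later when analysing graphs of the form $K_s \Box K_t$ introduced in the proof of Theorem~\ref{TH:AXFOC}(iv).
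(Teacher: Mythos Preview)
Your proof is correct and follows essentially the same approach as the paper: both lift Duplicator's winning strategy on $G_1,G_2$ coordinatewise by copying the $H$-coordinate exactly and using the given strategy on the first coordinate, then verify the partial-isomorphism condition directly from the definition of $\Box$. The only cosmetic difference is that the paper writes out the adjacency equivalence as a displayed chain, while you summarise it in prose; the content is identical.
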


\begin{proof}
There is a winning strategy for Duplicator in $k$ moves on $G_1$ and $G_2$. Let us construct a winning strategy for $G_1 \Box H$ and $G_2 \Box H$. Suppose in the first $t-1$ rounds vertices $(u_1, v_1), \dots, (u_{t-1}, v_{t-1})$ from $G_1 \Box H$ and $(u'_1, v'_1), \dots, (u'_{t-1}, v'_{t-1})$ from $G_2 \Box H$ are chosen. Without loss of generality, suppose that Spoiler chooses $(u_t, v_t) \in G_1 \Box H$ in round $t$. Let $u'_t \in G_2$ be the Duplicator's choice in round $t$ according to the strategy on $G_1$ and $G_2$ when $u_1, \dots, u_t \in G_1$ and $u'_1, \dots, u'_{t - 1}$ are chosen. Then, Duplicator chooses $(u'_t, v_t)$ in the game on $G_1 \Box H$ and $G_2 \Box H$. 

Following this strategy, we have that $v'_i = v_i$, for all $i \leq k$ after $k$ rounds. Also, vertices $u_1, \dots, u_k \in G_1$ and $u'_1, \dots, u'_k \in G_2$ are chosen according to the winning strategy of Duplicator in the game on $G_1, G_2$. Therefore, 
\begin{align*}
(u_i, v_i) \sim (u_j, v_j)
& \Leftrightarrow ((u_i \sim u_j) \wedge (v_i = v_j)) \vee ((u_i = u_j) \wedge (v_i \sim v_j))  \\
& \Leftrightarrow ((u'_i \sim u'_j) \wedge (v'_i = v'_j)) \vee ((u'_i = u'_j) \wedge (v'_i \sim v'_j))\\
& \Leftrightarrow (u'_i, v'_i) \sim (u'_j, v'_j),
\end{align*}
so indeed Duplicator wins the game on $G_1 \Box H$ and $G_2 \Box H$.
\end{proof}

\begin{claim}\label{glue}
Let $Y$, $G$, $H$ be a triple of graphs with induced subgraphs $X_0 \subset Y$, $X_1 \subset G$, $X_2 \subset H$ and isomorphisms $f: X_0 \to X_1$, $g: X_0 \to X_2$. Denote by $G_X$ and $H_X$  relational structures with the usual adjacency relation in $G$ and $H$ and a unary relation $I_x$ for each vertex $x \in X_0$ such that $I_x(y) \Leftrightarrow (y = f(x))$ for $y \in G$ and $I_x(z) \Leftrightarrow (z = g(x))$ for $z \in H$. Let $G\cup_f Y$ be the graph obtained from $G$ by identifying every $x \in X_0$ with $f(x) \in X_1$ and adding the rest of $Y$ vertex-disjointly. Similarly define $H\cup_g Y$. Assume that Duplicator has a winning strategy in $k$ moves on $G_X$ and $H_X$, then Duplicator has a winning strategy in $k$ moves on $G\cup_f Y$ and $H\cup_g Y$ as well.
\end{claim}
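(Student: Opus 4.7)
The plan is to lift Duplicator's winning strategy on $G_X$ versus $H_X$ to a winning strategy on $G\cup_f Y$ versus $H\cup_g Y$, exploiting that both glued graphs literally contain the same copy of $Y\setminus X_0$; the two differ only in how the boundary $X_0$ is attached (via $f$ to $X_1\subset G$ on one side, via $g$ to $X_2\subset H$ on the other). The role of the unary predicates $I_x$ in $G_X$ and $H_X$ will be precisely to record this attachment data, so that a strategy on the ``marked'' game automatically stays consistent with the common $Y$-part.

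Concretely, I would have Duplicator play two parallel strategies. Whenever Spoiler picks a vertex $y\in Y\setminus X_0$ in one of the glued graphs, Duplicator picks the same $y$ in the other. Whenever Spoiler picks a vertex on the ``$G$-side'', i.e. $v\in G\subset G\cup_f Y$, Duplicator maintains a virtual $k$-round game on $G_X$ and $H_X$ in which this move is treated as Spoiler's, and she plays the response $w\in H$ dictated by her given winning strategy, then transplants $w$ into $H\cup_g Y$. The $H$-side case is symmetric. The critical consistency is: since $I_x(v)$ holds in $G_X$ iff $v=f(x)$, any winning response $w$ must satisfy $I_x(w)$ iff $w=g(x)$, so $v=f(x)\in X_1$ forces $w=g(x)\in X_2$, while $v\in G\setminus X_1$ forces $w\in H\setminus X_2$.

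To finish I would verify that after $k$ rounds the pairs $(a_i,b_i)$ form a partial isomorphism, splitting by which sides the endpoints live on. For two rounds both $G$/$H$-sided, adjacency and equality come directly from Duplicator's win on $G_X,H_X$ (using that $G$ is an induced subgraph of $G\cup_f Y$ and $H$ of $H\cup_g Y$). For two rounds both on $Y\setminus X_0$, the pairs are identity pairs in $Y$ and the check is immediate. The mixed case, say $a_i\in G$ and $a_j\in Y\setminus X_0$ with $b_j=a_j$, is where the unary predicates pay off: if $a_i=f(x)\in X_1$ then $b_i=g(x)$, and adjacency between $a_i$ and $a_j$ on the $G$-side is controlled by $x\sim_Y a_j$, which equals $x\sim_Y b_j$ and so matches on the $H$-side; if $a_i\in G\setminus X_1$ then $b_i\in H\setminus X_2$ and neither pair admits an edge to the $Y$-vertex since no edges cross from $G\setminus X_1$ to $Y\setminus X_0$ in $G\cup_f Y$ (and analogously on the $H$-side). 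Equality across sides is ruled out by the disjointness of $G$ and $Y\setminus X_0$ in each glued graph.

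The hard part is really just the conceptual observation that the markings $(I_x)_{x\in X_0}$ capture exactly the gluing data of $f$ and $g$, so that a strategy in the marked game is automatically compatible with the shared piece $Y$; once this is set up, the verification reduces to the short case analysis above.
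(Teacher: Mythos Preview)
Your proposal is correct and follows essentially the same approach as the paper: run Duplicator's strategy on $G_X$ versus $H_X$ for moves landing in $G$ (resp.\ $H$), mirror moves in $Y\setminus X_0$ by the identity, and use the unary predicates $I_x$ to force that boundary vertices $f(x)$ correspond to $g(x)$ so that the two pieces fit together consistently. The paper frames the partition as $G$ versus $Y_1$ (overlapping at $X_1$) and checks well-definedness at the overlap, whereas you take the disjoint partition $G$ versus $Y\setminus X_0$; this is a cosmetic difference and the case analysis is the same.
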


\begin{proof}
Denote by $Y_1 \subset G\cup_f Y$ and $Y_2 \subset H\cup_g Y$ the two copies of $Y$ attached to $G$ and $H$ respectively, and let $h: Y_1 \to Y_2$ be an isomorphism between them such that its restriction to $X_1$ coincides with $g\circ f^{-1}$. Suppose that in first $t-1$ rounds vertices $u_1, \dots, u_{t-1} \in G\cup_f Y$ and $u'_1, \dots, u'_{t-1} \in H\cup_g Y$ are chosen. Without loss of generality, Spoiler chooses a vertex $u_t \in G\cup_f Y$ in round $t$. If $u_t \in G$, Duplicator chooses the vertex $u'_t \in H$ according to a winning strategy on $G_X$ and $H_X$. If $u_t \in Y_1$, Duplicator chooses $h(u_t) \in Y_2$. Note that this strategy is well-defined since for $u_t \in X_1 = G \cap Y_1$, the vertex chosen according to the winning strategy on $G_X$ and $H_X$ has to be $h(u_t)$ because of the unary predicate $I_{f^{-1}(u_t)}$.

Let us prove that this is a winning strategy of Duplicator. Consider two chosen pairs of vertices $u_i, u_j \in G\cup_f Y$ and $u'_i, u'_j \in H\cup_g Y$. If $u_i, u_j \in G$, then $u'_i, u'_j \in H$ and $(u_i \sim u_j) \Leftrightarrow (u'_i \sim u'_j)$ since this vertices were chosen according to the winning strategy of Duplicator in the game on $G_X$ and $H_X$. If $u_i, u_j \in Y_1$, then $u'_i, u'_j \in Y_2$ and $(u_i \sim u_j) \Leftrightarrow  (u'_i \sim u'_j)$ because $u'_i = h(u_i)$ and $u'_j = h(u_j)$. If $u_i \in G \backslash Y_1, u_j \in Y_1 \backslash G$, then $u'_i \in H \backslash Y_2, u'_j \in Y_2 \backslash H$ and so there are no edges $\{u_i, u_j\}, \{u'_i, u'_j\}$. Therefore, Duplicator wins.
\end{proof}

\begin{claim}\label{leaves}
Let $G$, $H$ be graphs with an additional unary relation $I$. Let $G_I$, $H_I$ be graphs obtained from $G$,  $H$ by attaching exactly one leaf to each vertex $x$ satisfying $I(x)$. Assume that Duplicator has a winning strategy in $k$ rounds on $G$ and $H$ equipped with $I$, then Duplicator has a winning strategy in $k$ rounds on $G_I$ and $H_I$ as well.
\end{claim}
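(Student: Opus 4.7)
My plan is to have Duplicator play the game on $G_I$ versus $H_I$ by maintaining a parallel simulation of an auxiliary game on $(G,I)$ versus $(H,I)$, for which Duplicator already has a winning strategy $\tau$ by hypothesis. The key ingredient is the natural projection $\pi$: every vertex of $G_I$ is either an old vertex of $G$ (in which case $\pi$ is the identity) or a leaf added to a unique vertex $x$ satisfying $I(x)$ (in which case $\pi$ sends the leaf to that $x$); and similarly for $H_I$. The construction attaches \emph{exactly one} leaf per $I$-vertex, so $\pi$ restricted to the leaves is a bijection onto the set $\{x : I(x)\}$.

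In round $t$, when Spoiler picks $u_t$ in $G_I$ (the case of $H_I$ is symmetric), Duplicator pretends that Spoiler has picked $\pi(u_t)$ in the auxiliary game, consults $\tau$ to obtain a response $v_t$ in $H$, and then plays $v_t$ itself if $u_t$ is an old vertex, or the unique leaf attached to $v_t$ if $u_t$ is a leaf. Note that $v_t$ satisfies $I$ if and only if $\pi(u_t)$ does, because $\tau$ is a winning strategy on structures \emph{equipped with $I$}; so whenever Duplicator wishes to play a leaf, one is indeed available.

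After $k$ rounds we obtain tuples $(u_1,\dots,u_k)$ in $G_I$ and $(u'_1,\dots,u'_k)$ in $H_I$; by construction, their $\pi$-images in $G$ and $H$ are exactly the tuples produced by the winning strategy $\tau$ in the auxiliary game, and hence form a partial isomorphism that additionally preserves $I$. I would then verify that $u_i\mapsto u'_i$ is a partial isomorphism in $G_I,H_I$ by a short case analysis on whether each $u_i$ is an old vertex or a leaf. The only subtlety is that equality and adjacency between a leaf and its attachment point must be handled correctly: if $u_i$ is a leaf with $\pi(u_i)=x$ and $u_j$ is an old vertex with $\pi(u_j)=x$, then $u_i\neq u_j$ while $\pi(u_i)=\pi(u_j)$, so we need $u'_i\neq u'_j$; this holds because $u'_i$ is the leaf attached to $v:=\tau(x)$ and $u'_j=v$, which are distinct vertices in $H_I$. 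Analogous consistency checks cover the remaining cases (leaf--leaf, old--old, and leaf vs. non-attached old vertex), each of which reduces to the fact that a leaf is adjacent only to the $I$-vertex it was attached to.

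The main obstacle, such as it is, lies precisely in this final verification: one must be careful that the map $u_i\mapsto u'_i$ respects equalities and edges even in the ``mixed'' situations where leaves and their attachments appear among the played vertices in different rounds. Once this case analysis is carried out, the conclusion follows immediately from the Ehrenfeucht--Fra\"{\i}ss\'{e} theorem.
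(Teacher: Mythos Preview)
Your proposal is correct and follows essentially the same approach as the paper: define the projection $\pi$ from $G_I,H_I$ back to $G,H$, run the auxiliary game on the $\pi$-images using the assumed winning strategy, lift each response to an old vertex or the corresponding leaf according to the type of Spoiler's move, and verify by cases that the resulting map is a partial isomorphism. The only minor remark is that the final appeal to the Ehrenfeucht--Fra\"{\i}ss\'{e} theorem is unnecessary---the claim asks directly for a winning strategy, which your case analysis already establishes.
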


\begin{proof}
Let $f_G: V(G_I) \to V(G)$ be a mapping that acts as identity on vertices of $G$ and sends each vertex in $V(G_I) \backslash V(G)$ to its unique neighbour. Similarly, define $f_H$. Suppose that in the first $t-1$ rounds vertices $u_1, \dots, u_{t-1} \in G_I$ and $u'_1, \dots, u'_{t-1} \in H_I$ are chosen. Without loss of generality, Spoiler chooses a vertex $u_t \in G_I$ in round $t$. Let $v'_t \in H$ be the Duplicator's choice in round $t$ according to the strategy on $G$ and $H$ with unary relation $I$ when $f(u_1), \dots, f(u_t) \in G$ and $f(u'_1), \dots, f(u'_{t - 1})\in H$ are chosen. If $u_t \in G$, then Duplicator chooses $u'_t = v'_t$. If $u_t \not \in G$, then Duplicator chooses the leaf $u'_t$ attached to $v'_t$. Note that in both cases $f(u'_t)=v'_t$, so this strategy is well-defined. 

Let us prove that this strategy is winning for Duplicator. Since $f(u'_t)$ is chosen according to the winning strategy in the game on $G$ and $H$ with unary relation $I$, $f(u_i) \sim f(u_j) \Leftrightarrow f(u'_i) \sim f(u'_j)$ and $f(u_i) = f(u_j) \Leftrightarrow f(u'_i) = f(u'_j)$, for $1 \leq i < j \leq k$. Also, $u_i \in G \Leftrightarrow u'_i \in H$ for $1 \leq i \leq k$. Therefore,
\begin{align*}
 (u_i \sim u_j) & \Leftrightarrow ((u_i, u_j \in G) \wedge (f(u_i) \sim f(u_j)))\vee ((u_i \neq u_j) \wedge (f(u_i)= f(u_j))) \\
& \Leftrightarrow ((u'_i, u'_j \in H) \wedge (f(u'_i) \sim f(u'_j))) \vee  ((u'_i \neq u'_j) \wedge (f(u'_i)= f(u'_j))) \Leftrightarrow (u'_i \sim u'_j),
\end{align*}
so indeed Duplicator wins the game on $G_I$ and $H_I$.
\end{proof}

Moreover, we need a well-known fact that FO sentences express only ``local'' properties in the following sense.

\begin{theorem}[Hanf locality theorem~\cite{Ha}]\label{hanf}
For a positive integer $r$ and a graph $G$, denote by $n_r(x, G)$ the number of vertices $x'$ in $G$ such that $r$-neighbourhoods of $x$ and $x'$ are isomorphic. For every positive integer $k$ there exist positive integers $r(k)$ and $s(k)$ such that the following implication holds true for any two graphs $G$ and $H$. If, for every vertex $x \in G \sqcup H$, $\min\{s(k), n_r(x, G)\} = \min\{s(k), n_r(x, H)\}$, then $G \cong_k H$.
\end{theorem}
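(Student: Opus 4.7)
The plan is to use the Ehrenfeucht--Fra\"{\i}ss\'{e} game characterisation recalled above and prove, by induction on $k$, that Duplicator has a winning strategy in the $k$-round EF game on $G$ and $H$. I would take $r(k)=3^k$ and $s(k)=k$; the base $3$ appears because each round Duplicator needs enough room to shrink the effective locality radius by a factor of $3$ while still being able to handle both a ``close'' and a ``far'' response to Spoiler's move.

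To drive the induction I would strengthen the statement to a partial-play invariant. After $i$ rounds producing tuples $(x_1,\ldots,x_i)\in V(G)^i$ and $(y_1,\ldots,y_i)\in V(H)^i$, Duplicator will maintain: (i) the partial map $x_j\mapsto y_j$ extends to a graph isomorphism between the closed $3^{k-i}$-neighbourhoods of $\{x_1,\ldots,x_i\}$ in $G$ and of $\{y_1,\ldots,y_i\}$ in $H$; and (ii) for every isomorphism type $\tau$ of a rooted $3^{k-i}$-ball, the numbers of vertices of $G$ and of $H$ lying at distance greater than $2\cdot 3^{k-i}$ from the played tuple whose rooted $3^{k-i}$-ball has type $\tau$ coincide once truncated at $s(k)-i$. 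The original Hanf hypothesis with $r=3^k$, $s=s(k)$ supplies the base case $i=0$.

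In round $i+1$ suppose Spoiler plays $x\in G$ (the other case is symmetric). If $x$ is within distance $2\cdot 3^{k-i-1}$ of some $x_j$, then the entire $3^{k-i-1}$-ball around $x$ lies inside the $3^{k-i}$-ball around $x_j$, and Duplicator picks the image $y$ of $x$ under the isomorphism granted by invariant (i). Otherwise $x$ is ``far'' from all previously played vertices; its $3^{k-i-1}$-ball is then disjoint from the $3^{k-i-1}$-balls around each $x_j$, and invariant (ii) at level $k-i$, combined with the remaining margin $s(k)-i\geq 1$, yields a vertex $y\in H$ in the corresponding far region whose $3^{k-i-1}$-ball is isomorphic to that of $x$; Duplicator picks such a $y$. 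A straightforward computation using $3^{k-i-1}+2\cdot 3^{k-i-1}=3^{k-i}$ and the triangle inequality shows that both invariants survive at parameter $k-i-1$, and at $i=k$ invariant (i) forces the chosen tuples to induce isomorphic subgraphs, which gives $G\cong_k H$.

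The main technical obstacle is restoring invariant (ii) after a ``far'' response: one must pass from the condition for $3^{k-i}$-types outside the old play region to the condition for $3^{k-i-1}$-types outside the enlarged play region, with the truncation threshold lowered by $1$. The crucial observations are that a rooted $3^{k-i}$-type determines the rooted $3^{k-i-1}$-type (so the coarsened count is a sum of the finer ones and the truncation behaves appropriately), and that any vertex at distance $>2\cdot 3^{k-i-1}$ from the augmented play has its $3^{k-i-1}$-ball disjoint from the $3^{k-i-1}$-balls of every played vertex on its side, so at most one ``far'' vertex on each side is consumed by the new move. Together these facts justify the parameter choice $r(k)=3^k$, $s(k)=k$ and close the induction.
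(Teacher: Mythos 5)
The paper does not prove this statement---it is quoted from Hanf's work---so there is no internal proof to compare against; I am assessing your argument on its own. Your overall architecture (EF game, locality radius $3^{k-i}$ shrinking by a factor of $3$ per round, the near/far case split, and invariant (i)) is the standard and correct skeleton. The fatal problem is in the maintenance of invariant (ii), specifically the claim that ``at most one far vertex on each side is consumed by the new move,'' which is what justifies lowering the truncation threshold by $1$ per round and hence the choice $s(k)=k$. When you adjoin the new vertex $x$ to the play, the far region loses \emph{every} vertex within distance $2\cdot 3^{k-i-1}$ of $x$, not just $x$ itself, and there can be arbitrarily many such vertices of any given ball type (e.g.\ if $x$ has large degree). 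Disjointness of the radius-$3^{k-i-1}$ balls says nothing about how many vertices leave the far region. Concretely, take $G=(k-1)K_N$ and $H=kK_N$ with $N$ huge: every vertex on either side has the same rooted $r$-ball type for every $r$, and the counts $(k-1)N$ and $kN$ both exceed any threshold depending only on $k$, so your hypotheses (and indeed the theorem's hypotheses as stated) are met for $r(k)=3^k$, $s(k)=k$; yet the depth-$k$ sentence asserting the existence of $k$ pairwise non-adjacent distinct vertices distinguishes $G$ from $H$. Tracing your induction on this pair, invariant (ii) already fails after the second round, exactly because one move swallows an entire clique's worth of far vertices.

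This example also shows that no repair keeping $s(k)$ independent of the graphs is possible: the classical theorem (Ebbinghaus--Flum, Thm.~2.4.1; Fagin--Stockmeyer--Vardi) takes the threshold of the form $s=k\cdot e+1$, where $e$ bounds the number of vertices in a $3^{k}$-ball of $G$ or $H$. (In the paper's application the graphs have maximum degree $3$, so $e$, and hence $s$, is a function of $k$ alone there, which is why the imprecision is harmless in context.) The correct way to restore invariant (ii) is not to decrement a truncation level but to keep the \emph{global} Hanf counts fixed and observe that, by invariant (i), every ``near'' vertex (one within distance $2\cdot 3^{k-i-1}$ of the played tuple) has its entire $3^{k-i-1}$-ball inside the isomorphic $3^{k-i}$-neighbourhoods, so the neighbourhood isomorphism induces an exact type-preserving bijection between the near vertices of $G$ and of $H$; since there are at most $i\cdot e$ of them, a global threshold exceeding $k\cdot e$ guarantees that whenever $G$ has a far vertex of some type, so does $H$. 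You should also be slightly more careful with the ``far'' cutoff (centres at distance exactly $2\cdot 3^{k-i-1}+1$ have disjoint balls that may still be joined by an edge, which would break the glued isomorphism in invariant (i)), but that is a constant-adjustment issue; the counting step is the substantive gap.
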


Finally, we make use of the following theorem about the distribution of the number of cycles in a uniformly random permutation.

\begin{theorem}[\cite{Ko}, Theorem 8]
\label{cyclesDistr}
Let $n$ be a positive integer, and let $Y_i$ be the number of cycles of length $i$ in a uniformly random permutation from $S_n$. Then, $(Y_1, \dots, Y_{n}) \stackrel{d}{\to} (\xi_1, \dots, \xi_{n})$, where $\xi_i$ are independent $\mathrm{Pois}(1/i)$ random variables.
\end{theorem}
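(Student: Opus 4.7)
The plan is to prove this classical result by the method of joint factorial moments. For a uniformly random permutation $\pi \in S_n$ and fixed non-negative integers $m_1, \dots, m_k$, I would compute the joint factorial moment
\[
\mathbb{E}\left[\prod_{i=1}^{k} (Y_i)_{m_i}\right], \qquad \text{where } (Y_i)_{m_i} = Y_i(Y_i-1)\cdots(Y_i-m_i+1),
\]
and show it equals $\prod_{i=1}^{k} i^{-m_i}$ exactly, whenever $M := \sum_i i m_i \leq n$.

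To establish this identity I would expand the left-hand side by linearity of expectation: $\prod_i (Y_i)_{m_i}$ counts ordered tuples $C = (C_{i,j})_{i\in[k],\, j\in[m_i]}$ of pairwise disjoint cycles of $\pi$ with $C_{i,j}$ of length $i$, so
\[
\mathbb{E}\left[\prod_{i} (Y_i)_{m_i}\right] = \sum_{C} \mathrm{Pr}[\text{every } C_{i,j} \text{ is a cycle of } \pi],
\]
where the sum is over all such ordered tuples of \emph{abstract} disjoint cycles supported on $[n]$. A double count shows that the number of such tuples equals $\tfrac{n!}{(n-M)!\, \prod_i i^{m_i}}$: pick an ordered placement of vertex-sets of sizes $i$ inside $[n]$ (contributing $\tfrac{n!}{(n-M)!\prod_i (i!)^{m_i}}$), then endow each vertex-set of size $i$ with one of its $(i-1)!$ cyclic arrangements. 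Given such a fixed tuple, the permutations of $[n]$ containing all those cycles are exactly the arbitrary permutations of the remaining $n-M$ elements, so the probability in the summand equals $(n-M)!/n!$. Multiplying gives the claimed value $\prod_i i^{-m_i}$.

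These are precisely the joint factorial moments of independent Poissons $\xi_i \sim \mathrm{Pois}(1/i)$, since $\mathbb{E}[(\xi)_m] = \lambda^m$ for $\xi \sim \mathrm{Pois}(\lambda)$ and independence multiplies. Fixing $k$ and letting $n \to \infty$, all joint factorial moments of $(Y_1,\dots,Y_k)$ agree with those of $(\xi_1,\dots,\xi_k)$ eventually. Since a product of Poissons is uniquely determined by its moments (for instance by Carleman's condition, or directly because the probability generating function $\mathbb{E}\bigl[\prod_i z_i^{\xi_i}\bigr] = \prod_i e^{(z_i-1)/i}$ is analytic and recoverable from factorial moments), the method of moments yields $(Y_1,\dots,Y_k) \xrightarrow{d} (\xi_1,\dots,\xi_k)$ for every fixed $k$, which is equivalent to the joint convergence of $(Y_1,\dots,Y_n)$ to $(\xi_1,\dots,\xi_n)$ in the natural product topology on $\mathbb{N}^{\mathbb{N}}$.

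There is no real analytic or probabilistic obstacle: the only step requiring care is the combinatorial double-count for the number of disjoint-cycle tuples, and the fact that it is \emph{exactly} $\prod_i i^{-m_i}$ (not just asymptotically) is what makes the moment method particularly clean here.
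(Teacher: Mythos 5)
Your proof is correct. Note that the paper does not prove this statement at all: it is quoted verbatim from Kolchin's 1976 paper (Theorem 8 there), so there is no in-paper argument to compare against. Your factorial-moment computation is exact and checks out: the number of ordered tuples of disjoint abstract cycles with the prescribed length multiplicities is indeed $\tfrac{n!}{(n-M)!\prod_i i^{m_i}}$ (ordered choice of supports contributes $\tfrac{n!}{(n-M)!\prod_i (i!)^{m_i}}$, cyclic orderings contribute $\prod_i ((i-1)!)^{m_i}$), and each such tuple is contained in exactly $(n-M)!$ permutations, giving $\mathbb{E}[\prod_i (Y_i)_{m_i}]=\prod_i i^{-m_i}$ exactly for $M\le n$. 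Matching these against $\mathbb{E}[(\xi)_m]=\lambda^m$ for $\xi\sim\mathrm{Pois}(\lambda)$ and invoking the multivariate method of moments (legitimate here since the Poisson law satisfies Carleman's condition) gives finite-dimensional convergence, which is the correct reading of the statement. For the record, Kolchin's own derivation proceeds differently, via the exact joint distribution $\mathrm{Pr}(Y_1=y_1,\dots,Y_n=y_n)=\prod_i \tfrac{1}{i^{y_i}y_i!}$ (Cauchy's formula) and a direct limit; your moment route is the other standard proof and is equally rigorous, arguably with less bookkeeping in the limiting step.
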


\subsection{Overview of the proof and basic properties of $\varphi$}\label{overview}
We define the desired sentence $\varphi$ as a disjunction of three sentences $\varphi_0$, $\varphi_1$ and $\varphi_2$, such that the following holds:
\begin{enumerate}
\item[(a)] any graph satisfies at most one sentence from $\varphi_0$, $\varphi_1$ and $\varphi_2$;
\item[(b)] $\varphi_0$ expresses the property of graph being empty;
\item[(c)] for each $n$ divisible by $6$, there are $n!$ graphs on $[n]$ that satisfy $\varphi_1$, while, for any $n$ which is not divisible by $6$, any graph on $[n]$ does not satisfy $\varphi_1$;
\item[(d)] for each FO $\psi$, the limit $\lim\limits_{m\to\infty}\mathrm{Pr}(G(6m\mid\varphi_1)\models \psi)$ exists;
\item[(e)] there is a family of FO sentences $\varphi_{2,d}$, $d\geq 3$, such that $\varphi_2$ and $\bigvee_{d\geq 3}\varphi_{2,d}$ are not distinguished by any finite graph, and $\varphi_{2,d}\wedge\varphi_{2,d'}$ are contradictions for all distinct $d,d'$;
\item[(f)] for every $d\geq 3$ and every $n=6dm^2$ for some integer $m\neq 0$, there are exactly $n!/d!$ graphs on $[n]$ that satisfy $\varphi_{2,d}$, while for all other $n$, any graph on $[n]$ does not satisfy $\varphi_{2,d}$;
\item[(g)] for all FO $\psi$ and $d\geq 3$, $\lim\limits_{m\to\infty}\mathrm{Pr}(G(6dm^2\mid\varphi_{2,d})\models \psi)$ is either 0 or 1.
\end{enumerate}

Let us first show that these seven assumptions are enough for obtaining the desired properties of $\mathrm{FOC}(G(n\mid\varphi))$ and then construct this sentence.

\begin{claim}
If $\varphi=\varphi_0\vee\varphi_1\vee\varphi_2$ satisfies assumptions (a)-(g), then $\mathrm{FOC}(G(n\mid\varphi))$ is totally bounded but spans an infinite-dimensional subspace of $\ell^\infty/c_0$.
\end{claim}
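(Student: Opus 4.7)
My plan is to compute $p_\psi(n):=\mathrm{Pr}(G(n\mid\varphi)\models\psi)$ explicitly by exploiting the disjointness guaranteed by (a) and (e), and then to analyse the resulting sequences in $\ell^\infty/c_0$. For $n$ with $6\mid n$, let $D(n):=\{d\geq 3:n=6dm^2\text{ for some }m\in\mathbb{Z}_{>0}\}$; writing $n=6ts^2$ with $t$ squarefree, $D(n)=\{tr^2:r\mid s,\,tr^2\geq 3\}$. By (b)--(f), the graphs on $[n]$ satisfying $\varphi$ number $1+n!\cdot[6\mid n]+\sum_{d\in D(n)}n!/d!$, and splitting the numerator similarly with $P_1(n):=\mathrm{Pr}(G(n\mid\varphi_1)\models\psi)$, $P_{2,d}(n):=\mathrm{Pr}(G(n\mid\varphi_{2,d})\models\psi)$, a direct count gives, for $6\mid n$,
\begin{equation*}
p_\psi(n)=\frac{P_1(n)+\sum_{d\in D(n)}P_{2,d}(n)/d!}{1+\sum_{d\in D(n)}1/d!}+O(1/n!),
\end{equation*}
while for $6\nmid n$, $p_\psi(n)=[\emptyset_n\models\psi]$ eventually stabilises to some $E(\psi)\in\{0,1\}$. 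Assumption (d) gives $P_1(n)\to L_1(\psi)\in[0,1]$; assumption (g) gives, for each fixed $d\geq 3$, $P_{2,d}(n)\to L_{2,d}(\psi)\in\{0,1\}$ along $\{6dm^2\}$.

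For infinite-dimensionality, I will show that the vectors $\pi(p_{\varphi_{2,q}})$, $q\geq 3$ prime, are linearly independent. Suppose $\sum_{i=1}^{k}c_i\pi(p_{\varphi_{2,q_i}})=0$ in $\ell^\infty/c_0$ for distinct primes $q_1,\ldots,q_k\geq 3$. Fix $j\in[k]$, let $p$ range over primes exceeding $q_k$, and set $n(p):=6q_jp^2$. Then $D(n(p))=\{q_j,q_jp^2\}$, and since $q_jp^2$ is composite, $q_i\notin D(n(p))$ for $i\neq j$, so $\varphi_{2,q_i}$ has no model on $[n(p)]$ and $p_{\varphi_{2,q_i}}(n(p))=0$. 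At the same time, $p_{\varphi_{2,q_j}}(n(p))\to(q_j!+1)^{-1}$ as $p\to\infty$. Restricting the hypothesis to the subsequence $\{n(p)\}$ forces $c_j/(q_j!+1)=0$, so $c_j=0$ for every $j$.

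For total boundedness, fix $\varepsilon>0$ and pick $D\geq 3$ with $\sum_{d>D}1/d!<\varepsilon$. For a parameter $\mathbf{x}=(A,B,C_3,\ldots,C_D)\in\{0,1\}\times[0,1]\times\{0,1\}^{D-2}$, define $\tilde q_\mathbf{x}\in\ell^\infty$ by $\tilde q_\mathbf{x}(n):=A$ for $6\nmid n$ and
\begin{equation*}
\tilde q_\mathbf{x}(n):=\frac{B+\sum_{d\in D(n)\cap[3,D]}C_d/d!}{1+\sum_{d\in D(n)\cap[3,D]}1/d!}
\end{equation*}
for $6\mid n$. The map $\mathbf{x}\mapsto\tilde q_\mathbf{x}$ is Lipschitz into $\ell^\infty$, so the image $K_D\subset\ell^\infty/c_0$ of the compact domain is compact. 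Setting $\mathbf{x}_\psi:=(E(\psi),L_1(\psi),L_{2,3}(\psi),\ldots,L_{2,D}(\psi))$, by (d) and (g), for $n$ sufficiently large $|P_1(n)-L_1(\psi)|<\varepsilon$ and, for each $d\leq D$ with $d\in D(n)$, $|P_{2,d}(n)-L_{2,d}(\psi)|<\varepsilon/(D-2)$, while the total contribution of $\{d\in D(n):d>D\}$ to both numerator and denominator is $<\varepsilon$. The algebraic identity $\frac{x}{u}-\frac{y}{v}=\frac{x-y}{v}+\frac{x(v-u)}{uv}$ then yields $\|\pi(p_\psi)-\pi(\tilde q_{\mathbf{x}_\psi})\|=O(\varepsilon)$. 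Hence $\mathrm{FOC}(G(n\mid\varphi))$ lies within $O(\varepsilon)$ of $K_D$ for every $\varepsilon>0$, proving total boundedness.

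The main delicacy is that $D(n)$ genuinely varies with $n$ and its elements beyond the truncation $D$ are not individually controlled by (g); the scheme above balances this using the geometric tail $\sum_{d>D}1/d!$ for the uniform tail bound and the finiteness of $\{d\leq D\}$ for term-by-term control via (d) and (g).
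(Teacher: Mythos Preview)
Your proof is correct and follows essentially the same approach as the paper: you compute the same weighted-average formula for $\mathrm{Pr}(G(n\mid\varphi)\models\psi)$, exhibit linear independence via sentences $\varphi_{2,d}$ evaluated along suitable subsequences of $n$, and prove total boundedness by truncating the sum at some $D$ and using the tail bound $\sum_{d>D}1/d!<\varepsilon$. The only cosmetic differences are that you use primes $q\geq 3$ with the explicit subsequence $n(p)=6q_jp^2$ (the paper uses all squarefree $d'\geq 3$ and the subsequence $6d'm^2$), and that you package total boundedness as ``Lipschitz image of a compact parameter space'' rather than writing down a finite $\varepsilon$-net directly.
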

\begin{proof}

Consider any positive integer $n$. The number of graphs that satisfy $\varphi$ equals
\begin{itemize}
\item
$1$, if $n$ is not divisible by $6$;
\item
$1+n!\left(1+\sum\limits_{d \in \mathbb{Z}, d \geq 3, d = \frac{n}{6m^2}} \frac{1}{d!}\right)$, if $n$ is divisible by $6$.
\end{itemize}

Consider a FO sentence $\psi$. Due to (b), (d), and (g), there exist
\begin{itemize}
\item $\beta_{\psi}:= \lim\limits_{n \to \infty}\mathrm{Pr}(G(n\mid\varphi_0)\models \psi)\in\{0,1\}$,
\item $p_{1,\psi}:=\lim\limits_{m \to \infty}\mathrm{Pr}(G(6m\mid\varphi_1)\models \psi)\in[0,1]$,
\item $\beta_{d,\psi}:=\lim\limits_{m \to \infty}\mathrm{Pr}(G(6dm^2\mid\varphi_{2, d})\models \psi)\in\{0,1\}$.
\end{itemize}

Let us define the sequence $p_{\psi}(n)$, $n\in\mathbb{N}$, in the following way: if $ 6 \mid n$, then
$$
p_\psi(n) = 
\left.\left(p_{1, \psi}+\sum\limits_{d \in \mathbb{Z}, d \geq 3, d = \frac{n}{6m^2}} \frac{\beta_{d, \psi}}{d!}\right)\right/ \left(1+\sum\limits_{d \in \mathbb{Z}, d \geq 3, d = \frac{n}{6m^2}} \frac{1}{d!}\right);
$$
otherwise, $p_\psi(n) = \beta_\psi.$ Then, we have
\begin{align}\label{eq:limThm3}
\lim\limits_{n \to \infty}\left|\mathrm{Pr}(G(n\mid\varphi)\models \psi) - p_{\psi}(n)\right| = 0.
\end{align}

Indeed, due to (b), (c), (e), and (f), the number of graphs on $[n]$ that satisfy $\varphi_0 \wedge \psi$ is $\beta_\psi$, for sufficiently large $n$; the number of graphs on $[n=6m]$ that satisfy $\varphi_1 \wedge \psi$ is $p_{1, \psi} n! (1 + o(1))$; the number of graphs on $[n=6dm^2]$ that satisfy $\varphi_{2, d} \wedge \psi$ is $\beta_{d, \psi} \frac{n!}{d!} (1 + o(1))$, and is bounded from above by $\frac{n!}{d!}$. Then, $6 \nmid n$ implies $\mathrm{Pr}(G(n\mid\varphi)\models \psi) = \beta_\psi$ for sufficiently large $n$, and $6 \mid n$ implies
\begin{align}
\mathrm{Pr}(G(n\mid\varphi)\models \psi) & = \frac{p_{1, \psi} + \sum\limits_{d \in \mathbb{Z}, d \geq 3, d = \frac{n}{6m^2}} \beta_{d, \psi} \frac{1}{d!}}{1+\sum\limits_{d \in \mathbb{Z}, d \geq 3, d = \frac{n}{6m^2}} \frac{1}{d!}} + o(1).
\end{align}

For any $d'\geq 3$, let us consider $\psi_{d'}:=\varphi_{2,d'}$. From (a), (c), and (f), observe that
$$
 p_{\psi_{d'}}(n)=\left\{ 
\begin{array}{cc} 
\left(d'!\left(1+\sum\limits_{d \in \mathbb{Z}, d \geq 3, d = \frac{n}{6m^2}} \frac{1}{d!}\right)\right)^{-1}, & \text{if } \frac{n}{6d'} \text{ is a square}; \\
0, & \text{otherwise.}
\end{array} \right.
$$

Let $D$ be the set of all square-free integers $d'\geq 3$. Vectors $\pi\left(\left(p_{\psi_{d'}}(n) \right)_{n \in \mathbb{N}}\right)$, $d' \in D$, are linearly independent since, for each $d' \in D$, the sequence $n = 6 d' m^2$ satisfies $p_{\psi_{d'}}(n) > \frac{1}{d'!e}$, and, for each $d'' \in D$ such that $d'' \neq d'$, and $n = 6 d' m^2$, we have that $p_{\psi_{d''}} (n) = 0$. So, any finite non-trivial linear combination of $\left(p_{\psi_{d''}}(n) \right)_{n \in \mathbb{N}}$ involving $\left(p_{\psi_{d'}}(n) \right)_{n \in \mathbb{N}}$ does not equal $0$. Due to~\eqref{eq:limThm3}, $\mathrm{FOC}(G(n\mid\varphi))$ spans an infinite-dimensional space.

Next, we prove that $\mathrm{FOC}(G(n\mid\varphi))$ is totally bounded. Consider $\varepsilon > 0$. Since all $\beta_{d, \psi}$ are at most $1$, there is a $d_0$ such that, if we restrict the summation in the definition of $p_{\psi}(n)$ to $d\leq d_0$, then we get an $\frac{\varepsilon}{2}$-approximation, i.e. 
$$
\left|p_\psi(n) - \frac{p_{1, \psi}+\sum\limits_{d \in \mathbb{Z}, \, d_0 \geq d \geq 3, \, d = n/(6m^2)} \frac{\beta_{d, \psi}}{d!}}{1+\sum\limits_{d \in \mathbb{Z}, \, d \geq 3, \, d = n/(6m^2)} \frac{1}{d!}}\right| < \frac{\varepsilon}{2}.
$$
Let us construct a finite $\varepsilon$-covering for the set of all sequences $p_{\psi}(n)$ in $\ell^{\infty}$. Let $N$ be a positive integer such that $\frac{1}{N} < \frac{\varepsilon}{2}$. The desired $\varepsilon$-covering is the set of all sequences $v_{k,\mathbf{b}}(n)$ indexed by $k\in[N]$ and $\mathbf{b}=(b,b_3,b_4,\ldots,b_{d_0})\in\{0,1\}^{d_0-1}$ and defined in the following way: if $6 \mid n$, then
$$
v_{k, \mathbf{b}}(n)=\left.\left(\frac{k}{N}+\sum\limits_{d_0 \geq d \geq 3, \, d = n/(6m^2)} \frac{b_d}{d!}\right)\right/ \left(1+\sum\limits_{d \geq 3, \, d = n/(6m^2)} \frac{1}{d!}\right);
$$
otherwise, $v_{k, \mathbf{b}}(n)=b$. Thus, this family of sequences is indeed the desired $\varepsilon$-covering. Due to~\eqref{eq:limThm3}, $\mathrm{FOC}(G(n\mid\varphi))$ is totally bounded, completing the proof.
\end{proof}

To finish the proof of (iii), we construct the FO sentences $\varphi_1$ in Section~\ref{varphi1} and $\varphi_2$ in Section~\ref{varphi2}. 

\subsection{Definition of $\varphi_1$ and verification of its properties}\label{varphi1}

Let $\Delta_m$ be the set of all digraphs on $[m]$ with all in-degrees and out-degrees equal $1$. In particular, a loop contributes $1$ both to the in-degree and the out-degree of the respective vertex. Such digraphs are disjoint unions of oriented cycles and there are exactly $m!$ digraphs in $\Delta_m$. For each class of isomorphism $D$ of digraphs on $m$ vertices we construct a class of isomorphism $G(D)$ of simple graphs on $6m$ vertices in the following way. We replace every edge $(u, v) \in D$ with a graph on the set of vertices $\{u, v, w_i(u, v), i \in [5]\}$, where all vertices $w_i(u, v)$ are different, comprising edges $\{u, w_{1}(u, v)\}$, $\{w_{1}(u, v), w_{3}(u, v)\}$, $\{w_{3}(u, v), v\}$, $\{w_{1}(u, v), w_{2}(u, v)\}$,  $\{w_{3}(u, v), w_{4}(u, v)\}$, and $\{w_{4}(u, v), w_{5}(u, v)\}$.

Let us show that the property $\bigcup\limits_m\Gamma_m$ is FO, where  $\Gamma_m = \{G(D), D \in \Delta_m\}$, and define $\varphi_1$ as a sentence describing this property. Let $Type_{i_1, \dots, i_j}(x)$ be a FO formula saying that $x$ has degree $j$ and adjacent to vertices of degrees $i_1, \dots, i_j$. The sentence $\varphi_1$ is conjunction of the following sentences:
\begin{itemize}
\item
$\forall x\,\, Type_{2}(x) \vee Type_{3}(x) \vee Type_{1, 3}(x) \vee Type_{3, 3}(x) \vee Type_{1, 2, 3}(x) \vee Type_{2, 2, 3}(x);$
\item
$
\forall x\,\, Type_{1, 2, 3}(x) \Rightarrow \left(\exists y\,\, Type_{2, 2, 3}(y) \wedge (x \sim y)\right);
$
\item
$
\forall x\,\, Type_{2, 2, 3}(x) \Rightarrow \left(\exists y\,\, Type_{1, 2, 3}(y) \wedge (x \sim y)\right);
$
\item
$\forall x\,\, (Type_{2, 2, 3}(x) \vee Type_{1, 2, 3}(x)) \Rightarrow\left(\exists! y\,\, Type_{3, 3}(y) \wedge (x \sim y)\right);$
\item
$\forall x\,\, Type_{3, 3}(x) \Rightarrow\left(\exists y \exists z\,\, Type_{1, 2, 3}(y) \wedge Type_{2, 2, 3}(z) \wedge (x \sim y) \wedge (x \sim z)\right).$
\end{itemize}

Let us prove that $\varphi_1$ indeed describes $\bigcup\limits_m\Gamma_m$. Obviously, any graph having the property $\Gamma_m$ satisfies $\varphi_1$. Let $G \models \varphi_1$. We have to show that $G \in \bigcup\limits_m\Gamma_m$. Let us first show that all types listed in the first clause of of $\varphi_1$ are presented in $G$. All vertices of $G$ have degree at most $3$. Let us show that there is a vertex of degree exactly $3$. Suppose the opposite, then there is no vertex of degree $2$ because $\varphi_1$ forces each vertex of degree $2$ to be adjacent to a vertex of degree $3$. Therefore, all vertices have degree $1$ that leads to contradiction because due to the definition of $\varphi_1$ vertices of degree $1$ cannot be adjacent to other vertices of degree $1$. Next, note that the presence of a degree $3$ vertex of any of the two types implies the presence of a degree $3$ vertex of the second type. Vertex that satisfies $Type_{1, 2, 3}$ has neighbours that satisfy $Type_{3}$ and $Type_{3, 3}$. Vertex that satisfies $Type_{2, 2, 3}$ has exactly one neighbour that satisfies $Type_{3, 3}$. Hence, another neighbour of degree $2$ satisfies $Type_{1, 3}$. Finally, a vertex that satisfies $Type_{1, 3}$ is adjacent to a vertex with degree $1$ that satisfies $Type_2$, i.e. all types allowed by $\varphi_1$ are indeed presented.

Let us reconstruct $D \in \Delta_m$ such that $G \cong G(D)$. It would imply that $G \in \Gamma_m$. We set the vertices of $D$ to be the vertices of $G$ that satisfy $Type_{3, 3}$. We add an edge $(u, v)$ between (not necessarily distinct) vertices of $D$ if and only if there is a path $uxyv$ in $G$ such that $Type_{1, 2, 3}(x) \wedge Type_{2, 2, 3}(y)$ holds. Due to $\varphi_1$, each vertex $u$ that satisfies $Type_{3, 3}(u)$ is adjacent to a unique vertex $x$ that satisfies $Type_{1, 2, 3}(x)$; the vertex $x$ is adjacent to a unique vertex $y$ that satisfies $Type_{2, 2, 3}(y)$; the vertex $y$ is adjacent to a unique vertex $v$ (not necessarily $v \neq u$) that satisfies $Type_{3, 3}(y)$. So, for each vertex of $D$ its out-degree equals $1$. Similarly, the in-degree of each vertex in $D$ equals $1$. Therefore, the reconstructed digraph is in $\Delta_m$. It remains to show that the graph $G$ is isomorphic to $G(D)$, and thus belongs to $\Gamma_m$. Indeed, vertices that satisfy $Type_{3, 3}$ are the vertices of $D$. By the definition of edges in $D$, we have the respective vertices $w_{1}(u, v)$ and $w_{3}(u, v)$, and the conditions on types of these vertices guarantee the presence of properly connected vertices $w_{2}(u, v), w_{4}(u, v), w_{5}(u, v)$ in $G$. All vertices $w_i(u, v)$ are different because for different $i$ they have different types, and $u, v$ can be restored as the two (or one $u = v$) nearest vertices that satisfy $Type_{3, 3}$. There are no other vertices due to the first clause in the definition of $\varphi_1$.

 Let us now check the condition (c). For each $m$, the set $\Gamma_m$ consists only of graphs of size $6m$. Then, there are no graphs that satisfy $\varphi_1$ of size not divisible by $6$. Let $D_1, \dots, D_t$ be all classes of isomorphism of digraphs in $\Delta_m$. Then, we have $1 = \sum_{D_i} \frac{1}{|\mathrm{Aut}(D_i)|}$. Also, $|\mathrm{Aut}(D_i)| = |\mathrm{Aut}(G(D_i))|$. Then the number of graphs on $[6m]$ that satisfy $\varphi_1$ equals
$$
\sum_{D_i} \frac{(6m)!}{|\mathrm{Aut}(G(D_i))|} = \sum_{D_i} \frac{(6m)!}{|\mathrm{Aut}(D_i)|} = (6m)!\sum_{D_i} \frac{1}{|\mathrm{Aut}(D_i)|} = (6m)!.
$$

Let us finally verify the condition (d). Fix a FO sentence $\psi$ of quantifier depth $d$. By Theorem~\ref{hanf}, for $D, D' \in \Gamma_m$ with equal numbers of components of all sizes at most $6 \cdot 2^d$, $\psi$ does not distinguish between $D$ and $D'$. For every $i \leq 2^d$, let $X_i$ be the number of components in $G(n \mid \varphi_1)$ of size $6i$; let $Y_i$ be the number of components of size $i$ in a uniformly random digraph from $\Delta_m$. Note that $(X_1, \dots, X_{2^d}) \stackrel{d}{=} (Y_1, \dots, Y_{2^d})$, and that $(Y_1, \dots, Y_{2^d}) \stackrel{d}{\to} (\xi_1, \dots, \xi_{2^d})$ due to Theorem~\ref{cyclesDistr}, where $\xi_i$ are independent $\mathrm{Pois}(\frac{1}{i})$ random variables. Then, 
\begin{align*}
\mathrm{Pr}(G(6m\mid\varphi_1)\models \psi) &= \sum \mathrm{Pr}(X_1 = x_1, \dots, X_{2^d} = x_{2^d})\\
&=(1+o(1))\sum \mathrm{Pr}(\xi_1 = x_1, \dots, \xi_{2^d} = x_{2^d})
\end{align*}
completing the proof of (d).

\subsection{Definition of $\varphi_2$ and verification of its properties}\label{varphi2}

Next, we construct the FO sentence $\varphi_2$ and sentences $\varphi_{2,d}$. Let us first, for every integer $m \geq 2$, define an auxiliary graph $L_m$. This graph consists of vertices $u_i$, for $1 \leq i \leq 6m$; $v_{i, j}$, for $j \in [m - 1]$, $6j+1 \leq i \leq 6m$; and $w_{i, j}$, for $j \in [m - 1]$, $6j+1 \leq i \leq 6m$; and edges
\begin{itemize}
\item
$\{u_{i-1}, u_i\}$, for $2 \leq i \leq 6m$;
\item
$\{u_i, v_{i, j}\}, \{v_{i, j}, w_{i, j}\}$, for $j \in [m - 1]$, $6j + 1 \leq i \leq 6m$;
\item
$\{v_{i-1,j}, v_{i, j}\}$, for $j \in [m - 1]$, $6j + 2 \leq i \leq 6m$.
\end{itemize}

The graph $L_m$ consists of $6m + 6\frac{m(m-1)}{2} + 6\frac{m(m-1)}{2} = 6m^2$ vertices and has no nontrivial automorphisms.  Let us show that there is a FO sentence $\varphi_L$ that expresses the property of being isomorphic to $L_m$ for some $m \geq 2$. Let $\varphi_L = \mathrm{Types} \wedge \mathrm{UDeg} \wedge \mathrm{VDeg} \wedge \mathrm{VUEdges} \wedge \mathrm{VUSquare} \wedge \mathrm{UVPattern}_1 \wedge \mathrm{UVPattern}_2 \wedge \mathrm{UVPattern}_3$, where the clauses in the conjunction are defined as follows.

\begin{itemize}
\item
$\mathrm{Deg}_{i}(x)$ ($\mathrm{Deg}_{\geq i}(x)$) is a FO formula that expresses the property of the vertex $x$ to have degree $i$ (at least $i$). 
\item
$W(x) = \mathrm{Deg}_1(x) \wedge \left(\exists y \,\, (x \sim y) \wedge \mathrm{Deg}_{\geq 3}(y)\right)$.
\item
$V(x) = \exists! y \,\, (x \sim y) \wedge W(y)$.
\item
$U(x) = \neg W(x) \wedge \neg \exists y \,\, (x \sim y) \wedge W(y)$.
\item
$\mathrm{Types} = \forall x \,\, W(x) \vee V(x) \vee U(x)$.
\item
$\mathrm{UDeg}$ is a FO sentence saying that the induced subgraph on $\{x:U(x)\}$ consists of vertices with degrees $1$ or $2$, and exactly two of them have degree $1$.
\item
$\mathrm{VDeg}$ is a FO sentence saying that the induced subgraph on $\{x:V(x)\}$ consists of vertices of degrees $1$ or $2$.
\item
$\mathrm{VUEdges} = \forall x \,\, \left(V(x) \Rightarrow \left(\exists! y \,\, U(y) \wedge (x \sim y)\right)\right)$. This sentence defines a mapping $\mathrm{UVMap}: \{x: V(x)\} \to \{x: U(x)\}$.
\item
$\mathrm{VUSquare}$ is a FO sentence saying that, for any pair of adjacent vertices $x,x'$ satsifying $V(x)\wedge V(x')$, their images $\mathrm{UVMap}(x)$ and $\mathrm{UVMap}(x')$ are adjacent as well.
\end{itemize}


It remains to define $\mathrm{UVPattern}_1$, $\mathrm{UVPattern}_2$ and $\mathrm{UVPattern}_3$. For FO formulae $\psi_1, \psi_2$ with a free variable $x$, let 
\begin{align*}
\mathrm{Matching}_x[\psi_1, \psi_2] = & \forall x \,\, \neg(\psi_1(x) \wedge \psi_2(x)) \wedge \\
& \left(\psi_1(x) \Rightarrow \exists! \,\, x' (x \sim x') \wedge \psi_2(x') \right) \wedge \\
& \left(\psi_2(x) \Rightarrow \exists! \,\, x' (x \sim x') \wedge \psi_1(x') \right)
\end{align*}
be the formula saying that the set $A_1$ of $x$ satisfying $\psi_1(x)$ and the set $A_2$ of $x$ satisfying $\psi_2(x)$ are disjoint and edges between them form a perfect matching.

\begin{itemize}
\item
$\mathrm{VMatching}_x(y, y') = \mathrm{Matching}_x[V(x) \wedge (x \sim y), V(x) \wedge (x \sim y')]$.
\item 
$\mathrm{VAlmostMatching}(y, y')$ is the FO sentence 
\begin{multline*}
\exists z' \,\, V(z') \wedge (y' \sim z')
 \wedge \left( \forall z \,\, (V(z) \wedge (y \sim z)) \Rightarrow \neg(z \sim z') \right) \wedge\\
 \wedge\mathrm{Matching}_x[V(x) \wedge (x \sim y), V(x) \wedge (x \sim y') \wedge (x \neq z')].
\end{multline*}
In other words, the induced bipartite graph between $A=\{x: V(x) \wedge (x \sim y)\}$ and $B=\{x: V(x) \wedge (x \sim y')\}$ is a disjoint union of a matching and an isolated vertex $z' \in B$.
\item
$\mathrm{UVPattern}_1$ is a FO sentence saying that, for each $y_0, y_1, \ldots$, $y_6$ satisfying 
\begin{align}\label{eq:conditionThm3}
\bigwedge_{0 \leq i \leq 6} U(y_i) \wedge \bigwedge_{0 \leq i \leq 5} (y_i \sim y_{i+1}) \wedge \bigwedge_{0 \leq i \leq 4} (y_i \neq y_{i+2}),
\end{align}
there are five $i \in [6]$ such that $\mathrm{VMatching}(y_{i - 1}, y_i)$ holds and for the single remaining $i \in [6]$, at least one of $\mathrm{VAlmostMatching}(y_{i - 1}, y_i)$ and $\mathrm{VAlmostMatching}(y_i, y_{i - 1})$ is satisfied.
\end{itemize}

In order to define $\mathrm{UVPattern}_2$ and $\mathrm{UVPattern}_3$, we need auxiliary formulae $\mathrm{UStart}$ and $\mathrm{UEnd}$:

\begin{itemize}
\item
$\mathrm{UStart}(x) = U(x) \wedge Deg_1(x)$.
\item 
$\mathrm{UEnd}(x) = U(x) \wedge Deg_{\geq 2}(x) \wedge \left(\exists! y \,\, U(y) \wedge (x \sim y)\right)$.
\item 
Let $\mathrm{UVPattern}_2$ be a FO sentence saying that, for each $y_0$, $y_1, \dots, y_6$ satisfying~\eqref{eq:conditionThm3}, 
\begin{multline*}
\left(\mathrm{UStart}(y_0) \Rightarrow \mathrm{VAlmostMatching}(y_5, y_6) \right) \wedge\\
\wedge \left(\mathrm{UEnd}(y_6) \Rightarrow \mathrm{VAlmostMatching}(y_0, y_1)\right),
\end{multline*} 
is satisfied, and there are vertices $y$ and $y'$ satisfying $\mathrm{UStart}(y)$ and $\mathrm{UEnd}(y')$.
\item 
Let $\mathrm{UVPattern}_3$ be a FO sentence saying that, for each $y_0$, $y_1, \dots, y_7$ satisfying
$$
\quad\bigwedge_{0 \leq i \leq 7} U(y_i) \wedge \bigwedge_{0 \leq i \leq 6} (y_i \sim y_{i+1}) \wedge \bigwedge_{0 \leq i \leq 5} (y_i \neq y_{i+2}),
$$
the formula
$$
\quad\,\,\mathrm{VAlmostMatching}(y_0, y_1) \Rightarrow \mathrm{VAlmostMatching}(y_6, y_7)
$$
is satisfied.
\end{itemize}

Let us briefly verify that $\varphi_L$ expresses the desired property of being isomorphic to some $L_m$. We skip a direct routine check that $L_m$ satisfies all clauses in the definition of $\varphi_L$. Let $G \models \varphi_L$ and let us prove that $G \cong L_m$ for a certain $m$. By $\mathrm{UDeg}$, we know that the subgraph $G_U$ induced on vertices $x$ that satisfy $U(x)$ is a disjoint union of a single path and cycles. 

Suppose that there is a cycle $y_0\ldots y_{s-1}y_0$ in $G_U$ for some $s\geq 3$. Then, without loss of generality, the sentence $\mathrm{UVPattern}_1$ implies that $\mathrm{VAlmostMatching}(y_0, y_1)$ holds. For the sake of convenience, set $y_k := y_{k - s}$ for $k \geq s$. For every $i\in\mathbb{Z}_{\geq 0}$, denote by $n_i$ the number of neighbours $x$ of $y_i$ that satisfy $V(x)$. Then, $\mathrm{VMatching}(y_{i}, y_{i+1})$ implies $n_{i+1} = n_i$, and $\mathrm{VAlmostMatching}(y_{i}, y_{i+1})$ implies $n_{i+1}=n_i + 1$. Due to $\mathrm{UVPattern}_1 \wedge \mathrm{UVPattern}_3$, for each $i$ divisible by $6$, $n_{i+1} = n_i + 1$, and $n_{i+1} = n_i$ for other. Therefore, the sequence $n_0, n_6, n_{12}, \dots, n_{6s}$ is strictly increasing. It leads to contradiction because $y_0=y_{6s}$ and so $n_0 = n_{6s}$. Therefore, $G_U$ is a path. 

By $\mathrm{VDeg}$, we have that there are no vertices of degree more than $2$ in the subgraph $G_V$ induced on $\{x: V(x)\}$. Hence, this graph is a disjoint union of paths and cycles. By $\mathrm{VUEdges} \wedge \mathrm{VUSquare}$, we have the mapping $\mathrm{UVMap}: G_V \to G_U$ that sends adjacent vertices to adjacent. Also, by $\mathrm{UVPattern}_1$, $\mathrm{UVMap}$ sends two different vertices with a common neighbour in $G_V$ to two different vertices in $G_U$. Indeed, if $x_1,x_2$ are neighbours of $x$ in $G_V$ and $y$ is a common neighbour of $x_1,x_2$ from $G_U$, then there is a vertex $y'\in G_U$ which is a common neighbour of $x$ and $y$. This contradicts $\mathrm{UVPattern}_1$ since, if the latter holds, then either $\mathrm{VMatching}(y,y')$, or $\mathrm{VAlmostMatching}(y,y')$, or $\mathrm{VAlmostMatching}(y',y)$. All these predicates do not allow vertices of degree more than 1 in the bipartite graph between the neighbourhoods of $y$ and $y'$ in $G_V$. Since the graph induced on $G_U$ is a path, there are no cycles in $G_V$, i.e. $G_V$ is a disjoint union of paths, where each path is mapped by $\mathrm{UVMap}$ to a subpath in $G_U$. By $\mathrm{UVPattern}_2 \wedge \mathrm{UVPattern}_3$, we have that these paths in $G_V$ behave as in $L_m$ in the following sense: the induced subgraph on $V(G_U \cup G_V)$ is isomorphic to the induced subgraph on $u_i$ and $v_{i, j}$ in $L_m$; vertices of $G_U$ correspond to vertices $u_i$ and vertices of $G_V$ correspond to vertices $v_{i, j}$. By $\mathrm{Types}$, remaining vertices have degree one and adjacent to vertices in $G_V$. Since for each $x \in G_V$ there is exactly one neighbor $y$ that satisfies $W(x)$, we have the correspondence between such vertices $y$ and vertices $w_{i, j}$ in $L_m$ that completes the isomorphism between $G$ and~$L_m$.

Now, we fix the presented above sentence $\varphi_L$ and consider the family of graphs $K_d \Box L_m$, for $d \geq 3$ and $m \geq 2$, where $G \Box H$ is the Cartesian product of graphs, see~Definition~\ref{defCartesian}. We construct a FO sentence $\varphi_2$ that expresses the property of being isomorphic to $K_d \Box L_m$ for some $d \geq 3$ and $m \geq 2$ and, for every $d \geq 3$, we construct a FO sentence $\varphi_{2, d}$ that expresses the property of being isomorphic to $K_d \Box L_m$ for some $m \geq 2$. We first set 
$$
\varphi_2 = \mathrm{TEquiv} \wedge \mathrm{TGraph} \wedge \varphi_{TL} \wedge \mathrm{TCommute},
$$
where the clauses are defined in the following way. Let 
$$
\mathrm{Triangle}(x, y) = (x = y) \vee \left(\exists z \,\, (x \sim y) \wedge (y \sim z) \wedge (z \sim x)\right)
$$
express the property of distinct $x$, $y$ to belong to a triangle. For FO formulae $\psi_1, \psi_2$ with a free variable $x$, let 
\begin{align*}
\mathrm{Matching}_x[\psi_1, \psi_2] = & \forall x \,\, \neg(\psi_1(x) \wedge \psi_2(x)) \wedge \\
& \left(\psi_1(x) \Rightarrow \exists! \,\, x' (x \sim x') \wedge \psi_2(x') \right) \wedge \\
& \left(\psi_2(x) \Rightarrow \exists! \,\, x' (x \sim x') \wedge \psi_1(x') \right)
\end{align*}
say that sets $\{x\mid\psi_1(x)\}$ and $\{x\mid\psi_2(x)\}$ are disjoint and edges between them form a perfect matching.

\begin{itemize}
\item $\mathrm{TEquiv}$ is a FO sentence saying that $\mathrm{Triangle}(x, y)$ is an equivalence relation and each equivalence class has size at least $3$ (or, in other words, every vertex belongs to a triangle).
\item $\mathrm{TGraph}$ is a FO sentence saying that for each pair $y, y'$ such that $\neg \mathrm{Triangle}(y, y')$ holds, either there are no edges between their $\mathrm{Triangle}$-equivalence classes, or  
$$
{TEdge}(y, y') := \mathrm{Matching}_x[\mathrm{Triangle}(x, y), \mathrm{Triangle}(x, y')]
$$ 
holds.
\item $\varphi_{TL}$ is the sentence $\varphi_L$ with all predicates $x = y$ replaced by $\mathrm{Triangle}(x, y)$, and all $x \sim y$ replaced by $\mathrm{TEdge}(x, y)$.
\item $\mathrm{TCommute}$ is a FO sentence saying that, for each four vertices $x, x', y, y'$ such that $\mathrm{TEdge}(x, x') \wedge \mathrm{TEdge}(y, y') \wedge\mathrm{TEdge}(x, y) \wedge \mathrm{TEdge}(x', y')$, the subgraph induced on their $\mathrm{Triangle}$-equivalence classes is a disjoint union of cycles of length $4$ with one vertex from each class.
\end{itemize}
Finally, $\varphi_{2, d}$ is the conjunction of $\varphi_2$ and a FO sentence saying that there is a clique on $d$ vertices but no cliques on $d+1$ vertices. So, we immediately have (e).

Let us show that $\varphi_2$ and $\varphi_{2, d}$ express the proper sets of graphs. As usual, we omit the straightforward verification that $K_d\Box L_m\models\varphi_{2,d}$. Assume $G \models \varphi_2$. By $\mathrm{TEquiv} \wedge \mathrm{TGraph}$, the set of vertices of $G$ is partitioned into $\mathrm{Triangle}$-equivalence classes; edges between vertices of two different equivalence classes $B,C$ in $G$ appear if and only if representatives $x\in B$, $y\in C$ satisfy $\mathrm{TEdge}(x,y)$. Let us consider an auxiliary graph $\tilde G$ whose vertices are the $\mathrm{Triangle}$-equivalence classes of $G$; two vertices $B,C$ are adjacent in $\tilde G$ if and only if there are edges between them in $G$ (and, in this case, these edges in $G$ between $B$ and $C$ compose a perfect matching). Due to $\varphi_{TL}$, $\tilde G$ is isomorphic to some $L_m$. Since $L_m$ is connected, all $\mathrm{Triangle}$-equivalence classes in $G$ have the same size $d$. Moreover, for a certain $d\geq 3$, these classes induce cliques $K_d$ because, for distinct vertices $x,y$, $\mathrm{Triangle}(x,y)$ implies $x \sim y$. So, $G \models \varphi_{2, d}$.

For two different $\mathrm{Triangle}$-equivalence classes $B$ and $C$, the perfect matching between them defines two bijections $f_{BC}:B\to C$ and $f_{CB}:C\to B$ in the natural way: for $x\in B$, $y\in C$, the adjacency $x\sim y$ implies $f_{BC}(x)=y$ and $f_{CB}(y)=x$. Note that $f_{BC} \circ f_{CB} = id_C$ and $f_{CB} \circ f_{BC} = id_B$. By $\mathrm{TCommute}$ we have that, for each cycle $BB'C'C$ of length $4$ in $\tilde G$, $f_{CC'} \circ f_{BC} = f_{B'C'} \circ f_{BB'}$. Note that the 2-dimensional CW-complex obtained by ``filling'' all 4-cycles of $L_m$ is simply connected. Or, in algebraic language, for a certain spanning subtree $\hat L_m\subset L_m$, the group presented by $\langle f_{BC},\,BC\in E(L_m)\mid f_{BC},\, BC\in E(\hat L_m),\, f_{BC}f_{CB}, \, BC\in E(L_m),\, f_{B'B}f_{C'B'}f_{CC'}f_{BC},\,BB'CC'\text{ is 4-cycle in }L_m\rangle$ is trivial. So, for every two walks between $B,C\in V(\tilde G)$, compositions of respective bijections along the walks are equal. This fact legitimises the definition of $f_{BC}$ for any pair of $B,C\in V(\tilde G)$: let $BP_0\ldots P_k C$ be a path in $\tilde G$, then $f_{BC}:=f_{P_k C} \circ f_{P_{k-1} P_k} \circ \cdots f_{P_0 P_1} \circ f_{B P_0}$. Next, fix a $\mathrm{Triangle}$-equivalence class $B = \{b_1, \dots, b_d\}$. For each $b_i$, let $S_i = \{f_{BC}(b_i), C \in V(\tilde G)\}$. By the definition of $f_{BC}$, there are no edges $\{f_{BC}(b_i), f_{BC'}(b_j)\}$ for $i \neq j$ and $C \neq C'$, i.e. each $\tilde G$-edge $\{C, C'\}$ is presented by $G$-edges $\{f_{BC}(b_i), f_{BC'}(b_i)\}$, $1 \leq i \leq d$. Therefore, the induced subgraph on $S_i$ is isomorphic to $L_m$ and, consequently, the graph $G$ itself is isomorphic to $K_d \Box L_m$.  So, $\varphi_2$ and $\varphi_{2,d}$ express desired properties of finite graphs.

The number of vertices in $K_d \Box L_m$ is $6dm^2$. For each $n \neq 6 d m^2$, there are no graphs that satisfy $\varphi_{2, d}$. For $n = 6 d m^2$, there is exactly one graph under isomorphism that satisfy $\varphi_{2, d}$. The graph $K_d \Box L_m$ has exactly $d!$ automorphisms. 
 We conclude that the number of graphs that satisfy $\varphi_{2, d}$ equals $\frac{n!}{d!}$, completing the proof of (f).

To finish the proof, it remains to show (g). Due to Ehrenfeucht's theorem~\cite{Eh, Fr}, we know that there is a FO sentence of quantifier depth $k$ that distinguishes between two graphs $G$ and $H$ if and only if Spoiler has a winning strategy in $k$ moves in the Ehrenfeucht--Fra\"{\i}ss\'{e} game on $G$ and $H$. Suppose, there is a FO sentence $\psi$ of the quantifier depth $k$ such that $\mathrm{Pr}(G(6dm^2|\varphi_{2,d}) \models \psi)$ does not converge to either $0$ or $1$. This means that there are infinitely many $m$ such that $K_d \Box L_m \models \psi$ and infinitely many $m$ such that $K_d \Box L_m \not\models \psi$. Therefore, it is enough to show that, for every fixed positive integer $k$ and every fixed integer $d\geq 3$, in the Ehrenfeucht--Fra\"{\i}ss\'{e} game Spoiler has no winning strategy in $k$ moves, on graphs $K_d \Box L_m$ and $K_d \Box L_{m'}$, for large enough $m$ and $m'$. Due to Claim~\ref{Cartesian}, we may get rid of the Cartesian product and prove the same fact for graphs $L_m$ and $L_{m'}$. 

For $a<b\leq m$, let us consider the following induced subgraphs of $L_m$:
\begin{itemize}
\item $L_{(a,b]}=L_m\left[\{u_i, v_{i,j}, w_{i,j} \mid 6a+1 \leq i \leq 6b, 6j+1 \leq i\}\right]$; 

\item $Z_{(a,b]}=L_m\left[\{u_i, v_{i,j}, w_{i,j} \mid 6a+1 \leq i \leq 6b, j \leq a\}\right]$;

\item $\tilde Z_{(a,b]}=L_m\left[\{u_i, v_{i,j} \mid 6a+1 \leq i \leq 6b, j \leq a\}\right]$. 
\end{itemize}

Consider mappings $f_{a, b}$ of paths $X_{b - a}=u_1 \dots u_{6(b-a)}$ to $Z_{(a, b]}$ such that $f_{a, b}(u_i) = u_{6a + i}$. Note that $L_{(a, b]} \cong Z_{(a, b]} \cup_{f_{a, b}} L_{(0, b-a]}$. Let $\mathrm{Star}_r$ be a star graph with $r$ leaves. So, $\tilde Z_{(a, b]} \cong X_{b - a} \Box \mathrm{Star}_a$. For two numbers $a, a' > k$, Duplicator has a winning strategy in the Ehrenfeucht--Fra\"{\i}ss\'{e} game with $k$ moves on graphs $\mathrm{Star}_a$ and $\mathrm{Star}_{a'}$, this winning strategy preserves central vertices of stars. By Claim~\ref{Cartesian}, Duplicator has a winning strategy in the Ehrenfeucht--Fra\"{\i}ss\'{e} game in $k$ round on graphs $\tilde Z_{(a, b]}$ and $\tilde Z_{(a', b']}$, where $b-a = b'-a'$. Next, note that $Z_{a, b}$ is obtained from $\tilde Z_{(a, b]}$ by attaching a leaf to each vertex $v_{i, j}$. Since the winning strategy of Duplicator on stars is leaves-preserving, we can apply Claim~\ref{leaves} with $I$ distinguishing between ``leaves'' and ``non-leaves'' and get a winning strategy of Duplicator in the game on $Z_{(a,b]}$ and $Z_{(a',b']}$. Finally, due to Claim~\ref{glue}, Duplicator has a winning strategy in the game on $L_{(a,b]}$ and $L_{(a',b']}$. This makes it possible to apply the well-known Duplicator's strategy on two long paths in the game on $L_m,L_{m'}$ since $L_m$ and $L_{m'}$ can be represented as unions of three segments $L_{(a,b]}$ (the first one, for $a=0$, an intermediate, and the last one, for $b\in\{m,m'\}$)  such that the respective segments are elementary equivalent. For completeness, let us recall the strategy.

Let $p_m: V(L_m) \to [m]$ maps each vertex $u_i, v_{i, j}, w_{i, j}$ to the number $\lceil \frac{i}{6} \rceil$. Similarly define $p_{m'}$. Suppose that $m, m' > 3^{k+1}$ and in first $t-1$ rounds vertices $x_1, \dots, x_{t-1} \in L_m$ and $x'_1, \dots, x'_{t-1} \in L_{m'}$ are chosen. Without loss of generality, Spoiler chooses a vertex $x_t \in L_m$ in round $t$. 
\begin{itemize}
\item
If $p_m(x_t) \leq 3^{k - t} + 1$, Duplicator chooses the same vertex $x_t$ in $L_{m'}$. 
\item
If $m - p_m(x_t) \leq 3^{k - t}$, Duplicator chooses the next vertex according to the strategy in the game on $L_{(m - 3^{k - t} - 1, m]}$ and $L_{(m' - 3^{k - t} - 1, m']}$. 
\item
If there is $x_s$ with $s < t$ such that $|p_m(x_t) - p_m(x_s)| \leq 3^{k - t}$, Duplicator chooses the next vertex according to the strategy in the game on $L_{(p_m(x_s) - 3^{k - t} - 1, p_m(x_s) + 3^{k - t}]}$ and $L_{(p_{m'}(x'_s) - 3^{k - t} - 1, p_{m'}(x'_s) + 3^{k - t}]}$. 
\item
Otherwise, Duplicator chooses $3^{k - t} + 1 < a < m - 3^{k - t}$ such that $|a - p_{m'}(x'_s)| > 3^{k - t}$ for all $s < t$, and then chooses a vertex according to the strategy in the game on $L_{(p_m(x_s) - 3^{k - t} - 1, p_m(x_s) + 3^{k - t}]}$ and $L_{(a - 3^{k - t} - 1, a + 3^{k - t}]}$. It is possible to choose such an $a$ since $m'$ is large enough. 
\end{itemize}

A straightforward inductive argument implies that, for every $t\leq k$, as soon as $t$ rounds are played on $L_m,L_{m'}$, Duplicator has a winning strategy on pairs of graphs $\left(L_{(0, 3^{k - t} + 1]}, L_{(0, 3^{k - t} + 1]}\right)$,  $\left(L_{(m - 3^{k - t} - 1, m]}, L_{(m' - 3^{k - t} - 1, m']}\right)$, $\left(L_{(p_m(x_s) - 3^{k - t} - 1, p_m(x_s) + 3^{k - t}]}\right.,$ \linebreak $\left.L_{(p_{m'}(x'_s) - 3^{k - t} - 1, p_{m'}(x'_s) + 3^{k - t}]}\right)$ for all $s\leq t$. It immediately implies that Duplicator wins the game on $L_m,L'_m$.

Thus, (g) follows, completing the proof of part (iii) of Theorem~\ref{TH:AXFOC}. 

\section*{Acknowledgements}

The authors thank Michael Benedikt for helpful discussions and for finding some of the references to existing literature. The work of the first author is supported by RSF grant 22-11-00131.

\appendix

\section{Proof of Lemma~\ref{lm:appendix_Bernoulli}}
\label{AppendixE}

Let $A_n \subset \{0, 1\}^{s_n}$. Then,
$$
\mathrm{Pr}((\xi_1,\ldots,\xi_{s_n}) \in A_n) 
= \sum_{(a_1, \dots, a_{s_n}) \in A_n} p^{a_1 + \cdots + a_{s_n}} (1 - p)^{s_n - a_1 - \cdots - a_{s_n}},
$$
where $a_i \in \{0, 1\}$, and the same equality with $p$ replaced by $q$ holds for $(\eta_1,\ldots,\eta_{s_n})$. Then, letting 
$$
f_{a_1,\ldots,a_n}(p)=p^{a_1 + \cdots + a_{s_n}} (1 - p)^{s_n - a_1 - \cdots - a_{s_n}},
$$
we get
\begin{align} 
|\mathrm{Pr}((\xi_1,\ldots,\xi_{s_n})  \in A_n)  - & \mathrm{Pr}((\eta_1,\ldots,\eta_{s_n}) \in A_n)|\notag  \\
& = \left|\sum_{(a_1, \dots, a_{s_n}) \in A_n} (f_{a_1,\ldots,a_n}(p) - f_{a_1,\ldots,a_n}(q)) \right|\notag  \\
& \leq \sum_{(a_1, \dots, a_{s_n}) \in A_n} \left| f_{a_1,\ldots,a_n}(p) - f_{a_1,\ldots,a_n}(q) \right|\notag \\
& \leq \sum_{k = 0}^{s_n} \binom{s_n}{k} \left| p^k (1 - p)^{s_n - k} - q^k (1 - q)^{s_n - k} \right|. \label{eq:1Lemma3}
\end{align}
We shall prove that this sum is at most $2 C' \sqrt{\frac{s_n}{\pi (p_{\min} - \frac{1}{s_n})}} |p - q|
$, for some constant $C'$ and $p_{\min} \geq \frac{3}{2s_n}$. Without a loss of generality, we suppose that $p > q$ and $q \leq \frac{1}{2}$. The term $p^k (1 - p)^{s_n - k} - q^k (1 - q)^{s_n - k}$ is positive if and only if $\left(\frac{p(1 - q)}{q(1 - p)}\right)^k > \left(\frac{1 - q}{1 - p}\right)^{s_n}$, i.e. 
$$
k > k_0 := \left\lfloor s_n \frac{\ln (1 - q) - \ln (1 - p)}{\ln p + \ln (1 - q) - \ln q - \ln (1 - p)} \right\rfloor.
$$
Let us prove that
\begin{equation}
\lfloor qs_n\rfloor \leq k_0<ps_n.
\label{eq:k_0Lemma3}
\end{equation}
Consider the function $g(x) = x^k (1 - x)^{s_n - k}$. The derivative of this function is 
\begin{align*}
g'(x)  = k x^{k-1} (1 - x)^{s_n - k} - (s_n - k) x^k (1 - x)^{s_n - k - 1} 
 = (k - s_n x)x^{k-1}(1 - x)^{s_n - k - 1}.
\end{align*}
For $p \leq \frac{k}{s_n}$, $g'(x)$ is positive on the interval $(q, p)$. Therefore, $g(p) - g(q) > 0$. This means that $p s_n > k_0$. Similarly, for $q \geq \frac{k}{s_n}$, $g(p) - g(q) < 0$, and then $\lfloor q s_n \rfloor \leq k_0$, completing the proof of~\eqref{eq:k_0Lemma3}. We then get
\begin{multline}
\label{eq:2Lemma3}
\sum_{k = 0}^{s_n}  \binom{s_n}{k} \left| p^k (1 - p)^{s_n - k} - q^k (1 - q)^{s_n - k} \right|=\\
  \sum_{k = k_0 + 1}^{s_n} \binom{s_n}{k} \left( p^k (1 - p)^{s_n - k} - q^k (1 - q)^{s_n - k} \right) - \sum_{k = 0}^{k_0} \binom{s_n}{k} \left( p^k (1 - p)^{s_n - k} - q^k (1 - q)^{s_n - k} \right). 
\end{multline}
Consider the function 
$$
f(x) = \sum\limits_{k = k_0 + 1}^{s_n} \binom{s_n}{k} x^k (1 - x)^{s_n - k} - \sum\limits_{k = 0}^{k_0} \binom{s_n}{k} x^k (1 - x)^{s_n - k}.
$$
By the Lagrange's mean value theorem, we have that there is a number $t \in (q, p)$ such that 
\begin{align*}
\sum_{k = k_0 + 1}^{s_n} \binom{s_n}{k} &  \left( p^k (1 - p)^{s_n - k} - q^k (1 - q)^{s_n - k} \right) 
 - \sum_{k = 0}^{k_0} \binom{s_n}{k} \left( p^k (1 - p)^{s_n - k} - q^k (1 - q)^{s_n - k} \right) \\
 &= f(p) - f(q) = (p - q) f'(t) \\
 &=(p - q) \sum_{k = k_0 + 1}^{s_n} \binom{s_n}{k} \left( k t^{k - 1} (1 - t)^{s_n - k} - (s_n - k) t^k (1 - t)^{s_n - k - 1} \right) - \\
&\quad\quad\quad\quad - (p - q) \sum_{k = 0}^{k_0} \binom{s_n}{k} \left( k t^{k - 1} (1 - t)^{s_n - k} - (s_n - k) t^k (1 - t)^{s_n - k - 1} \right).
\end{align*}
The last expression equals
\begin{align}
\label{eq:3Lemma3}
 (p -  q)  s_n  \sum_{k = k_0}^{s_n - 1} \binom{s_n - 1}{k} t^k (1 - t)^{s_n - 1 - k} - (p - q) s_n \sum_{k = k_0 + 1}^{s_n - 1} \binom{s_n - 1}{k} t^k (1 - t)^{s_n - 1 - k} - \notag \\
 - (p - q) s_n \sum_{k = 0}^{k_0 - 1} \binom{s_n - 1}{k} t^k (1 - t)^{s_n - 1 - k} + (p - q) s_n \sum_{k = 0}^{k_0} \binom{s_n - 1}{k} t^k (1 - t)^{s_n - 1 - k} \\ 
 = 2 (p - q) s_n \binom{s_n - 1}{k_0} t^{k_0} (1 - t)^{s_n - 1 - k_0}. \notag
\end{align}
By Stirling's formula, there exists a constant $C'>0$ such that, for all non-negative integers $a>b$ and any real $x\in(0,1)$,
\begin{align*}
 {a\choose b}x^b(1-x)^{a-b}  \leq C' \sqrt{\frac{a}{2 \pi (a - b) b}} \left(\frac{ax}{b}\right)^{b} \left(\frac{a(1 - x)}{a - b}\right)^{a - b} 
   \leq C'\sqrt{\frac{a}{2\pi(a-b)b}}
\end{align*}
since the function $x^b(1-x)^{a-b}$ achieves its maximum at $x=\frac{b}{a}$.

For $p_{\min} \geq \frac{3}{2s_n}$, we have $k_0 < ps_n \leq (1 - p_{\min})s_n < s_n - 1$. Hence, for $k_0 < s_n - 1$
\begin{align}
\label{eq:4Lemma3}
2 (p - q) s_n  \binom{s_n - 1}{k_0} t^{k_0} (1 - t)^{s_n - 1 - k_0} 
& \leq 2 (p - q) s_n C' \sqrt{\frac{s_n - 1}{2 \pi (s_n - 1 - k_0) k_0}} \notag  \\
& = 2 (p - q) C' \sqrt{\frac{s_n - 1}{2 \pi (1 - \frac{k_0 + 1}{s_n}) \frac{k_0}{s_n}}}.
\end{align}
The inequality $\left(1 - \frac{k_0 + 1}{s_n}\right) \frac{k_0}{s_n} \geq \frac{1}{2}\left(1 - \frac{1}{s_n}\right)\min\left\{\left(1 - \frac{k_0 + 1}{s_n}\right), \frac{k_0}{s_n}\right\}$ holds because the left side is a product of two factors that sum up to $1 - \frac{1}{s_n}$, and then the largest one is at least half of this sum. Due to~\eqref{eq:k_0Lemma3}, $p s_n > k_0 \geq \lfloor q s_n \rfloor > q s_n - 1$. Therefore,
$$
\min\left\{\left(1 - \frac{k_0 + 1}{s_n}\right), \frac{k_0}{s_n}\right\} \geq p_{\min} - \frac{1}{s_n}.
$$
Finally, from \eqref{eq:1Lemma3}, \eqref{eq:2Lemma3}, \eqref{eq:3Lemma3}, \eqref{eq:4Lemma3}, we get
$$
|\mathrm{Pr}((\xi_1,\ldots,\xi_{s_n}) \in A_n) - \mathrm{Pr}((\eta_1,\ldots,\eta_{s_n}) \in A_n)|
 \leq 2 C' \sqrt{\frac{s_n}{\pi (p_{\min} - \frac{1}{s_n})}} |p - q|.
$$

Since $p_{\min} - \frac{1}{s_n} \geq \frac{1}{3} p_{\min}$, we have the inequality 
\begin{equation}
|\mathrm{Pr}((\xi_1,\ldots,\xi_{s_n}) \in A_n) - \mathrm{Pr}((\eta_1,\ldots,\eta_{s_n}) \in A_n)|
 \leq 2 C' \sqrt{\frac{3 s_n}{\pi p_{\min}}} |p - q|. \label{eq:5Lemma3}
\end{equation}
If $p_{\min} < \frac{3}{2s_n}$, note that 
\begin{align}
|\mathrm{Pr}((\xi_1,\ldots,\xi_{s_n}) \in A_n) - \mathrm{Pr}((\eta_1,\ldots,\eta_{s_n}) \in A_n)| \leq 1. \label{eq:6Lemma3}
\end{align}
Hence, for $|p - q| \geq \frac{1}{s_n}$, we have
$$
|\mathrm{Pr}((\xi_1,\ldots,\xi_{s_n}) \in A_n) - \mathrm{Pr}((\eta_1,\ldots,\eta_{s_n}) \in A_n)|
\leq \sqrt{\frac{3 s_n}{2 p_{\min}}} |p - q|.
$$
For $|p - q| < \frac{1}{s_n}$, we can claim that $p, q \leq \frac{3}{s_n}$. Therefore, 
\begin{align}
\label{eq:7Lemma3}
 &\sum_{k = 0}^{s_n}  \binom{s_n}{k} \left| p^k (1 - p)^{s_n - k} - q^k (1 - q)^{s_n - k} \right| \notag \\
   &\leq \sum_{k = 0}^{s_n} \binom{s_n}{k} \left| p^k (1 - p)^{s_n - k} - q^k (1 - p)^{s_n - k} \right| + \sum_{k = 0}^{s_n} \binom{s_n}{k} \left| q^k (1 - p)^{s_n - k} - q^k (1 - q)^{s_n - k} \right| \notag \\
& \leq \sum_{k = 0}^{s_n} \binom{s_n}{k} | p^k - q^k | + \sum_{k = 0}^{s_n} \binom{s_n}{k} s_n q^k |p - q| \notag \\
& \leq \sum_{k = 0}^{s_n} \binom{s_n}{k} k |p - q| \left(\frac{3}{s_n}\right)^{k - 1} + \sum_{k = 0}^{s_n} 3 \binom{s_n}{k} \left(\frac{3}{s_n}\right)^{k-1} |p - q|\\
& = \sum_{k = 0}^{s_n} \binom{s_n}{k} |p - q| \left(\frac{3}{s_n}\right)^{k - 1} (k + 3) \notag \\
& \leq s_n |p - q| \sum_{k = 0}^{s_n} \frac{3^{k - 1}}{k!} (k + 3) \leq 2 e^3 s_n |p - q|  < 2 e^3 \sqrt{\frac{3s_n}{2 p_{\min}}} |p - q|. \notag
\end{align}
Let $C'' = \max\left\{ 2 C' \sqrt{\frac{3}{\pi}}, \sqrt{\frac{3}{2}}, 2 e^3 \sqrt{\frac{3}{2}}\right\}$. Then, from \eqref{eq:1Lemma3}, \eqref{eq:5Lemma3}, \eqref{eq:6Lemma3}, and \eqref{eq:7Lemma3}, we get
$$
|\mathrm{Pr}((\xi_1,\ldots,\xi_{s_n}) \in A_n) - \mathrm{Pr}((\eta_1,\ldots,\eta_{s_n}) \in A_n)| \leq C'' \sqrt{\frac{s_n}{p_{\min}}} |p - q|,
$$
completing the proof of the lemma.

\end{document}